\documentclass[8pt]{article}
\usepackage{geometry}               
\geometry{letterpaper}                 
\usepackage{epstopdf}
\usepackage{amsthm}
\usepackage{amsfonts}
\usepackage{MnSymbol}
\usepackage{hyperref}
\usepackage{color}
\usepackage{tikz}
\usepackage{graphicx}
\usepackage[affil-it]{authblk}
\usepackage[applemac]{inputenc} 
\usepackage[toc,page]{appendix}


\newtheorem{theorem}{Theorem}[section]
\newtheorem{lemma}[theorem]{Lemma}
\newtheorem{proposition}[theorem]{Proposition}

\newtheorem{definition}[theorem]{Definition}
\newtheorem{remark}[theorem]{Remark}


\setlength{\textwidth}{16cm} \setlength{\oddsidemargin}{0cm}
\setlength{\evensidemargin}{0cm}
\numberwithin{equation}{section}

\parindent = 0 pt
\parskip = 12 pt

\newcommand{\bea}{\begin{eqnarray}}
\newcommand{\eea}{\end{eqnarray}}
\def\beaa{\begin{eqnarray*}}
\def\eeaa{\end{eqnarray*}}
\def\ba{\begin{array}}
\def\ea{\end{array}}
\def\be#1{\begin{equation} \label{#1}}
\def \eeq{\end{equation}}
\def\bsplit{\begin{split}}


\def\c{\cdot}

\def\dual{{\,^\star \mkern-2mu}}
\def\tr{\mbox{tr}}

\newcommand{\nabb}{\nab\mkern-13mu /\,}


\def\Db{\dot{\D}}
\def\squared{\dot{\square}}


\renewcommand{\div}{\mbox{div }}

\newcommand{\divv}{\mbox{div}\mkern-19mu /\,\,\,\,}
\newcommand{\lapp}{\mbox{$\bigtriangleup  \mkern-13mu / \,$}}

\def\nab{\nabla}

\def\pr{\partial}


\def\DDs{ \, \DD \hspace{-2.4pt}\dual    \mkern-16mu /}

\def\dds{ \, d  \hspace{-2.4pt}\dual    \mkern-14mu /}
\def\ddd{ \,  d \hspace{-2.4pt}    \mkern-6mu /}

\def\rhod{ \,^\star  \hspace{-2.2pt} \rho}


\def\a{\alpha}
\def\b{\beta}
\def\ga{\gamma}

\def\de{\delta}

\def\ep{\epsilon}
\def\la{\lambda}

\def\si{\sigma}

\def\om{\omega}
\def\Om{\Omega}

\def\th{{\theta}}

\def\ka{\kappa}
\def\ze{\zeta}
\def\Up{\Upsilon}


\def\vphi{{\varphi}}
\def\vth{\vartheta}


\def\vthb{\underline{\vth}}
\renewcommand{\aa}{\protect\underline{\a}}
\newcommand{\bb}{\protect\underline{\b}}
\def\omb{{\underline{\om}}}

\newcommand{\chib}{\underline{\chi}}
\newcommand{\xib}{\underline{\xi}}
\newcommand{\etab}{\underline{\eta}}

\def\kab{\underline{\kappa}}


\def\DD{{\mathcal D}}


\def\D{{\bf D}}

\def\F{{\bf F}}
\def\G{{\bf G}}

\def\M{{\bf M}}

\def\R{{\bf R}}

\def\T{{\bf T}}

\def\W{{\bf W}}

\def\Z{{\bf Z}}

\def\g{{\bf g}}


\def\fb{\underline{f}}




\def\qf{\frak{q}}

\def\ff{\frak{f}}




\def\chih{\widehat{\chi}}
\def\chibh{\widehat{\chib}}
\def\trch{\tr \chi}
\def\trchb{\tr \chib}

\def\aS{\,^{(1+3)} \hspace{-2.2pt}\a} 
\def\bS{\,^{(1+3)} \hspace{-2.2pt}\b} 
\def\rhoS{\,^{(1+3)} \hspace{-2.2pt}\rho} 
\def\rhodS{\,^{(1+3)} \hspace{-2.2pt}\rhod} 
\def\bbS{\,^{(1+3)} \hspace{-2.2pt}\bb} 
\def\aaS{\,^{(1+3)} \hspace{-2.2pt}\aa} 

\def\chiS{\,^{(1+3)} \hspace{-2.2pt}\chi} 
\def\chibS{\,^{(1+3)} \hspace{-2.2pt}\chib} 
\def\chihS{\,^{(1+3)} \hspace{-2.2pt}\chih} 
\def\chibhS{\,^{(1+3)} \hspace{-2.2pt}\chibh}
\def\trchS{\,^{(1+3)} \hspace{-2.2pt}\trch} 
\def\trchbS{\,^{(1+3)} \hspace{-2.2pt}\trchb}
\def\xiS{\,^{(1+3)} \hspace{-2.2pt}\xi} 
\def\xibS{\,^{(1+3)} \hspace{-2.2pt}\xib} 
\def\etaS{\,^{(1+3)} \hspace{-2.2pt}\eta} 
\def\etabS{\,^{(1+3)} \hspace{-2.2pt}\etab} 
\def\zeS{\,^{(1+3)} \hspace{-2.2pt}\ze} 
\def\omS{\,^{(1+3)} \hspace{-2.2pt}\om} 
\def\ombS{\,^{(1+3)} \hspace{-2.2pt}\omb}

\def\bF{\,^{(F)} \hspace{-2.2pt}\b}
\def\bbF{\,^{(F)} \hspace{-2.2pt}\bb}
\def\aF{\,^{(F)} \hspace{-2.2pt}\a}
\def\aaF{\,^{(F)} \hspace{-2.2pt}\aa}
\def\rhoF{\,^{(F)} \hspace{-2.2pt}\rho}
\def\rhodF{\,^{(F)} \hspace{-2.2pt}\rhod}

\def\c{\cdot}
\def \f12{\frac 1 2 }
\def\ov{\overline}
\def\Db{\dot{\D}}
\def\squared{\dot{\square}}

\def\err{\mbox{Err}}

\def\gS{ g \mkern-8.5mu/\,}

\def\err{\mbox{Err}}

\begin{document}
\title{\large{\textbf{Coupled gravitational and electromagnetic perturbations of Reissner-Nordstr{\"o}m spacetime in a polarized setting}}}
\author{\normalsize{Elena Giorgi}}
\date{} 
\maketitle
\begin{abstract}
We derive a system of equations governing the coupled gravitational and electromagnetic perturbations of Reissner-Nordstr{\"o}m spacetime. The equations are derived in the context of global non-linear stability of Reissner-Nordstr{\"o}m under axially symmetric polarized perturbations, as a generalization of the recent work on non-linear stability of Schwarzschild spacetime of Klainerman-Szeftel (\cite{stabilitySchwarzschild}). The main result consists in deriving,  through a Chandrasekhar-type transformation, a gauge invariant quantity associated to the electromagnetic tensor that verifies a Regge-Wheeler equation. In this paper, we present the derivation of the main equations, postponing the estimates of the non-linear system and the construction of the dynamical spacetime to a later time.
\end{abstract}

\tableofcontents

\section*{Introduction}

\addcontentsline{toc}{section}{Introduction}

 One of the fundamental open problems in General Relativity is the one concerning the stability of the exterior of black holes under gravitational perturbations. The stability conjecture says that the class of metrics of the Kerr family is stable under small perturbations of initial data as solutions to the vacuum Einstein equation: 
 \bea\label{vacuum}
\operatorname{Ric}(g)=0,
\eea 
where $\operatorname{Ric}(g)$ is the Ricci curvature tensor of the metric $g$. The conjecture is also formulated for charged black holes, and in that case it conjectures that the Kerr-Newman family of spacetimes is stable under small perturbation of initial data as solutions to the Einstein-Maxwell equation:
\bea \label{Einsteineq}
\operatorname{Ric}(g)_{\mu\nu}=T(F)_{\mu\nu}:=2 F_{\mu \lambda} {F^{\lambda}}_\nu - \frac 1 2 g_{\mu\nu} F^{\a\b} F_{\a\b}
\eea
where $F$ is a $2$-form satisfying Maxwell's equations 
\bea \label{Max}
\nabla_{[\a} F_{\b\gamma]}=0, \qquad \nabla^\a F_{\a\b}=0.
\eea
In the case of Einstein-Maxwell equation, coupled gravitational and electromagnetic perturbations of initial data are considered, i.e. the Weyl curvature and the electromagnetic tensor $F$ of the spacetime are both perturbed with respect to the initial spacetime.

The only known result concerning the full non-linear stability of a vacuum spacetime without symmetries is the celebrated stability of Minkowski space by Christodoulou-Klainerman (\cite{Ch-Kl}). The result was generalized by Zipser to the non-linear stability of Minkowski space under gravitational and electromagnetic perturbations in \cite{Zipser}. The first linear stability of a black hole spacetime has been proved by Dafermos-Holzegel-Rodnianski in \cite{DHR}, in their proof of linear stability of Schwarzschild spacetime. The full non-linear stability of the Schwarzschild solution is still open, and would require a formulation of the stability not just for Schwarzschild, but for slowly rotating Kerr solutions. Indeed, even if one restricts to small perturbations of Schwarzschild, one should expect that generically the spacetime would evolve to a slowly rotating Kerr spacetime, with small but non-zero angular momentum.

A recent work by Klainerman-Szeftel (\cite{stabilitySchwarzschild}) addresses the global non-linear stability of Schwarzschild black hole as solution to the Einstein vacuum equation \eqref{vacuum}. The authors in \cite{stabilitySchwarzschild} consider a particular class of gravitational perturbations of Schwarzschild, namely axially symmetric polarized spacetimes. This choice of perturbations forces the final state of the evolution to have zero angular momentum, therefore ruling out the general Kerr spacetime. They can therefore prove the global non-linear stability of Schwarzschild space, meaning that the axially symmetric polarized perturbed spacetime close to the initial Schwarzschild evolves to another Schwarzschild solution, with mass close to initial one.

In \cite{stabilitySchwarzschild}, the authors consider the celebrated Teukolsky equation (first derived in \cite{Teukolsky}) verified by the extreme component of the Riemann curvature $\alpha$, and apply a Chandrasekhar-type transformation to obtain a new quantity $\qf$, at the level of second derivative of $\a$, that verifies a Regge-Wheeler equation. This transformation was first used in the context of linear stability of Schwarzschild in \cite{DHR} to derive decay estimates for solution of Teukolsky equation. In \cite{stabilitySchwarzschild}, the decay estimates for $\qf$ are the starting point to derive decays for the curvature components and the connection coefficients, using the null structure equations and the Bianchi identities. The dynamical construction of the spacetime follows, along with many subtleties related to the non-linearity of the problem.

In the present paper, we address the problem of stability of charged black holes subject to coupled gravitational and electromagnetic perturbations. A particular class of these spacetimes are the spherically symmetric charged black holes, namely the Reissner-Nordstr{\"o}m solution, corresponding to Kerr-Newman spacetime with zero angular momentum. In local coordinates $(t,r,\th, \vphi)$ the metric is expressed as
\beaa
\g_{RN}= -\left(1-\frac{2m}{r}+\frac{Q^2}{r^2}\right) dt^2+\left(1-\frac{2m}{r}+\frac{Q^2}{r^2}\right)^{-1}dr^2 +r^2(d\th^2+\sin^2\th d\vphi^2)
\eeaa
where $m$ is the mass and $Q$ is the charge of the black hole. Gravitational and electromagnetic perturbations of Reissner-Nordstr{\"o}m black holes have been extensively considered in the setting of metric perturbations. Moncrief (\cite{Moncrief}) reduced the governing equations to pair of decoupled one dimensional wave equations both for odd and for the even parity perturbations. This approach corresponds to the description of gravitational perturbations of Schwarzschild spacetime via Regge-Wheeler and Zerilli equations for metric perturbations (as in \cite{MuTao}), as opposed to the Teukolsky equations for curvature perturbations (as in \cite{DHR}). Chandrasekhar (\cite{Chandra2}), using the Newman-Penrose formalism, derived a pair of decoupled equations for perturbations of Reissner-Nordstr{\"o}m which can be transformed to one dimensional wave equations for both parity perturbations. In order to decouple the equations, Chandrasekhar chooses a gauge (the phantom gauge, as discussed in Remark \ref{remChandra}) and separates them in radial and angular parts. Our approach instead is based on the use of a gauge invariant quantity, and no separation of variables is needed to decouple the equations. 

As for the vacuum Einstein equation, the general problem of stability of charged black holes would have to be solved in the context of global non-linear stability of the Kerr-Newman spacetime as solution to the Einstein-Maxwell equation \eqref{Einsteineq}. However, as in the case of axially symmetric polarized perturbations of Schwarzschild in \cite{stabilitySchwarzschild}, restricting to axially symmetric polarized perturbations of Reissner-Nordstr{\"o}m will force the final state of the evolution to a non-rotating black hole, excluding the general Kerr-Newman spacetime.

The present work is meant to be the initial step in order to extend the stability result of \cite{stabilitySchwarzschild} to the case of electrovacuum perturbations of charged black holes. The equations governing the evolution of the curvature and the electromagnetic tensor are now coupled. In particular, the Teukolsky equation verified by the extreme component of the Weyl curvature $\a$ is coupled with the electromagnetic components coming from the non-vanishing Ricci curvature. Applying a Chandrasekhar-type transformation we derive the corresponding new quantity $\qf$ veryfing a Regge-Wheeler type equation, coupled with electromagnetic terms. However, these additional terms are multiplied by the charge of the spacetime, so, provided that we have control on the electromagnetic part, they could in principle be absorbed as error terms for small enough charge. 

The main new insight is the control of the electromagnetic part making use of a gauge invariant quantity depending on the electromagnetic components, whose null derivative appears in the Teukolsky equation for $\a$. Applying a new Chandrasekhar-type transformation, at the level of one derivative only as opposed to two derivatives as in the case of curvature, we are able to find a new quantity $\qf^{\F}$ verifying another Regge-Wheeler type equation, coupled with the curvature term $\qf$. Therefore, we have a coupled system of wave equations for the term encoding the curvature $\qf$ and for the term encoding the electromagnetic part $\qf^\F$ which at the linear level looks like\footnote{The explicit form of the final system is \eqref{finalsystem}}
\bea \label{schemesystem}
\begin{cases}
\square_\g \qf= V_1 \qf+ e \c \mathcal{M}(\partial^{\leq 2} \qf^{\F})+ e(l.o.t.(\qf^\F))+e^2(l.o.t.(\qf))\\
\square_\g \qf^{\F}=V_2 \qf^{\F}+e \c \mathcal{M}(\qf) +e^2(l.o.t.(\qf^\F))
\end{cases}
\eea
where the operator $\square_\g=\D^\a \D_\a$ is the wave operator associated to the perturbed metric $\g$, $e$ is the charge of the perturbed spacetime, and $\mathcal{M}$ is an expression of the arguments. We denote $l.o.t.(\qf)$ and $l.o.t.(\qf^\F)$ lower order terms with respect to $\qf$ and $\qf^\F$ respectively.

The two wave equations are coupled: on the right hand side of the equation for the curvature term $\qf$ we find an expression of the electromagnetic term $\qf^\F$, and similarly on the right hand side of the equation for $\qf^\F$ we find the curvature term $\qf$. Notice that those coupled terms are multiplied by the charge of the spacetime. The coupling is not symmetric in terms of dependence on derivatives: the presence of two derivatives of $\qf^{\F}$ on the right hand side of the first equation is a consequence of the Teukolsky equation for $\a$, in which the derivative of the electromagnetic term appears. However, this asymmetry is good in terms of deriving estimates for such a system. Indeed, taking one derivative of the second equation and deriving Morawetz estimates for it, we would have a term for second derivative of $\qf$ and one term for first derivative of $\qf$, the latter multiplied by the charge. Those are exactly the kind of terms appearing in the Morawetz estimates obtained for the first equation (because of the presence of $\partial^{\leq 2}\qf^\F$), with the term for second derivative of $\qf^\F$ multiplied by the charge. Summing those estimates, the terms on the right hand side multiplied by the charge could be absorbed on the right hand side for small enough charge. In addition to the coupling, there is the presence of lower order terms, which will have to be treated either in the spirit of \cite{Siyuan} in the case of slowly rotating Kerr (i.e. considering a system of equations for the lower order terms), or as in \cite{TeukolskyDHR} (i.e. deriving decay for the lower order terms using transport equations). The analysis of the system, complete of Morawetz and $r^p$-estimates will follow in a later paper.

In this paper, we derive the main equations leading to the system \eqref{schemesystem}, as a first step towards the proof of non-linear stability of Reissner-Nordstr{\"o}m spacetime under polarized perturbations. We remark that the structure of the system \eqref{schemesystem} doesn't depend on the polarization of the metric nor on the assumption of axial symmetry, as proved by computations done by computer algebra by Steffen Aksteiner. We present the general result in the Appendix. The system can therefore be used to prove linear stability of Reissner-Nordstr{\"o}m spacetime.

\textbf{Acknowledgements} The author is grateful to Sergiu Klainerman for proposing the extension of the work in \cite{stabilitySchwarzschild} to electrovacuum spacetimes, and for his continuous encouragement. The author thanks J{\'e}r{\'e}mie Szeftel and Steffen Aksteiner for useful discussions, and Steffen Aksteiner for his remarks on the importance of signature zero quantities and operators, and for performing the general computations, sketched in the Appendix.

\section{Electrovacuum axially symmetric polarized spacetimes}
We consider electrovacuum spacetimes $(\M, \g)$, namely solution to the Einstein-Maxwell equation
\bea 
\operatorname{Ric}(\g)_{\mu\nu}=T(\F)_{\mu\nu}:=2 \F_{\mu \lambda} {\F^{\lambda}}_\nu - \frac 1 2 g_{\mu\nu} \F^{\a\b} \F_{\a\b}
\eea
where $\F$ is a $2$-form satisfying Maxwell's equations 
\bea
\D_{[\a} \F_{\b\gamma]}=0, \quad \D^\a \F_{\a\b}=0
\eea
We denote $\D$ the Levi-Civita connection of the spacetime $(\M, \g)$.

 An axially symmetric spacetime $(\M,\g, \Z) $  is a four dimensional   simply connected manifold $\M$   with a Lorentzian metric $\g$  and an  axial  Killing vectorfield $\Z$  on $\M$ that preserves $\F$, i.e. $\mathcal{L}_{\Z} \F=0$.

The Ernst potential of the spacetime    is given by
 \beaa
 \si_\mu&:=&\D_\mu(-\Z^\a\Z_\a)- i\in_{\mu\b\ga\de}\Z^\b \D^\ga\Z^\de.
 \eeaa
 The $1$-form   $\si_\mu dx^\mu$ is closed   and thus  there exists a function $\si:\M \to\mathbb{C}$, called the $\Z$- Ernst potential,
  such that  $ \si_\mu =\D_\mu\si.$
   Note also  that    $\D_\mu\g(\Z, \Z)=2 \G_{\mu\la} \Z^\la = - Re(\si_\mu) $   where $\G_{\mu\nu}=\D_{\mu}\Z_{\nu}$.   Hence 
 we can choose  the potential $\si$  such that $ Re(\si)=- X$, where $X=\g(\Z, \Z)$.
  \begin{definition}
 An axially symmetric   Lorentzian manifold $(\M, \g, \Z)$       is said to be polarized  if   the Ernst potential  $\si$ is real, i.e.       $\si=-X$.  
In this case, we can find coordinates $(\vphi, x^a)$ such that $\Z=\partial_{\vphi}$, and the metric $\g$ can be written in the form
\bea \label{formmetric}
\g= Xd\vphi^2+g_{ab} dx^a dx^b=e^{2\Phi} d\vphi^2 +g_{ab} dx^a dx^b
\eea
where $g_{ab}$ is a $1+2$ Lorentzian metric, and $\Phi=\frac 1 2 \log (X)$. The axial symmetry implies that $X$ and $g$ are independent of $\vphi$. 
 \end{definition}

Following \cite{Ch-Kl}, we define the null decomposition of the curvature and the electromagnetic tensor on a given $\Z$-invariant polarized $S$-foliation of an electrovacuum spacetime. We assume we have a fixed adapted null   pair   $e_3, e_4$, i.e.   future directed  $\Z$-invariant    null vectors orthogonal to  the leaves $S$  of the foliation,   such as $\g(e_3, e_4)=-2$, while on $S$ we have an orthonormal frame $e_1, e_2$. 
 
 We   define the spacetime Ricci coefficients, where the indices $A,B$ take values $1,2$
   \bea\label{def1}
   \begin{split}
   \chiS_{AB}:&=\g(\D_A  e_4, e_B), \qquad \xiS_A:=\frac 1 2 \g(\D_4 e_4, e_A), \qquad \etaS_A:=\frac 1 2 \g(\D_3 e_4, e_A)\\
   \zeS_A:&=\frac 1 2 \g( \D_A e_4, e_3), \qquad \omS:=\frac 1 4 \g(\D_4 e_4, e_3)
   \end{split}
   \eea
   and interchanging $e_3, e_4$,
    \bea\label{def2}
   \begin{split}
   \chibS_{AB}:&=\g(\D_A  e_3, e_B), \qquad \quad \xibS_A:=\frac 1 2 \g(\D_3 e_3, e_A), \qquad \etabS_A:=\frac 1 2 \g(\D_4 e_3, e_A)\\
   \zeS_A:&=-\frac 1 2 \g( \D_A e_3, e_4), \qquad \ombS:=\frac 1 4 \g(\D_3 e_3, e_4)
   \end{split}
   \eea
We define the spacetime null curvature components of the Weyl curvature $\W$,  
   \bea\label{def3}
\begin{split}
\aS_{AB}:&=\W_{A4B4}, \qquad\quad \bS_A
:=\frac 1 2 \W_{A434}, \qquad\,\, \rhoS:=\frac 14 \W_{3434}\\
\aaS_{AB}:&= \W_{A3B3}, \qquad  \bbS_{A} :=\frac 1 2 \W_{A334}, \qquad \rhodS:=\frac 1 4  \dual \W _{3434}
\end{split}
\eea
We define the spacetime null electromagnetic components of the electromagnetic tensor $\F$ in the following way\footnote{Note that we define the extreme components of the electromagnetic tensor using the $\bF$ notation, as opposed to the standard $\aF$ for the spin $\pm1$ Teukolsky equation. This choice is meant to stress the fact that in electrovacuum background under gravitational and electromagnetic perturbation, the extreme components $\F_{A4}$ and $\F_{A3}$ are not gauge invariant (as the $\b$ component of the curvature is not invariant, as opposed to the extreme component $\a$). See Remark \ref{betanotinvariant}.}:
\bea
\begin{split}
\bF_{A}:&=\F_{A4}, \qquad\quad \bbF_{A}:= \F_{A3}, \quad \rhoF:=& \frac 1 2 \F_{34}, \qquad\quad \rhodF:=\frac 1 2 \dual \F_{34} 
\end{split}
\eea
The Ricci tensor can be expressed in terms of the electromagnetic null decomposition according to Einstein equation \ref{Einsteineq}, and using the decomposition of the Riemann curvature in Weyl curvature and Ricci tensor, 
\beaa
\label{WeylRiemanngeneral}
\R_{\a\b\gamma\delta}=\W_{\a\b\gamma\delta}+\frac 1 2 (\g_{\b\delta}\R_{\a\gamma}+\g_{\a\gamma}\R_{\b\delta}-\g_{\b\gamma}\R_{\a\delta}-\g_{\a\delta}\R_{\b\gamma}),
\eeaa
we can express the full Riemann tensor of the perturbed spacetime in terms of the above decompositions.

Suppose now that the orthonormal   frame on $S$ is adapted to the axial symmetry as follows:  $e_1=e_\vphi=X^{-1/2}  \Z$  with  $X:=\g(\Z, \Z)$, and  $e_2=e_\th$. We define $e_3, e_4, e_\th$ to be the  reduced   null frame, associated to the reduced metric $g$. 

We can define the reduced Ricci coefficients as follows:
\bea
\label{eq-red:Ricci-coeff}
\begin{split}
\chi:&=\chiS_{\th\th},  \quad  \chib :=\chibS_{\th\th}, \quad \eta :=\etaS_\th,\quad \etab :=\etabS_\th, \quad \xi :=\xiS_\th, \quad \xib :=\xibS_\th,\\
\ze :&=\zeS_\th, \quad \om:=\omS, \quad \omb:=\ombS
\end{split}
\eea

We define the reduced curvature and electromagnetic tensor as
\bea \label{reduced}
\begin{split}
\a:&= \aS_{\th\th}, \quad \aa:=\aaS_{\th\th}, \quad \b:=\bS_{\th}, \quad  \bb:=- \bbS_{\th} \quad  \rho:=\rhoS\\
\bF :&= \bF_{\th}, \qquad \bbF:= \bbF_{\th}
\end{split}
\eea

Notice that the polarization implies that every $\Z$-invariant and $\Z$-polarized spacetime tensor $U$ is such that its contraction with an odd number of $e_\vphi=X^{-\frac 1 2 } \Z$ vanishes identically.
Therefore, the polarization of the metric and the $\Z$-invariance of $\F$ imply that the remaining components of the Ricci coefficients, the curvature and the electromagnetic tensor are determined in the following way:
\beaa
\chiS_{\th\vphi}&=&\chibS_{\th\vphi}=\etaS_\vphi=\etabS_\vphi=\xiS_\vphi=\xibS_\vphi=\zeS_\vphi=0, \\
 \chiS_{\vphi\vphi}&=&e_4(\Phi), \quad \chibS_{\vphi\vphi}=e_3(\Phi), \\
\aS_{\th\vphi}&=&\aaS_{\th\vphi}=\bS_{\vphi}=\bbS_{\vphi}=0, \\
\aS_{\vphi\vphi}&=&-\a, \quad \aaS_{\vphi\vphi}=-\aa
\eeaa
and
\bea \label{polarization}
\rhod=0, \qquad \bF_{\vphi}=\bbF_{\vphi}=\rhodF=0
\eea
\begin{remark}
Since in Kerr-Newman spacetime, the components $\rho$ and $\rhodF$ are different from zero for non-zero angular momentum, we see from \eqref{polarization} that the hypothesis of polarization of the metric forces the final state of the evolution to be a non-rotating charged black hole.
\end{remark}

We introduce the notation,
\beaa
\vth:&=&\chi-e_4( \Phi),\qquad   \ka: =\trchS =\chi+e_4 (\Phi )\\
\vthb:&=&\chib-e_3 (\Phi), \qquad   \kab: =\trchbS =\chib+e_3( \Phi )
\eeaa
Thus,
\beaa
\chihS_{\th\th}= -\chihS_{\vphi\vphi}  &=&\frac 1 2 \vth, \qquad \chibhS_{\th\th}   =-\chibhS_{\vphi\vphi}          =\frac 1 2 \vthb
\eeaa
where $\chihS$ and $\chibhS$ are the traceless part of $\chiS$ and $\chibS$ respectively.

\subsection{Reissner-Nordstr{\"o}m spacetime} 

The   Reissner-Nordstr{\"o}m metric has the axial symmetric vector field $\Z=\pr_\vphi$ and the polarized form \eqref{formmetric} of the metric in standard coordinates is given by 
 \beaa
 \g=X^2  d\vphi^2-\Up  dt^2 + \Up ^{-1} dr^2 + r^2 d\th^2, \qquad \Up:= 1-\frac{2m}{r} + \frac{Q^2}{r^2}, \qquad   X= r^2 \sin^2 \th
 \eeaa

 and the electromagnetic tensor is given by 
 \beaa
 \F=-\frac{Q}{r^2} dr \wedge dt.
 \eeaa

\begin{proposition}\label{Reiss} All curvature and electromagnetic components of the Reissner-Nordstr{\"o}m spacetime vanish identically  except 
 \beaa
 \rhoS = -\frac{2m}{r^3} + \frac{2Q^2}{r^4}, \qquad  \rhoF= \frac{Q}{r^2}
\eeaa
\end{proposition}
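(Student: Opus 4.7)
The plan is to work in the $\Z$-invariant null frame naturally adapted to the Reissner-Nordstr{\"o}m geometry and to exploit the spherical symmetry of the background---which is stronger than axial polarization---to force all the components in the decompositions \eqref{def1}--\eqref{reduced} other than $\rhoS$ and $\rhoF$ to vanish identically.

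First I would fix the frame
\[
 e_4 = \Up^{-1/2}\pr_t + \Up^{1/2}\pr_r, \quad e_3 = \Up^{-1/2}\pr_t - \Up^{1/2}\pr_r, \quad e_\th = r^{-1}\pr_\th, \quad e_\vphi = (r\sin\th)^{-1}\pr_\vphi,
\]
check that $\g(e_3,e_4)=-2$ and $\g(e_A,e_B)=\de_{AB}$, and note that the frame is $\Z$-invariant and compatible with the polarization. For the electromagnetic decomposition, since $\F = -Qr^{-2}\, dr\wedge dt$ carries no $d\th$ or $d\vphi$ factor, the pairings $\F(e_A, e_4)$ and $\F(e_A, e_3)$ vanish identically for $A\in\{\th,\vphi\}$, giving $\bF_A = \bbF_A = 0$. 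The Hodge dual $\dual\F$ is supported on the orthogonal angular two-plane, so $\dual\F_{34}=0$ and $\rhodF=0$, consistent with \eqref{polarization}; a direct evaluation gives $\F(e_3,e_4)=2Q/r^2$, hence $\rhoF = Q/r^2$.

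For the curvature, the Einstein-Maxwell equation expresses $\operatorname{Ric}$ in terms of $T(\F)$, which in our frame depends only on $\rhoF$, and whose trace vanishes in four dimensions so the scalar curvature $\R$ is zero. I would then argue that every component in \eqref{def3} other than $\rhoS$ vanishes: the block-diagonal form of $\g$ (no cross terms between the $(t,r)$ and $(\th,\vphi)$ blocks) together with spherical symmetry excludes any preferred angular direction that a nonzero traceless $\aS_{AB}$ or one-forms $\bS_A, \bbS_A$ would single out, while $\rhodS$ vanishes already by polarization. It remains to compute $\rhoS = \frac{1}{4}\W_{3434}$, using the scalar-free Ricci-to-Weyl identity
\[
 \W_{3434} = \R_{3434} - \frac{1}{2}\bigl(\g_{44}\R_{33}+\g_{33}\R_{44}-\g_{43}\R_{34}-\g_{34}\R_{43}\bigr) = \R_{3434} - 2\R_{34},
\]
where $\R_{3434}$ comes from the Christoffel symbols of $\g$ in the fixed frame (the only nontrivial ingredients being $\pr_r\Up$ and $\pr_r\Up^{1/2}$), and $\R_{34} = T(\F)_{34}$ is read off from $\rhoF^2 = Q^2/r^4$.

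The main obstacle is the bookkeeping-heavy direct computation of $\R_{3434}$, which is standard but sign- and factor-sensitive in a null frame. Once it is in hand, combining it with $-2\R_{34}$ produces the Schwarzschild-like contribution $-8m/r^3$ together with a charge correction of the correct size from $Q^2/r^4$, yielding $\rhoS = -2m/r^3 + 2Q^2/r^4$.
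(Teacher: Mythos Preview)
The paper does not supply a proof of this proposition; it is stated as a standard fact about the Reissner--Nordstr{\"o}m background and then used freely afterwards. Your outline is a correct direct verification: the frame you write down is the natural $\Z$-invariant null frame, the electromagnetic components are read off immediately from $\F=-Qr^{-2}\,dr\wedge dt$, and the spherical symmetry argument legitimately forces $\aS,\aaS,\bS,\bbS$ (and $\rhodS$) to vanish since there are no nonzero $SO(3)$-invariant one-forms or traceless symmetric two-tensors on the round sphere.

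One optional simplification for the last step: rather than computing $\R_{3434}$ from Christoffel symbols, you can use the Gauss equation $K=-\rho-\tfrac14\ka\kab+\rhoF^2$ (which appears among the null structure equations and holds for any Einstein--Maxwell solution). In your frame one checks easily that $K=r^{-2}$, $\ka=2\Up^{1/2}/r$, $\kab=-2\Up^{1/2}/r$, hence $\ka\kab=-4\Up/r^2$; together with $\rhoF=Q/r^2$ this gives
\[
\rho=-\frac{1}{r^2}+\frac{\Up}{r^2}+\frac{Q^2}{r^4}=-\frac{2m}{r^3}+\frac{2Q^2}{r^4}
\]
with no frame-curvature bookkeeping required.
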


\section{Main equations in electrovacuum}
Following \cite{Ch-Kl} and \cite{stabilitySchwarzschild}, we derive the main equations in electrovacuum spacetimes and then obtain their reduction,  i.e. their evaluation along $e_\th$, for axially symmetric polarized spacetimes.

\subsection{Null structure equations}

The spacetime\footnote{For convenience  we  drop  the    $^{(1+3)} $ labels in what follows.} null structure equations are 
\beaa
\nabb_3 \chib_{AB} &=&2 \nabb_B \xib_A        -2\omb \chib_{AB} - \chib_{A}^C \chi_{CB} +2\eta_B \xib_A+2\etab_A \xib_B-4\ze_B \xib_A+\R_{A33B}, \\
\nabb_4\chib_{AB}  &=&  2\nabb_B \etab_A +2 \om \chib_{AB} - \chi_B^C \chib_{AC}  +2 (\xi_B\xib_A + \etab_B \etab_A) + \R_{A34B}, \\
\nabb_3 \ze_A&=&-2 \nabb_A \omb -\chib_A^B(\ze_B+\eta_B)+2\omb(\ze_A-\eta_A)+\chi_A^B \xib_B+2\om \xib_A - \frac 1 2 \R_{A334}, \\
\nabb_4 \xib -\nabb_3 \etab &=& 4\om \xib+\chibh\c(\etab-\eta)+\frac 12 \trchb (\etab-\eta) -\frac 1 2 \R_{A334}, \\
\nabb_4 \omb+\nabb_3\om&=& 4\om\omb +\xi\c\xib+\ze\c(\eta-\etab) -\eta\c\etab+\frac 1 4 \R_{3434}, \\
\nabb_C \chib_{AB} + \ze_B \chib_{AC} &=& \nabb_B \chib_{AC} + \ze_C \chib_{AB} + \R_{A3CB}, \\
\gS^{AC} \gS^{BD} \R_{ADCB} &=& 2K + \frac 1 2 \tr\chi \tr\chib - \chih \c \chibh
\eeaa

The symmetric traceless part of the first equation in the reduce picture becomes 
\beaa
e_3(\vthb)+\kab \, \vthb =-2\dds_2\xib - 2\omb \, \vthb +2(\eta+\etab-2\ze)\, \xib-2\aa
\eeaa
where $\dds_2 \xib= -e_\th \xib+ e_\th \Phi \xib$, while its trace gives  
\beaa
e_3(\kab)+\frac 12 \kab^2 +2 \omb \,\kab &=&2\ddd_1\xib+2(\eta+\etab-2\ze)  \xib- \frac 1 2 \vthb\, \vthb - 2\bbF^2
\eeaa
where $\ddd_1\xib=  e_\th\xib+ e_\th(\Phi) \xib$ . 

The symmetric traceless part of the second equation gives the reduced equation
\beaa
e_4\vthb +\frac 12 \ka\, \vthb - 2\om \vthb &=&-2\dds_2\etab-\frac 12 \kab \,\vth+2(\xi\, \xib+\etab^2) - \bF \bbF
\eeaa
and its trace gives
\beaa
e_4(\kab)+\frac 1 2 \ka\, \kab -2\om \kab &=& 2\ddd_1\etab -\frac 1 2  \vth\, \vthb +2(\xi\, \xib+\etab\, \etab)+ 2\rho 
\eeaa
The reduction of the third, fourth and fifth spacetime equation become  
\beaa
e_3 \ze +\frac 1  2 \kab (\ze+\eta) -2\omb(\ze-\eta)&=&\bb -2 e_\th(\omb)+2\om \xib+\frac 12 \ka \,\xib -\frac  1 2 \vthb(\ze+\eta)+\frac 1 2 \vth \,\xib - \rhoF \bbF, \\
e_4(\xib) - e_3(\etab)&=&\bb + 4\om \xib+\frac 12 \kab  (\etab-\eta)+\frac 1 2 \vthb (\etab-\eta)-\rhoF\bbF, \\
 e_4 \omb+e_3\om&=&\rho+ \rhoF^2 + 4\om\omb +\xi\,\xib+\ze(\eta-\etab) -\eta\, \etab
\eeaa
The last two equations are Codazzi and Gauss equations, which in the reduction become
\beaa
\ddd_2 \vthb &=&e_\th(  \kab) -\kab  \ze+ \vthb  \ze - 2\bb -2\rhoF \bbF, \\
K&=& -\frac 1 4 \ka \kab +\frac 1 4 \vth\vthb -\rho  + \rhoF^2
\eeaa
where $\ddd_2 \vthb=e_\th(\vthb)+2 e_\th(\Phi)\vthb$. By the symmetry $e_3-e_4$ we derive the specular equations.

\subsection{Bianchi identities}
In electrovacuum spacetimes, the Bianchi identities for the Weyl curvature have non-homogeneous terms, in particular, using the notations in \cite{Ch-Kl}, we have
\beaa
 \D^\a \W_{\a\b\gamma\delta}&=&\frac 1 2 (\D_\gamma \R_{\b\delta}-\D_{\delta}\R_{\b\gamma})=:J_{\b\gamma\delta} \\
 \D_{[\sigma }\W_{\gamma\delta] \a\b}&=&\g_{\delta \b}J_{\a\gamma\sigma}+\g_{\gamma \a}J_{\b\delta\sigma}+\g_{\sigma \b}J_{\a\delta\gamma}+\g_{\delta \a}J_{\b\sigma\gamma}+\g_{\gamma \b}J_{\a\sigma\delta}+\g_{\sigma \a}J_{\b\gamma\delta}:= \tilde{J}_{\sigma\gamma\delta\a\b}
 \eeaa
 The non-homogeneous terms $J_{\b\gamma\delta}$ and $\tilde{J}_{\sigma\gamma\delta\a\b}$ can be expressed in terms of the electromagnetic components $\bF, \bbF, \rhoF$ and their derivative.
 
The spacetime\footnote{For convenience  we  drop  the    $^{(1+3)} $ labels in what follows.} Bianchi identities equations are
\beaa
\nabb_3\a_{AB}+\frac 1 2 \trchb\,\a_{AB}&=&-2 (\DDs_2\, \b)_{AB}+4\omb \a_{AB} -3(\chih_{AB} \rho+\dual\, \chih_{AB} \rhod)+((\ze+4\eta)\otimes \b)_{AB} +\\
&&+\frac 1 2 (\tilde{J}_{3A4B4}+\tilde{J}_{3B4A4}+J_{434}\delta_{AB}), \\
\nabb_4 \b_A+ 2\trch  \b_A&=&\divv\a_A -2\om \b_A +((2\ze+\etab)\c \a)_A +3(\xi_A \rho +\dual\xi_A \, \rhod)-J_{4A4}, \\
\nabb_3 \b_A+\trchb \b_A &=&\DDs_1(-\rho, \rhod)_A+2(\chih \c\bb)_A +2\omb\, \b_A +(\xib\c \a)_A+3(\eta_A \rho +\dual\eta_A\,   \rhod) + J_{3A4}, \\
\D_4 \rho+\frac 3 2 \trch \rho&=&\divv\b -\frac 1 2 \chibh \c \a +\ze\c\b +2(\etab \c\b-\xi\c\bb) -\frac 1 2 J_{434}
\eeaa
These equations in the reduction are
\bea \label{Bianchicomplete}
\begin{split}
e_3 (\a)+\left(\frac 1 2 \kab -4\omb \right)\a&=&-\dds_2\b - \frac  3 2 \vth \rho +(\ze+4\eta) \b +(2\ze +3\etab+2\eta) \rhoF \aF_{\th}-\xi \rhoF \aaF_{\th}+\\
 &&+e_\th(2\rhoF \bF)+e_4(\bbF\bF )+\chi \bbF\bF-(\chib+2\omb) \bF^2+\\
 &&-  \frac 1 2 e_3(\bF^2) -\frac 1 2  e_4(\rhoF^2 ) -2\chi \rhoF^2, \\
 e_4(\b) +2\left(\ka+\om\right) \b &=&\ddd_2\a+ (2\ze+\etab) \a+ 3\xi \rho -e_{\th}(\bF^2) +e_4(\rhoF \bF) -(2\ze+\etab) (\bF^2)+\\
 &&+2(\om +\chi) \rhoF \bF -2\xi \rhoF^2+\xi \bF \bbF, \\
 e_3 (\b)+\left( \kab-2\omb \right)\b &=&e_\th(\rho)  + 3\eta \rho-\vth\bb +\xib \a +e_{\th}(\rhoF^2) +e_4( \rhoF \bbF)+2\etab\rhoF^2  +\\
&&+(\chi-2\om) \rhoF \bbF -2\omb \bF^2 -\chib\rhoF \bF -\etab \bF \bbF+\xi \bbF^2, \\
e_4 \rho+\frac 3 2 \ka \rho&=&\ddd_1\b -\frac 1 2 \vthb \, \a +\ze\, \b +2(\etab \,\b+ \xi\,\bb ) -\frac 1 2 e_3(\bF^2)+\frac 1 2 e_4(\rhoF^2 )+\\
&&+2\omb\bF^2 + (2\eta-\etab) \rhoF \bF+\xi \rhoF \bbF
\end{split}
\eea
All other equations can be obtained by symmetry $e_3-e_4$.

\subsection{Maxwell's equations}

We write Maxwell's equations \eqref{Max} in null decomposition. The spacetime Maxwell's equations $\D_\a \F_{\b\gamma}+\D_\b \F_{\gamma\a} + \D_\gamma \F_{\a\b}=0$ read
\beaa
\nabb_3 \bF_A+\frac 1 2 \trchb \bF_A&=&\nabb_4 \bbF_A+\frac 1 2 \trch \bbF_A+2\omb\bF_A-2\om\bbF_A+2\DDs_1\rhoF+2(\eta_A+\etab_A)\rhoF
\eeaa

The spacetime Maxwell equations $\D^\a \F_{\a\b}=0$ in null decomposition are
\beaa
\nabb_3\bF_A+\nabb_4\bbF_A&=& -(\frac 1 2 \trch-2\om)\bbF_A -(\chih\c\bbF)_A-(\frac 1 2 \trchb-2\omb)\bF_A -(\chibh\c\bF)_A+(\eta_A-\etab_A)\rhoF, \\
\nabb_3 \rhoF+ \trchb \rhoF&=& -\divv\bbF+(\ze-\eta)\c\bbF+\xib \c \bF, \\
\nabb_4 \rhoF+ \trch \rhoF&=& \divv\bF+(\ze+\etab)\c\bbF-\xi \c \bbF
\eeaa

The reduced equations are therefore
\bea \label{Maxreduced}
\begin{split}
e_4 \bbF&=-e_\th \rhoF-2 \etab\rhoF  +\left(-\frac 1 2 \ka+2\om\right) \bbF+ \frac 1 2 \vthb \bF, \\
e_3 \bF&=e_\th \rhoF+2 \eta\rhoF  +\left(-\frac 1 2 \kab+2\omb\right) \bF+ \frac 1 2 \vth \bbF, \\
e_3 \rhoF+ \ddd_1\bbF&= - \kab \rhoF + (\ze -\eta) \bbF+\xib \bF, \\
-e_4 \rhoF+ \ddd_1\bF&=  \ka \rhoF + (-\ze -\etab) \bF+\xi \bbF
\end{split}
\eea

\section{Quasi-local mass and charge}
Given a    $\Z$-invariant polarized surface $S$ we define    its   volume radius    by the formula
\beaa
|S|=4\pi r^2
\eeaa
where $|S|$ is the volume of the surface using the volume form of the   metric $\gS$.

\begin{definition}\label{charge} We define the quasi-local charge $e=e(S)$ of the foliated spacetime as 
\beaa
e= \frac{1}{4\pi} \int_{S} \rhoF
\eeaa
\end{definition}

Recall that the standard Hawking mass $m_H=m_H(S)$ is defined by 
\beaa
\frac{2m_H}{r}=1+\frac{1}{16\pi}\int_{S_{}}\ka \kab
\eeaa
but in Reissner-Nordstr{\"o}m we have
\beaa
e&=& \frac{1}{4\pi} \int_{S} \rhoF= \frac{1}{4\pi} \int_{S} \frac{Q}{r^2}=Q, \\
\frac{2m_H}{r}&=&1+\frac{1}{16\pi} \int_{S} \ka \kab= 1-(1-\frac{2m}{r}+\frac{Q^2}{r^2}) =\frac{2m}{r}-\frac{Q^2}{r^2}
\eeaa
so the Hawking mass doesn't correspond to the usual mass. We need therefore a definition of a modified quasi-local mass.

\begin{definition}
We define\footnote{Observe that this corresponds to the standard definition of modified Hawking mass $\varpi$ in spherical symmetry.} the modified Hawking  mass  $\varpi=\varpi(S)$  of the foliated spacetime as
  \beaa
\varpi&=& m_H+\frac{e^2}{2r},
\eeaa
\end{definition}
Observe that in Reissner-Nordstr{\"o}m we have
 \beaa
\frac{2\varpi}{r}=\frac{2m}{r}-\frac{Q^2}{r^2}+\frac{Q^2}{r^2}=\frac{2m}{r}
\eeaa
as desired.

Following \cite{stabilitySchwarzschild}, we define average quantities and the difference between a quantity and its average in the following way. Given a function $f$ on $S$ we denote 
\bea \label{def-S-average}
\bar{f}:&=&\frac{1}{|S|} \int_S f, \qquad \check{f}:= f-\bar{f}.
\eea
Observe that, by Definition \ref{charge} of the quasi-local charge $e$,
\bea\label{rhoFcharge}
\overline{\rhoF}&=&\frac{e}{r^2}, \qquad \rhoF=\frac{e}{r^2} +\check{\rhoF}, 
\eea

\section{Perturbations of Reissner-Nordstr{\"o}m spacetime and invariant quantities}
We recall that by Proposition \ref{Reiss}, in Reissner-Nordstr{\"o}m spacetime the Ricci  coefficients  $\xi, \xib, \vth, \vthb, \eta, \etab, \ze$, the curvature
 components $\a,\aa,\b, \bb$ and the electromagnetic components $\bF, \bbF$ vanish identically.   Thus,   roughly, we expect  that  in perturbations of Reissner-Nordstr{\"o}m     these  quantities stay   small, i.e. of order $O(\ep)$ for a sufficiently  small $\ep$. Moreover, recall that under axially symmetric polarized perturbations, we know that $\rhod, \bF_\vphi, \bbF_\vphi, \rhodF=0$, as derived in \eqref{polarization}.       
\begin{definition} 
\label{def:ep-perturbations}
  We say that a       smooth,  electrovacuum,  $\Z$-invariant, polarized   spacetime  is an $O(\ep)$-perturbation of Reissner-Nordstr{\"o}m  if   the following are true:
\bea
\xi, \xib, \vth, \vthb, \eta, \etab, \ze, \qquad \a,\aa,\b, \bb, \qquad \bF, \bbF=O(\ep),
\eea
Also,
\bea
\ka-\ov{\ka}, \quad \kab-\ov{\kab}, \quad \om-\ov{\om},  \quad    \omb-\ov{\omb}, \quad \rho-\ov{\rho} , \quad  \rhoF - \ov{\rhoF}   =O(\ep)
\eea
where $\ov{\ka}, \ov{\kab}, \ov{\om}, \ov{\omb}, \ov{\rho} $    denote       spacetime   averages   as defined in equation \eqref{def-S-average}.    Moreover  the modified Hawking mass and the quasi-local charge are nearly constant,  i.e. 
\bea
d\varpi =O(\ep^2), \qquad de=O(\ep^2)
\eea
Finally,
\beaa
e_3(r)=\frac r 2 \ov{\kab}+O(\ep), \qquad e_4(r)=\frac r 2 \ov{\ka}+O(\ep),\qquad e_\th(r)=0.
\eeaa
\end{definition}

We summarize the linear terms of the null structure equations, the Bianchi identities and the Maxwell equations for perturbations of Reissner-Nordstr{\"o}m in the following proposition. Remark that we used Maxwell equations to simplify the Bianchi identities.

\begin{proposition}\label{Bianchi} Modulo $O(\ep^2)$, the null structure equations, Bianchi identities \eqref{Bianchicomplete} and Maxwell's equations \eqref{Maxreduced} are
\beaa
e_3(\vthb)+\kab \, \vthb &=&2( e_\th(\xib) - e_\th(\Phi)  \xib) - 2\omb \, \vthb -2\aa\\
e_3(\kab)+\frac 12 \kab^2 +2 \omb \,\kab &=&2( e_\th\xib+ e_\th(\Phi) \xib)\\
e_4\vthb +\frac 12 \ka\, \vthb - 2\om \vthb &=&2( e_\th\etab  - e_\th(\Phi) \etab )-\frac 12 \kab \,\vth\\
e_4(\kab)+\frac 1 2 \ka\, \kab -2\om \kab &=& 2( e_\th \etab +   e_\th(\Phi) \etab) + 2\rho  \\
e_3 \ze +\frac 1  2 \kab (\ze+\eta) -2\omb(\ze-\eta)&=&\bb -2 e_\th(\omb)+2\om \xib+\frac 12 \ka \,\xib -\rhoF \bbF\\
e_4(\xib) - e_3(\etab)&=&\bb + 4\om \xib+\frac 12 \kab  (\etab-\eta)-\rhoF \bbF\\
e_4 \omb+e_3\om&=&\rho+ \rhoF^2 + 4\om\omb \\
e_\th(\vthb)+2 e_\th(\Phi)\vthb &=&-2\bb+   (e_\th(  \kab) -\ze \kab  )-2\rhoF \bbF
\eeaa
\beaa 
K&=&-\rho -\frac 1 4 \ka\,\kab+ \rhoF^2\\
e_3 (\a)+\frac 1 2 \kab (\a)&=&(e_\th(\b)- (e_\th \Phi)\b)+ 4\omb (\a) - \frac  3 2 \vth \rho +\rhoF (e_\th \bF-e_\th\Phi \bF-\vth \rhoF) \\
e_4(\b) +2\ka \b&=&( e_\th \a+2e_\th \Phi\a)- 2\om \b+ 3\xi \rho+\rhoF(e_4\bF+2\om\bF-2\xi\rhoF) \\
e_3 (\b)+ \kab \b &=&e_\th(\rho)+2\omb \b   + 3\eta \rho+\rhoF(e_\th \rhoF-\ka\bbF-\frac \kab 2 \bF)\\
e_4 \rho+\frac 3 2 \ka \rho&=&(e_\th(\b)  +(e_\th \Phi)\b)+\rhoF(-\ka \rhoF+e_{\th} \bF +e_\th(\Phi) \bF)\\
e_4 \bbF&=&-e_\th \rhoF-2 \etab\rhoF  +\left(-\frac 1 2 \ka+2\om\right) \bbF \\
e_3 \bF&=&e_\th \rhoF+2 \eta\rhoF  +\left(-\frac 1 2 \kab+2\omb\right) \bF, \\
e_3 \rhoF&=& - \kab \rhoF-\ddd_1\bbF , \\
e_4 \rhoF&=& - \ka \rhoF+\ddd_1\bF
\eeaa
and the other equations are obtained through the symmetry $e_3-e_4$.
\end{proposition}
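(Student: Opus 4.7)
The plan is to linearize each equation around the Reissner-Nordstr\"om background by applying Definition \ref{def:ep-perturbations} term-by-term, discarding every product of two $O(\ep)$ quantities. First I classify the factors appearing in the full nonlinear equations: the small quantities are $\xi, \xib, \vth, \vthb, \eta, \etab, \ze, \a, \aa, \b, \bb, \bF, \bbF$ together with the differences $\ka-\ov\ka$, $\kab-\ov\kab$, $\om-\ov\om$, $\omb-\ov\omb$, $\rho-\ov\rho$, $\rhoF-\ov\rhoF$, while the averages $\ov\ka, \ov\kab, \ov\om, \ov\omb, \ov\rho, \ov\rhoF$ are $O(1)$. A key observation from Definition \ref{def:ep-perturbations} is that $e_\th(r)=0$ forces $e_\th$ applied to any pure background function of $r$ to vanish, so that $e_\th(\rhoF)$, $e_\th(\ka)$, $e_\th(\rho)$ are themselves $O(\ep)$; hence their product with any other small quantity is automatically $O(\ep^2)$ and is dropped.

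For the null structure equations and Maxwell's equations the linearization is essentially mechanical: every nonlinear term that is manifestly a product of two small quantities, such as $\chih\c\bbF$, $\xi\,\xib$, $\eta\c\etab$, $\vth\vthb$, $\ze\eta$ in the null structure block, or $\chih\c\bbF$, $\vth\rhoF$ in the Maxwell block, is $O(\ep^2)$ and is discarded, leaving exactly the displayed linear couplings to the background.

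The delicate work is in the Bianchi identities, whose non-homogeneous sources $J_{\b\gamma\delta}$ and $\tilde J_{\sigma\gamma\delta\a\b}$ contain derivative bilinears in $\rhoF, \bF, \bbF$ that are not manifestly $O(\ep^2)$. For instance, the equation for $\a$ in \eqref{Bianchicomplete} carries $e_\th(2\rhoF\bF)$, $-\frac{1}{2} e_4(\rhoF^2)$ and $-2\chi\rhoF^2$, each with $O(1)$ or non-negligible $O(\ep)$ pieces. My strategy is to use the Maxwell equation $e_4\rhoF = -\ka\rhoF + \ddd_1\bF$ from \eqref{Maxreduced} to rewrite
\beaa
\frac{1}{2}e_4(\rhoF^2) \,=\, \rhoF\, e_4(\rhoF) \,=\, -\ka\rhoF^2 + \rhoF\ddd_1\bF,
\eeaa
so that, after combining with $-2\chi\rhoF^2$ and invoking the reduced identity $\ka-2\chi = -\vth$ (which follows from $\ka = \chi + e_4(\Phi)$ and $\vth = \chi - e_4(\Phi)$), one obtains $-\vth\rhoF^2-\rhoF\ddd_1\bF$. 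Pairing this with the surviving $2\rhoF e_\th(\bF)$ piece of $e_\th(2\rhoF\bF)$ — the other half $2e_\th(\rhoF)\bF$ being $O(\ep^2)$ by the observation above — collapses everything into the claimed $\rhoF(e_\th\bF - e_\th(\Phi)\bF - \vth\rhoF)$. The analogous manoeuvre, applying Maxwell to eliminate $e_3\bF$, $e_4\bbF$, $e_3\rhoF$, $e_4\rhoF$ whenever they multiply an $O(1)$ factor and using the $e_3$-counterpart $\kab = \chib + e_3(\Phi)$, $\vthb = \chib - e_3(\Phi)$, handles the reduction of the $\b$, $\bb$ and $\rho$ equations. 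The main obstacle I expect is precisely this bookkeeping in the Bianchi equations for $\a$ and $\rho$: the cancellations between the raw sources and the Maxwell-substituted pieces are not automatic, and one must verify that every surviving term in the linearized system is either linear in a small quantity or a product involving at least one factor from the background $\rho$ or $\rhoF$. Once these identities are in hand, the stated normal form of each equation follows by straightforward collection of terms.
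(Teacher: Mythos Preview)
Your proposal is correct and follows exactly the approach the paper indicates: the paper does not spell out a proof of this proposition but simply remarks, just before the statement, that ``we used Maxwell equations to simplify the Bianchi identities,'' which is precisely your strategy of substituting the reduced Maxwell relations for $e_4\rhoF$, $e_3\rhoF$, $e_3\bF$, $e_4\bbF$ into the electromagnetic source terms and then discarding products of two $O(\ep)$ quantities. Your worked example for the $\a$-equation, in particular the cancellation $(\ka-2\chi)\rhoF^2=-\vth\,\rhoF^2$ obtained from $\ka=\chi+e_4(\Phi)$ and $\vth=\chi-e_4(\Phi)$, is exactly the mechanism the paper relies on implicitly.
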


Our definition of  $O(\ep)$-Reissner-Nordstr{\"o}m perturbations   does not specify a particular  frame. In what follows we investigate  how 
  the main Ricci and  curvature    quantities   change    relative to  frame transformations, i.e linear transformations of the form  $e_\a'= \Om_\a\,^\b  e_\b $
   which take  null frames  into null frames. We will use the fact that a general frame transformation  can be decomposed 
 into the following  three elementary   types:
 \begin{itemize}
 \item   Transformations  which        fix  $e_3$,
  \bea
\label{SSMe:general.e3}
e_3'=e_3, \qquad 
e_\th'=e_\th +\frac 1 2 f e_3,\qquad 
e_4'=  e_4+ f e_\th +\frac 1 4 f^2 e_3
\eea
\item Transformations which fix $e_4$, 
 \bea
\label{SSMe:general.e4}
e_3'=(e_3 + \fb e_\th +\frac 1 4 \fb^2  e_4),\qquad 
e_\th'=e_\th  + \frac 1 2 \fb e_4,\qquad 
e_4'=  e_4 
\eea
 \item  Transformation which preserve the  directions of $e_3, e_4$, i.e  confomal transformations
 of the form  $e_3'= \la  e_3, \, e_4'=\la^{-1} e_4$.   
 \end{itemize}   
where  $f, \fb$ are     reduced $1$-forms and $\la$   is a reduced scalar. A transformation  consistent  with $O(\ep)$-perturbations of Reissner-Nordstrom spacetimes must have $f, \fb =O(\ep)$ and $a:=\log \la =O(\ep)$.   

 \begin{lemma}[Lemma 2.3.1. of \cite{stabilitySchwarzschild}]
 \label{lemma:SSMe:general.composite}
 A general composite transformation  $\mbox{ type}(3)\circ \mbox{ type}(1)\circ\mbox{ type}(2)$  has the form,
  \bea
\label{SSMe:general.composite}
\begin{split}
e_3'&=\la(e_3 + \fb e_\th +\frac 1 4 \fb^2  e_4)\\
e_\th'&=(1+\frac 1 2   f \fb) e_\th  + \frac 1 2  f e_3+\frac 1 2 (\fb+ \frac 1 4 f \fb^2) e_4\\
e_4'&= \la^{-1} \left( (1+\frac 12  f \fb +\frac{1}{16} f^2 \fb^2)  e_4 +(f+\frac 1 4 f^2 \fb) e_\th + \frac 1 4 f^2  e_3\right)
\end{split}
\eea
  \end{lemma}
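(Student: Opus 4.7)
The plan is to verify the formula by composing the three elementary transformations in the indicated order, starting from the rightmost (type (2)) and applying each in turn. Because the three elementary types are defined by fully explicit formulas, the whole argument reduces to a direct substitution; the only care needed is to track how each stage feeds into the next and to keep the $f, \fb$ polynomial expressions exact (not merely up to $O(\ep^2)$), since the lemma is stated as an identity.

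First I would apply the type (2) transformation with parameter $\fb$ to the original frame $(e_3, e_\th, e_4)$, obtaining an intermediate frame
$\tilde e_3 = e_3 + \fb e_\th + \tfrac14 \fb^2 e_4$, $\tilde e_\th = e_\th + \tfrac12 \fb e_4$, $\tilde e_4 = e_4$. Next I would apply the type (1) transformation with parameter $f$ to this intermediate frame; by definition of type (1), the new $e_3$ is unchanged, the new $e_\th$ is $\tilde e_\th + \tfrac12 f \tilde e_3$, and the new $e_4$ is $\tilde e_4 + f \tilde e_\th + \tfrac14 f^2 \tilde e_3$. Substituting the expressions for $\tilde e_3, \tilde e_\th, \tilde e_4$ from the first stage and grouping by $e_3, e_\th, e_4$ yields precisely the coefficients $\tfrac12 f$, $1 + \tfrac12 f\fb$, $\tfrac12(\fb + \tfrac14 f \fb^2)$ for the new $e_\th$, and the coefficients $\tfrac14 f^2$, $f + \tfrac14 f^2\fb$, $1 + \tfrac12 f\fb + \tfrac1{16} f^2\fb^2$ for the new $e_4$.

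Finally I would apply the conformal type (3) transformation, which by definition rescales $e_3 \mapsto \la e_3$ and $e_4 \mapsto \la^{-1} e_4$ while leaving $e_\th$ untouched. Multiplying the previous $\tilde{\tilde e}_3$ by $\la$ and $\tilde{\tilde e}_4$ by $\la^{-1}$, and copying $\tilde{\tilde e}_\th$ unchanged, produces the three formulas displayed in \eqref{SSMe:general.composite}. As a sanity check I would verify that the resulting frame is null, namely that $\g(e_3', e_4') = -2$ and $\g(e_\th', e_\th') = 1$, $\g(e_3', e_\th') = \g(e_4', e_\th') = 0$; since each elementary step preserves these relations, the composite must too, and this gives a cross-check on the polynomial coefficients.

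The step that deserves the most attention is the computation of $e_4'$, because both type (1) and type (2) introduce quadratic terms in their respective parameters, so the composition produces genuine cross terms of the form $f\fb$, $f^2\fb$, and $f^2\fb^2$. Keeping these terms straight (rather than dropping them as $O(\ep^2)$ or higher) is the only real bookkeeping issue; once it is done correctly, matching with the stated formula is immediate.
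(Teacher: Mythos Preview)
Your proposal is correct: the direct substitution you outline reproduces exactly the stated formulas, and your intermediate expressions for $\tilde{\tilde e}_\th$ and $\tilde{\tilde e}_4$ are accurate. The paper does not supply its own proof of this lemma (it is quoted from \cite{stabilitySchwarzschild}), so your verification by explicit composition is precisely what is needed and there is nothing further to compare.
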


\begin{proposition}
\label{prop:transformations1}
Under a general transformation of  type \eqref{SSMe:general.composite}  the curvature and electromagnetic components  transform as follows: 
\beaa
\a'&=& \a +O(\ep^2), \qquad \aa'= \aa +O(\ep^2)\\
\b'&=&\la^{-1}\left(\b+\frac 3 2 \rho f \right)   +O(\ep^2),  \qquad \bb'=\la   \left(\bb+\frac 3 2 \rho \fb  \right)   +O(\ep^2), \\
 \bF'&=&\la^{-1}\left(\bF +f \rhoF\right) +O(\ep^2), \qquad \bbF'=\la \left( \bbF - \fb \rhoF\right) +O(\ep^2), \\
  \rho'&=& \rho  + O(\ep^2) , \qquad \rhoF'=  \rhoF+   O(\ep^2)
 \eeaa 
\end{proposition}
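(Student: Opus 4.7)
\textbf{Proof proposal for Proposition \ref{prop:transformations1}.}

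The plan is to substitute the linearised composite transformation \eqref{SSMe:general.composite} into the definitions \eqref{def3} of the null curvature components and the corresponding definitions of $\bF, \bbF, \rhoF$, and then read off the first order corrections using the multilinearity of $\W$ and $\F$. Since $f,\fb,a=\log\la=O(\ep)$ and since $\a,\aa,\b,\bb,\bF,\bbF$ vanish on the Reissner-Nordstr{\"o}m background (Proposition \ref{Reiss}), every perturbed component is at least $O(\ep)$, so any product of such a component with $f,\fb$ or $a$ is automatically $O(\ep^2)$. Therefore only the non-vanishing background quantities $\rho$ and $\rhoF$ can contribute to the linear corrections; the role of the computation is to identify exactly which $\W$ and $\F$ contractions produce these backgrounds.

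Concretely, I would first drop all quadratic terms in \eqref{SSMe:general.composite} to obtain the working identities
\beaa
e_3'=\la(e_3+\fb e_\th)+O(\ep^2),\qquad
e_\th'=e_\th+\tfrac12 f e_3+\tfrac12\fb e_4+O(\ep^2),\qquad
e_4'=\la^{-1}(e_4+f e_\th)+O(\ep^2).
\eeaa
Plugging these into $\a'=\W(e_\th',e_4',e_\th',e_4')$ and $\aa'=\W(e_\th',e_3',e_\th',e_3')$ and expanding multilinearly, every correction to the ``diagonal'' term $\W(e_\th,e_4,e_\th,e_4)=\a$ involves a $\W$-contraction with an odd number of $e_\th$ slots (yielding $\b$ or $\bb$) or with two $e_3$ or two $e_4$ slots (vanishing by antisymmetry), hence is $O(\ep^2)$. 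Combined with $\la^{\pm 2}=1+O(\ep)$ and $\a=O(\ep)$, this gives $\a'=\a+O(\ep^2)$, and analogously for $\aa$. For $\rho'=\tfrac14\W(e_3',e_4',e_3',e_4')$ the factors $\la\la^{-1}$ cancel, and each linear correction pulls out either $\W(e_\th,e_4,e_3,e_4)=2\b$ or $\W(e_3,e_\th,e_3,e_4)=2\bb$, both $O(\ep)$, so $\rho'=\rho+O(\ep^2)$. The same expansion applied to $\rhoF'=\tfrac12\F(e_3',e_4')$ produces only $\F(e_3,e_\th)=-\bbF$ and $\F(e_\th,e_4)=\bF$ as linear corrections, giving $\rhoF'=\rhoF+O(\ep^2)$.

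The interesting transformations are those of $\b,\bb,\bF,\bbF$. For $\bF'=\F(e_\th',e_4')$, expanding gives
\beaa
\bF'=\la^{-1}\big[\F(e_\th,e_4)+\tfrac12 f\,\F(e_3,e_4)+\tfrac12\fb\,\F(e_4,e_4)+f\,\F(e_\th,e_\th)\big]+O(\ep^2)
=\la^{-1}(\bF+f\rhoF)+O(\ep^2),
\eeaa
since $\F(e_3,e_4)=2\rhoF$ and the last two terms vanish. The computation for $\bbF'$ is identical with $3\leftrightarrow 4$ and $f\leftrightarrow-\fb$ (the sign appears because $\F(e_3,e_\th)=-\bbF$). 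For $\b'=\tfrac12\W(e_\th',e_4',e_3',e_4')$ the overall scaling is $\la^{-1}\la\la^{-1}=\la^{-1}$, and the linear expansion yields three relevant background contractions: from the slot $e_\th'$, the term $\tfrac12 f\,\W(e_3,e_4,e_3,e_4)=\tfrac12 f\cdot 4\rho=2f\rho$; from the last $e_4'$ slot, the term $f\,\W(e_\th,e_4,e_3,e_\th)=f\rho$ (using the algebraic identity $\W_{A3B4}=-\rho\,\de_{AB}+\rhod\in_{AB}$ together with the antisymmetry/pair-swap symmetries of $\W$, and $\rhod=0$ by \eqref{polarization}); and the $\fb$-contribution vanishes because $\W(e_4,e_4,e_3,e_4)=0$ and $\fb\,\W(e_\th,e_4,e_\th,e_4)=\fb\,\a=O(\ep^2)$. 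Summing $2\b+2f\rho+f\rho$ and multiplying by $\tfrac12\la^{-1}$ produces exactly $\b'=\la^{-1}(\b+\tfrac32\rho f)+O(\ep^2)$. The computation for $\bb'$ follows by the $e_3\leftrightarrow e_4$, $f\leftrightarrow\fb$ symmetry.

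The main obstacle I foresee is the algebraic bookkeeping of the Weyl tensor identities needed to identify contractions like $\W(e_\th,e_4,e_3,e_\th)$ with $\rho$; one must carefully invoke the antisymmetries, the pair-swap symmetry, and the decomposition $\W_{A3B4}=-\rho\,\de_{AB}+\rhod\in_{AB}$, and simultaneously keep track of the polarization constraints \eqref{polarization}. Once this identification is pinned down, the coefficient $\tfrac32$ in the $\b'$ and $\bb'$ formulas comes out as the sum $1+\tfrac12$ of the $e_\th$-slot and $e_4$-slot contributions, and the rest of the proposition is routine multilinear expansion.
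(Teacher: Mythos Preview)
Your proposal is correct and follows exactly the approach the paper indicates: the paper's own proof consists of the single sentence ``Straightforward calculations using the definitions \eqref{reduced} and Lemma \ref{lemma:SSMe:general.composite}'' together with a reference to \cite{stabilitySchwarzschild}, and what you have written is precisely those straightforward calculations carried out in full. Your identification of the Weyl contraction $\W_{\th 4 3 \th}=\rho$ via the standard identity $\W_{A3B4}=-\rho\,\de_{AB}+\rhod\in_{AB}$ (with $\rhod=0$ from \eqref{polarization}) is the only nontrivial algebraic step, and you handle it correctly; the resulting coefficient $\tfrac32=\tfrac12\cdot(2+1)$ in the $\b'$ formula is exactly right.
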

\begin{proof} Straighforward calculations using the definitions \eqref{reduced} and Lemma \ref{lemma:SSMe:general.composite}. See also Proposition 2.3.4 of \cite{stabilitySchwarzschild}.
\end{proof}
Notice that the only quantities which vanish in the background and  which are  $O(\ep^2)$  invariant 
  are  the extreme curvature  components   $\a, \aa$. These components verify the Teukolsky equation, which is the first step in the deriving the Regge-Wheeler type equation for the curvature term $\qf$.

\begin{remark}\label{betanotinvariant}
As a consequence of Proposition \ref{prop:transformations1}, the extreme components of the electromagnetic tensor $\bF, \bbF$ are not $O(\ep^2)$ invariant. Moreover, they transform under change of frame similarly to the $\b, \bb$ component of the curvature. This motivates the notation.
 \end{remark}

 \subsection{The new invariant quantity $\ff$}\label{defofff}

To consider the non-linear electromagnetic perturbation of Reissner-Nordstr{\"o}m we need an equation for a $O(\ep^2)$-invariant quantity. By Remark \ref{betanotinvariant}, we can't make direct use of the spin $\pm1$ Teukolsky equation verified by $\bF$ and $\bbF$ (see for example \cite{Federico}), as compared to the electromagnetic perturbation of Schwarzschild, treated in Section \ref{Sch}.  
 We will make use instead of a new quantity for the electromagnetic part of the curvature. From the Bianchi identity for $e_3\a$, we identify the following quantity $\ff$ defined by
\beaa
\ff:&=&\dds_2\bF+\vth\rhoF= -e_\th \bF+e_\th\Phi \bF+\vth \rhoF
\eeaa
This new quantity turns out to play a fundamental role in the equations governing the coupled gravitational and electromagnetic perturbations. Indeed, it appears in the Teukolsky equation for the extreme curvature component $\a$ in electrovacuum. Moreover, there exists a Chandrasekhar-type transformation which transforms $\ff$ into $\qf^{\F}$, and $\qf^{\F}$ verifies a Regge-Wheeler type equation coupled with the curvature as in \eqref{schemesystem}. And most importantly, $\ff$ is a $O(\ep^2)$ invariant quantity.

\begin{lemma} The quantity $\ff$ is $O(\ep^2)$ invariant.
\end{lemma}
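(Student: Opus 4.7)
My plan is to substitute the transformation formulas of Proposition \ref{prop:transformations1}, together with an auxiliary transformation rule for $\vth$, into $\ff=\dds_2\bF+\vth\rhoF$ and check that the leading-order shifts cancel. First I reduce to a pure type-(1) transformation. Since $\la=1+O(\ep)$ and $\bF,\vth,\dds_2\bF=O(\ep)$, the conformal type-(3) rescaling contributes only $O(\ep^2)$ to $\ff$. Under a type-(2) transformation ($e_4$ fixed, $e_\th'=e_\th+\frac 1 2 \fb e_4$), Proposition \ref{prop:transformations1} gives $\bF'=\bF+O(\ep^2)$; the frame shift $\frac 1 2 \fb e_4$ applied to the $O(\ep)$ quantity $\bF'$ produces $O(\ep^2)$; and analogous inspection shows $\vth,\rhoF$ shift only at $O(\ep^2)$. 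Thus it suffices to track carefully the type-(1) piece $e_3'=e_3,\ e_\th'=e_\th+\frac 1 2 f e_3,\ e_4'=e_4+f e_\th+\frac 1 4 f^2 e_3$.

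Second, I collect the linearized transformation of each ingredient under type-(1). Proposition \ref{prop:transformations1} gives $\bF'=\bF+f\rhoF+O(\ep^2)$ and $\rhoF'=\rhoF+O(\ep^2)$. The new input is the transformation of $\vth$, which I derive by expanding $\chi'_{\th\th}=\g(\D_{e_\th'}e_4',e_\th')$ to first order in $f$, using the vanishing of $\eta,\ze,\xi,\chih$ in the Reissner-Nordstr\"om background, and extracting the reduced traceless piece in the polarized setting (where $f_\vphi=0$). The outcome is
\beaa
\vth'\;=\;\vth-\dds_2 f+O(\ep^2).
\eeaa

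Third, I evaluate $\dds_2'\bF'=-e_\th'\bF'+e_\th'(\Phi)\bF'$. The correction $\frac 1 2 f e_3$ inside $e_\th'$ applied to $\bF'=O(\ep)$ produces only $O(\ep^2)$, since $e_3\bF'=O(\ep)$; likewise $e_\th'(\Phi)-e_\th(\Phi)=\frac 1 2 f\,e_3(\Phi)=O(\ep)$ multiplied by $\bF'=O(\ep)$ is $O(\ep^2)$. Therefore
\beaa
\dds_2'\bF'\;=\;\dds_2(\bF+f\rhoF)+O(\ep^2)\;=\;\dds_2\bF+\rhoF\,\dds_2 f-f\,e_\th(\rhoF)+O(\ep^2),
\eeaa
and $f\,e_\th(\rhoF)=O(\ep^2)$ by Definition \ref{def:ep-perturbations}, as $\overline{\rhoF}=e/r^2$ satisfies $e_\th(\overline{\rhoF})=O(\ep^2)$ (using $de=O(\ep^2)$ and $e_\th(r)=0$) while $\check\rhoF=O(\ep)$. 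Combining with $\vth'\rhoF'=(\vth-\dds_2 f)\rhoF+O(\ep^2)$, the two $\rhoF\,\dds_2 f$ contributions cancel and $\ff'=\ff+O(\ep^2)$. The main obstacle is the derivation of the formula for $\vth'$; this is the only step not recorded in Proposition \ref{prop:transformations1} and requires a careful expansion of $\chi$ in the polarized reduction. Once it is in hand, the cancellation is essentially the structural identity built into the definition of $\ff$ from the $e_3\a$ Bianchi equation.
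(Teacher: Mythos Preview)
Your proof is correct and follows essentially the same route as the paper. The paper applies the general composite transformation of Lemma~\ref{lemma:SSMe:general.composite} directly, quoting $\vth'=\vth+e_\th(f)-f\,e_\th(\Phi)+O(\ep^2)$ (which is your $\vth'=\vth-\dds_2 f+O(\ep^2)$) and $e_\th\rhoF=O(\ep)$ ``as a consequence of Maxwell equations,'' then performs the same cancellation you found; your explicit reduction to the type-(1) piece and your alternative justification of $e_\th\rhoF=O(\ep)$ via $\overline{\rhoF}=e/r^2$ are minor elaborations of the same argument rather than a different approach.
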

\begin{proof} Using Lemma \ref{lemma:SSMe:general.composite} together with the definition of $\vth$, we have  that  $\vth'=\vth +  e_\th(f)  - f e_\th(\Phi) +O(\ep^2)$. Using Proposition \ref{prop:transformations1} and that $e_\th\rhoF=O(\ep)$ as a consequence of Maxwell equations, we have
\beaa
\ff'&=& -e_\th' \bF'+e_\th'\Phi \bF'+\vth' \rhoF'=\\
&=&-e_\th (\bF +f \rhoF)+e_\th\Phi (\bF +f \rhoF)+(\vth +  e_\th(f)  - f e_\th(\Phi)) \rhoF+O(\ep^2)=\\
&=&\ff-e_\th (f) \rhoF+e_\th\Phi (f \rhoF)+(e_\th(f)  - f e_\th(\Phi) ) \rhoF+O(\ep^2)=\ff+O(\ep^2)
\eeaa
therefore $\ff$ is $O(\ep^2)$ invariant.
\end{proof}

\begin{remark}\label{remChandra} To the knowledge of the author, the quantity $\ff$ seems to not have been noticed or used so far in the literature. One main reason could be found in the choice of gauge of Chandrasekhar in \cite{Chandra}. In treating the gravitational and electromagnetic perturbation of Reissner-Nordstr{\"o}m, the author picks the phantom gauge $\phi_0=0$, corresponding to $\bF=\bbF=0$. In fact, from Proposition \ref{prop:transformations1}, we can see that the choice of gauge with $f=-\rhoF^{-1} \bF$ and $\bar{f}=-\rhoF^{-1} \bbF$ gives $\bF, \bbF=O(\ep^2)$ in the non-linear setting. This choice of gauge would reduce $\ff$ to $\vth \rhoF$.
\end{remark}

Applying Maxwell's equations and null structure equations, and using that $[e_\th, e_3]=\frac 1 2  \kab  e_\th +O(\ep)$ and $[e_\th, e_4]=\frac  1 2 \ka e_\th +O(\ep) $ we write
\bea\label{e3f}
\begin{split}
e_3(\ff)&=& -(e_\th e_3 \bF-e_\th\Phi e_3\bF)+\frac 1 2 \kab (e_\th \bF- e_\th\Phi \bF)-\rhoF\Big((\frac 32 \kab-2\omb)\, \vth  -2( e_\th\eta  - e_\th(\Phi) \eta )+\frac 12 \ka \,\vthb\Big) , \\
e_4(\ff)&=&-(e_\th e_4 \bF-e_\th\Phi e_4\bF)+\frac 1 2\ka (e_\th \bF- e_\th\Phi \bF)-\rhoF \Big(2\ka \, \vth -2( e_\th(\xi) - e_\th(\Phi)  \xi) + 2\om \, \vth +2\a\Big) 
\end{split}
\eea
to be used later. 

\section{Teukolsky equations for the $O(\ep^2)$-invariant quantities $\a$ and $\ff$}
Let $f$ be a $\Z$-invariant scalar function. Then, by definition $\square_\g$ for a polarized metric $\g$, we have
\bea \label{squaref}
\square_\g f &=& -e_4(e_3( f))+e_\th(e_\th( f)) -\frac{1}{2}\kab e_4(f) +\left(-\frac{1}{2}\ka+2\om\right) e_3(f)+e_\th(\Phi)e_\th( f) +2\etab e_\th(f).
\eea
as in Lemma 2.4.1. of \cite{stabilitySchwarzschild}. Using this formula, we derive the wave equations for the invariant quantities $\a$ and $\ff$.

\begin{proposition}\label{squarea}[Teukolsky equation for $\a$] The $O(\ep^2)$ invariant quantity $\a$ verifies the following wave equation:
\beaa
\square_\g\a &=& -4\omb e_4(\a) +(2\ka+4\om)e_3 (\a)+\Big(\frac 1 2 \ka\kab+2\om \kab- 4\rho+4\rhoF^2-4e_4\omb -10\ka\omb-8\om\omb+ 4e_\th(\Phi)^2 \Big)\a+\\
&&+\rhoF\Big(2 e_4(\ff)+( 2\ka+4\om) \ff \Big)  +O(\ep^2)   
\eeaa
\end{proposition}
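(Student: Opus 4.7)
My approach is to substitute the formula \eqref{squaref} for $\square_\g$ acting on a $\Z$-invariant scalar (which $\a$ is, by definition) and then systematically reduce each of the resulting derivative terms using the Bianchi identities, Maxwell equations, and null structure equations collected in Proposition \ref{Bianchi}. Applying \eqref{squaref} to $f=\a$ leaves five principal contributions: $-e_4 e_3 \a$, $e_\th e_\th \a$, $-\tfrac 12 \kab\, e_4\a$, $(-\tfrac 12 \ka+2\om)\, e_3\a$, and first-order angular terms $(e_\th\Phi+2\etab)\, e_\th\a$. The $e_\th\a$ terms are of order $O(\ep^2)$ since $\etab=O(\ep)$ and $e_\th\a$ itself is $O(\ep)$, so they drop out. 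The remaining task is to replace the second-derivative terms by expressions involving $\a$ and $\ff$ modulo $O(\ep^2)$.

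First I would use the Bianchi identity for $e_3\a$ from Proposition \ref{Bianchi}, rewritten using the definition of $\ff$ in Section \ref{defofff} as
$$e_3\a = -\tfrac 12 \kab \a + 4\omb \a + (e_\th\b - e_\th\Phi\, \b) - \tfrac 32 \vth\rho - \rhoF\, \ff.$$
To compute $-e_4 e_3\a$ I would then differentiate this relation. Each $e_4$ that lands on a background coefficient is evaluated explicitly, while each $e_4$ that lands on a linear perturbative quantity is reduced via the corresponding evolution equation: for $e_4\b$ I would use the Bianchi identity to replace it by $e_\th\a + 2e_\th\Phi\,\a$ plus linear corrections (this is where the second $e_\th$-derivative of $\a$ will emerge, combining with $e_\th e_\th\a$); for $e_4\vth$ I use the null structure equation which contributes the $-2\rho\a$ term; for $e_4\rho$ I use its Bianchi identity; and for $e_4\rhoF$ and $e_4\bF$ I use the Maxwell equations \eqref{Maxreduced}. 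The commutator $[e_4,e_\th] = -\tfrac 12 \ka\, e_\th + O(\ep)$ is needed to move $e_\th$ past $e_4$ acting on $\b$ and on $\bF$. The second principal term $e_\th e_\th\a$ is treated similarly by first solving the $e_4\b$ Bianchi identity for $e_\th\a$ and then applying $e_\th$.

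The cleanest bookkeeping step is to group the electromagnetic source terms. When expanding $-e_4(\rhoF\,\ff)$ and the analogous derivatives appearing from $e_\th e_\th \a$, the contributions organize themselves into $\rhoF\,(2e_4\ff + (2\ka+4\om)\ff)$ precisely because $\ff$ was designed (via the computation \eqref{e3f}) so that its transport along $e_4$ absorbs exactly the combinations $e_4 e_\th\bF$, $e_4(\vth\rhoF)$ and related quantities that arise. The remaining purely curvature pieces (those proportional to $\a$) must collect into the coefficient
$\tfrac12\ka\kab + 2\om\kab - 4\rho + 4\rhoF^2 - 4e_4\omb - 10\ka\omb - 8\om\omb + 4(e_\th\Phi)^2$;
this uses the Gauss equation $K = -\rho - \tfrac 14 \ka\kab + \rhoF^2$ to convert $e_\th e_\th\Phi$-type contributions (arising from commuting $e_\th$ with itself on $e_\th\Phi\,\a$) into the above form, and the transport equation for $\om+\omb$ to handle $e_4\omb$.

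The main obstacle is algebraic: many cross-terms of schematic form (linear quantity)$\cdot$(linear quantity) appear transiently and must be shown to be $O(\ep^2)$, while genuinely $O(\ep)$ remainders like $\rhoF\cdot(e_4\bF)$, $\rhoF\cdot(e_\th\bF)$, $\rho\cdot\vth$, and $\rhoF^2\cdot\vth$ must be reorganized to assemble the two clean building blocks $\a$ and $\ff$. The guiding identity throughout is the expression for $e_4\ff$ recorded in \eqref{e3f}, which is what forces the electromagnetic source to take the form $\rhoF(2 e_4\ff + (2\ka+4\om)\ff)$ rather than a less structured combination of $\bF$, $\rhoF$ and $\vth$.
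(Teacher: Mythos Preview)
Your overall approach matches the paper's: both compute $-e_4e_3\a$ and $e_\th e_\th\a$ separately via the Bianchi identities for $e_3\a$ and $e_4\b$, invoke the commutator $[e_\th,e_4]$, substitute the null structure and Maxwell equations, and then repackage the electromagnetic remainder using the formula \eqref{e3f} for $e_4\ff$.

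There is, however, one concrete error in your bookkeeping. You write that the ``first-order angular terms $(e_\th\Phi+2\etab)\, e_\th\a$'' are $O(\ep^2)$ and drop out. This is only half true: $2\etab\, e_\th\a$ is indeed $O(\ep^2)$ since both factors are $O(\ep)$, but $e_\th\Phi$ is a background quantity of order $O(1)$ (in Reissner--Nordstr\"om it is essentially $\cot\th$), so $e_\th\Phi\, e_\th\a$ is $O(\ep)$ and must be retained. In the paper's calculation this term is kept, and it cancels against contributions of the same form that arise inside $-e_4e_3\a$ (from $-e_\th\Phi\, e_4\b$ after substituting the $e_4\b$ equation) and inside $e_\th e_\th\a$ (from differentiating $-2e_\th\Phi\,\a$). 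If you discard $e_\th\Phi\, e_\th\a$ at the outset, that cancellation is incomplete and you will be left with a spurious $O(\ep)$ residual in the final formula. Once you restore this term, the rest of your plan goes through exactly as in the paper.
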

\begin{proof}
Using the Bianchi identity
\beaa
e_3 (\a)+\frac 1 2 \kab (\a)&=&(e_\th(\b)- (e_\th \Phi)\b)+ 4\omb (\a) - \frac  3 2 \vth \rho -\rhoF \ff
\eeaa
 and applying Bianchi identities, null structure equations and Maxwell's equations as in Proposition \ref{Bianchi} and formulas \eqref{e3f}, we have
\beaa
e_4(e_3(\a)) &=&  e_4(e_\th( \b)) -  e_\th(\Phi) e_4(\b)  -e_4(e_\th(\Phi)) \b  -\left(\frac{\kab}{2} \right)e_4(\a) -\left(\frac{e_4(\kab)}{2}\right)\a - \frac 3 2 \vth e_4(\rho) - \frac 3 2 e_4(\vth) \rho +4e_4\omb\a +4\omb e_4\a+\\
&&-e_4\rhoF \ff-\rhoF e_4\ff\\
&=&  e_4(e_\th( \b)) -  e_\th(\Phi)\left(e_\th(\a)+2e_\th(\Phi)\a - 2(\ka+\om) \b +3\xi\rho+\rhoF(e_4\bF+2\om\bF-2\xi\rhoF)\right)+\\
&& - (-\frac \ka 2 e_\th\Phi)\b  -\left(\frac{\kab}{2}\right)e_4(\a) -\left(\frac{e_4(\kab)}{2}\right)\a- \frac 3 2 \vth e_4(\rho) - \frac 3 2 e_4(\vth) \rho+4e_4\omb\a +4\omb e_4\a+\\
&&+\rhoF \left(\ka \ff+(e_\th e_4 \bF-e_\th\Phi e_4\bF)-\frac 1 2\ka (e_\th \bF- e_\th\Phi \bF)+\rhoF (2\ka \, \vth -2( e_\th(\xi) - e_\th(\Phi)  \xi) + 2\om \, \vth +2\a) \right) \\
&=&  e_4(e_\th( \b)) -  e_\th(\Phi)(e_\th(\a)+2e_\th(\Phi)\a) +2e_\th(\Phi)(\ka+\om) \b  -3  e_\th(\Phi)\xi\rho +\\
&&  + \frac \ka 2 e_\th(\Phi)\b  -\left(\frac{\kab}{2}\right)e_4(\a) -\left(\frac{e_4(\kab)}{2} \right)\a - \frac 3 2 \vth e_4(\rho) - \frac 3 2 e_4(\vth) \rho +4e_4\omb\a +4\omb e_4\a+\\
&&+\rhoF\Big((e_\th e_4 \bF-2e_\th\Phi e_4\bF)-\frac 3 2\ka e_\th \bF+(\frac 3 2\ka-2\om) e_\th\Phi \bF+\\
&&+\rhoF (3\ka \, \vth -2( e_\th(\xi) - 2e_\th(\Phi)  \xi) + 2\om \, \vth +2\a) \Big) 
\eeaa
Using that $e_\th (\ka), e_\th(\om), e_\th(\rho), e_\th (\rhoF)=O(\ep)$, we have
\beaa
e_\th(e_\th(\a)) &=& e_{\th}(e_4(\b) +2(\ka+\om) \b -2e_\th(\Phi)\a -3\xi\rho-\rhoF e_4\bF-2\om\rhoF \bF +2\xi \rhoF^2)\\
&=&  e_\th(e_4(\b)) +2(\ka+\om) e_\th(\b)   - 2e_\th(\Phi)e_\th(\a) - 2e_\th(e_\th(\Phi))\a-3e_\th(\xi)\rho+\rhoF( - e_\th e_4\bF-2\om e_\th \bF +2e_\th \xi \rhoF)\\
&=&   e_\th(e_4(\b))+2(\ka+\om) \Big(e_\th(\Phi)\b + e_3 (\a) +\left(\frac{\kab}{2}\right)\a + \frac 3 2 \vth\,\rho-4\omb\a\Big)  - 2e_\th(\Phi)e_\th(\a)+\\
&& - 2(\rho-\rhoF^2-(e_\th\Phi)^2+\frac 1 4 \ka\kab)\a -3e_\th(\xi)\rho+\\
&&+\rhoF\left(- e_\th e_4\bF+(-2\ka-4\om) e_\th \bF +(2\ka+2\om) e_\th \Phi \bF+\rhoF(2\ka \vth+2\om\vth+2e_\th \xi)\right)
 \eeaa
Using \eqref{squaref}, we have
\beaa
\square_\g\a &=&  -e_4(e_3(\a))+e_\th(e_\th(\a)) -\frac{1}{2}\kab e_4(\a) +\left(-\frac{1}{2}\ka+2\om\right) e_3(\a)+e_\th(\Phi)e_\th(\a)\\
&=& [e_\th, e_4](\b)  +  2e_\th(\Phi)^2\a  +3  e_\th(\Phi)\xi\rho  - \frac \ka 2 e_\th(\Phi)\b  +\left(\frac{\kab}{2}\right)e_4(\a) +\left(\frac{e_4(\kab)}{2} \right)\a + \frac 3 2 \vth e_4(\rho) + \frac 3 2 e_4(\vth) \rho -4e_4\omb\a -4\omb e_4\a+\\
&& +2(\ka+\om) e_3 (\a) +2(\ka+\om) \left(\frac{\kab}{2} \right)\a +3(\ka+\om) \vth\,\rho -8(\ka\omb+\om\omb)\a- 2(\rho-\rhoF^2-(e_\th\Phi)^2+\frac 1 4 \ka\kab)\a -3e_\th(\xi)\rho+\\
&&-\frac{1}{2}\kab e_4(\a) +\left(-\frac{1}{2}\ka+2\om\right) e_3(\a)+\\
&&+\rhoF\Bigg(-2(e_\th e_4 \bF-e_\th\Phi e_4\bF)-(\frac 1 2\ka +4\om)(e_\th \bF- e_\th\Phi \bF)-\rhoF (\ka \, \vth -2( 2e_\th(\xi) - 2e_\th(\Phi)  \xi) +2\a)\Bigg)
\eeaa
Using that modulo $O(\ep^2)$,
\beaa
[e_\th, e_4](\b) &=& \frac \ka 2 e_\th(\b)= \frac \ka 2 \left(  e_\th(\Phi) \b  +   e_3 (\a) +\left(\frac{\kab}{2}\right)\a -4\omb\a + \frac 3 2 \vth\,\rho+\rhoF(\vth\rhoF -e_\th \bF+ e_\th \Phi \bF) \right)
\eeaa 
we have
\beaa
\square_\g\a &=& 3  e_\th(\Phi)\xi\rho -3e_\th(\xi)\rho + \frac 3 2 \vth e_4(\rho) + \frac 3 2 e_4(\vth) \rho +3(\ka+\om) \vth\,\rho+\frac 3 4 \ka\vth\rho\\
&&  +\left(-4\omb \right)e_4(\a) +(2\ka+4\om)e_3 (\a)+\\
&&+\Bigg(\frac{e_4(\kab)}{2}+  4e_\th(\Phi)^2-4e_4\omb+\frac 3 4 \ka\kab+\om \kab -10\ka\omb-8\om\omb- 2\rho+2\rhoF^2 \Bigg)\a+\\
&&+\rhoF\Bigg(-2(e_\th e_4 \bF-e_\th\Phi e_4\bF)-(\ka +4\om)(e_\th \bF- e_\th\Phi \bF)-\rhoF (\frac 1 2 \ka \, \vth -2( 2e_\th(\xi) - 2e_\th(\Phi)  \xi) +2\a) \Bigg)
\eeaa
Using equations for $e_4(\vth), e_4 \rho, e_4(\kab)$ we infer 
\beaa
\square_\g\a &=& -4\omb e_4(\a) +(2\ka+4\om)e_3 (\a)+\left( 4e_\th(\Phi)^2-4e_4\omb+\frac 1 2 \ka\kab+2\om \kab -10\ka\omb-8\om\omb- 4\rho \right)\a+\\
&&+\rhoF\Bigg(-2(e_\th e_4 \bF-e_\th\Phi e_4\bF)-(\ka +4\om)(e_\th \bF- e_\th\Phi \bF)-\rhoF (2 \ka \, \vth -2( 2e_\th(\xi) - 2e_\th(\Phi)  \xi) \Bigg)
\eeaa
Using \eqref{e3f}, we can write the term multiplying $\rhoF$ on the right hand side as $(2\ka +4\om)\ff+2 e_4\ff+4\rhoF\a$ giving therefore the desired expression.
\end{proof}

\begin{proposition}\label{squareff}[Teukolsky equation for $\ff$] The $O(\ep^2)$-invariant quantity $\ff$ verifies the following wave equation:
\beaa
\square_\g \ff&=&\left(\frac 3 2 \ka +2\om\right) e_3 \ff+\left(\frac 1 2 \kab -2\omb\right) e_4\ff+\left(\frac 1 2 \ka\kab+2\om\kab-2\rho-4\ka\omb-2e_4\omb +4e_\th\Phi^2\right) \ff+\\
&&+ \rhoF \Big (-2e_3 (\a)-(2\kab-8\omb)\a \Big) +O(\ep^2)
\eeaa
\end{proposition}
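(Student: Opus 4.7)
The strategy parallels Proposition \ref{squarea}. Starting from the identity \eqref{squaref} applied to the scalar $\ff$,
\beaa
\square_\g\ff &=& -e_4(e_3\ff)+e_\th(e_\th\ff) -\frac{1}{2}\kab\, e_4(\ff) +\Big(-\frac{1}{2}\ka+2\om\Big) e_3(\ff)+e_\th(\Phi)e_\th(\ff) +O(\ep^2),
\eeaa
the task reduces to computing $e_4(e_3\ff)$ and $e_\th(e_\th\ff)$ modulo $O(\ep^2)$ and reorganising the outcome.

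First I would use \eqref{e3f} to expand $e_3(\ff)$, then apply $e_4$ and rewrite the resulting mixed derivatives $e_4 e_\th e_3\bF$, $e_4 e_\th \bF$, $e_4 \vth$, $e_4 \rho$ using the commutator $[e_\th,e_4]=\frac{1}{2}\ka e_\th +O(\ep)$, $[e_\th,e_3]=\frac{1}{2}\kab e_\th+O(\ep)$, together with the null structure equations, Maxwell's equations, and the Bianchi identities from Proposition \ref{Bianchi}. The point is that after commuting all $e_\th$'s to the outside and substituting transport equations, every higher-order tangential derivative of $\bF$ and $\vth$ will recombine (up to $O(\ep^2)$) into the quantity $\ff=-e_\th\bF+e_\th(\Phi)\bF+\vth\rhoF$ or its tangential derivative, while the remaining purely curvature contributions will collect into terms involving $\a$ and $e_3\a$.

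Next I would compute $e_\th(e_\th\ff)$ in the same spirit: differentiating $\ff=-e_\th\bF+e_\th(\Phi)\bF+\vth\rhoF$ one more time and commuting, one meets $e_\th e_\th \bF$ and $e_\th\vth$. Since in the background $e_\th(\ka),e_\th(\kab),e_\th(\om),e_\th(\omb),e_\th(\rho),e_\th(\rhoF)=O(\ep)$, most coefficient derivatives drop out. The crucial ingredient at this stage is the Bianchi identity for $e_3\a$ from Proposition \ref{Bianchi}, which can be rearranged as $\rhoF\,\ff=-e_3\a-\frac{1}{2}\kab\,\a+4\omb\,\a+e_\th(\b)-e_\th(\Phi)\b-\frac{3}{2}\vth\rho$; this is what is used to transmute the $e_\th(e_\th\bF)$ leftover into an $e_3\a$ contribution against a multiplicative $\rhoF$.

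Assembling the two pieces and collecting by powers of the background quantities, one reads off the coefficient of $e_3\ff$ as $\frac{3}{2}\ka+2\om$ and that of $e_4\ff$ as $\frac{1}{2}\kab-2\omb$, matching the claim; the coefficient of $\ff$ itself is produced by combining $\frac{1}{2}\kab\cdot\frac{1}{2}\ka$, the $e_4(\kab)$ contribution via the null structure equation $e_4\kab+\frac{1}{2}\ka\kab-2\om\kab=2\rho+\cdots$, the commutator terms producing $e_\th\Phi^2$, and the $\rho,\om,\omb$ cross-terms. The $\rhoF\,(-2e_3\a-(2\kab-8\omb)\a)$ block on the right emerges as the last step from the Bianchi substitution described above.

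The main obstacle is bookkeeping: each of $e_3\ff$ and $e_4\ff$ from \eqref{e3f} contains two-derivative terms in $\bF$, so $e_4 e_3\ff$ and $e_\th e_\th \ff$ generate third-order derivatives that must be reduced by the Maxwell and structure equations without losing any $\rhoF$-dependent coupling; a sign or commutator error here would destroy the clean $-2\rhoF\, e_3\a$ coupling that is essential for the subsequent derivation of the Regge--Wheeler system \eqref{schemesystem}.
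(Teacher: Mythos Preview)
Your plan is sound and would work, but the paper organises the computation differently. Rather than applying \eqref{squaref} directly to $\ff$ and then fighting through $e_4(e_3\ff)$ and $e_\th(e_\th\ff)$, the paper first derives a Teukolsky-type equation for $\bF$ itself (by applying $e_4+\tfrac32\ka$ to the Maxwell equation for $e_3\bF$ and $e_\th+2\eta$ to the one for $e_4\rhoF$ and adding), obtaining an explicit expression for $\square_\g\bF$. It then exploits the decomposition $\ff=-e_\th\bF+e_\th(\Phi)\bF+\vth\rhoF$ and computes $\square_\g\ff$ piece by piece via Leibniz rules, the commutator $[\square_\g,e_\th]$, and auxiliary formulas for $\square_\g(e_\th\Phi)$, $\square_\g(\vth)$, $\square_\g(\rhoF)$. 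Only at the very end are \eqref{e3f} and the Bianchi identity $(e_\th\b-e_\th(\Phi)\b)=e_3\a+\tfrac12\kab\a-4\omb\a+\tfrac32\vth\rho+\vth\rhoF^2-\rhoF(e_\th\bF-e_\th\Phi\bF)$ used to resum everything into $\ff$ and the $\rhoF(-2e_3\a-(2\kab-8\omb)\a)$ block.

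The paper's route has two advantages: it produces the spin-$1$ Teukolsky equation for $\bF$ as an explicit intermediate result (used later in Section~\ref{Sch}), and it keeps the third-order derivatives of $\bF$ confined to the single commutator $[\square_\g,e_\th]$ rather than spreading them across both $e_4(e_3\ff)$ and $e_\th(e_\th\ff)$. Your direct approach is arguably more elementary but the bookkeeping is heavier. One imprecision in your outline: the curvature coupling does not arise from ``the $e_\th(e_\th\bF)$ leftover'' per se, but from the $e_\th\b$ terms generated when the null structure equations (for $e_4\eta$, or equivalently the $\b$-term hidden in $\square_\g\bF$) are invoked; it is those $e_\th\b-e_\th(\Phi)\b$ terms that the Bianchi identity for $e_3\a$ converts into the $-2\rhoF\,e_3\a$ contribution.
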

\begin{proof} We derive the Teukolsky equation verified by $\bF$. Consider the Maxwell's equations:
\beaa
e_3 \bF-e_\th \rhoF-2 \eta\rhoF  +\left(\frac 1 2 \kab-2\omb\right) \bF=0, \\
e_4 \rhoF+\ka \rhoF-e_{\th} \bF -e_\th(\Phi) \bF=0
\eeaa
and apply the operator $(e_4 +\frac 3 2 \ka)$ to the first equation and the operator $(e_\th +2\eta)$ to the second and add them, keeping only the linear terms. We are left with:
\beaa
0&=&-(- e_4 e_3 \bF+e_\th e_{\th} \bF+e_\th(\Phi) e_\th \bF-\frac 1 2 \kab e_4\bF+(-\frac 1 2 \ka+2\om) e_3\bF)+\\
&& +(\frac 1 2 \ka\kab+\om\kab+\rho-3\ka\omb -2e_4\omb -e_\th e_\th(\Phi))\aF_\th-2\omb e_4\bF+ (\ka+2\om) e_3 \bF+\\
&&[e_\th, e_4] \rhoF-\frac 1 2 \ka e_\th \rhoF+(-   \ka\eta +e_\th \ka  -2e_4\eta) \rhoF=\\
&=&-\square_\g \bF +(\frac 1 4 \ka\kab+\om\kab-3\ka\omb -2e_4\omb +\rhoF^2+(e_\th\Phi)^2)\bF-2\omb e_4\bF+ (\ka+2\om) e_3 \bF+\\
&&(e_\th(\vth)+2 e_\th(\Phi)\vth  -2 e_3(\xi) +\xi \kab  + 8\omb \xi+4\b) \rhoF
\eeaa
which gives an expression for $\square_\g \bF$. Recall that $\ff=-e_\th \bF+e_\th\Phi \bF+\vth \rhoF$, and using that 
\beaa
[\square_\g, e_\th]f &=& \frac 1 2 \ka e_3 e_\th(f)+\frac 1 2 \kab e_4 e_\th f +(\rhoF^2+(e_\th \Phi)^2+\frac 1 4 \ka\kab)e_\th( f), \\
\square (e_\th \Phi)&=&e_\th \Phi (\rhoF^2+(e_\th \Phi)^2 -\frac 1 4 \ka \kab), \\
\square(\vth)&=&e_\th e_\th \vth +e_\th \Phi e_\th \vth+ \left(-\frac 1 4 \ka\kab+\rho+\omb\ka+8\om\omb-2e_4\omb\right)\, \vth+ \left(-\frac 1 4 \ka^2 -\om\ka\right) \,\vthb +\\
&&-2(e_\th e_3(\xi)- e_\th(\Phi) e_3(\xi))+4\omb (e_\th \xi-e_\th\Phi \xi)+( \ka+4\om)( e_\th\eta  - e_\th(\Phi) \eta )+\\
&&+4\omb \a+2(e_\th\b-e_\th(\Phi)\b)+2\rhoF (e_\th\aF_{\th}- e_\th(\Phi)\aF_{\th}) , \\
\square_\g \rhoF &=& \rhoF(-\frac 1 2 \ka \kab +2\rho)+O(\ep)
\eeaa
which can all be derived as consequences of null structure equations, Bianchi identities and Maxwell's equation through the formula \eqref{squaref}, we have
\beaa
\square (e_\th \bF)&=& e_\th (\square \bF) + [\square, e_\th] \bF=\\
&=& \Big(\frac 1 4 \ka \kab-3\ka\omb+\om\kab +\rhoF^2 +e_\th\Phi^2-2e_4\omb\Big) e_\th\bF+2e_\th\Phi e_\th e_\th \Phi \bF+\left(\ka+2\om\right) e_\th e_3(\bF)-2\omb e_\th e_4\bF+\\
&&+\Big( e_\th e_\th(\vth)+2 e_\th(\Phi)e_\th\vth+2 e_\th e_\th(\Phi)\vth +e_\th\xi \kab-2 e_\th e_3(\xi) +8\omb e_\th\xi+4e_\th \b\Big) \rhoF+\\
&& + \frac 1 2 \ka e_3 e_\th(\bF)+\frac 1 2 \kab e_4 e_\th \bF +(\rhoF^2+(e_\th \Phi)^2+\frac 1 4 \ka\kab)e_\th( \bF), \\
&=& \left(\frac 3 2 \ka +2\om\right) e_\th e_3 \aF+\left(\frac 1 2 \kab -2\omb\right) e_\th e_4 \aF+\left(-3\ka\omb+\om\kab +2\rhoF^2 +2e_\th\Phi^2-2e_4\omb\right) e_\th\aF_{\th}+\\
 &&+ \left(2\rho-2\rhoF^2-2(e_\th\Phi)^2+\frac 12 \ka\kab\right)e_\th\Phi \aF_{\th}+\\
&&+\rhoF\Big( e_\th e_\th(\vth)+2 e_\th(\Phi)e_\th\vth+ (2\rho-2\rhoF^2-2(e_\th\Phi)^2+\frac 12 \ka\kab)\vth+e_\th\xi \kab-2 e_\th e_3(\xi) +8\omb e_\th\xi+4e_\th \b\Big) 
\eeaa
\beaa
\square( e_\th \Phi \bF)&=&\square_\g(e_\th \Phi) \bF +e_\th \Phi \square_\g (\bF)-e_3 e_\th \Phi e_4 \bF-e_4 e_\th \Phi e_3 \bF +2 e_\th e_\th \Phi e_\th \bF=\\
&=&\left(2\rho -2\rhoF^2 -2e_\th \Phi^2 +\frac 1 2 \ka \kab\right) e_\th \bF+e_\th \Phi \Big(\left(\frac 3 2 \ka+2\om\right) e_3(\bF_\th)+\left(\frac 1 2 \kab -2\omb \right)e_4 \bF+\\
&&+ \left(-3\ka\omb-2e_4\omb+2\rhoF^2+2(e_\th \Phi)^2+\om\kab\right) \bF +\Big( e_\th(\vth)+2 e_\th(\Phi)\vth +\xi \kab-2 e_3(\xi)+8\omb\xi +4\b\Big) \rhoF\Big)
\eeaa
Putting all this together we get 
\beaa
\square_\g (\ff)&=& \square_\g(-e_\th \bF+e_\th\Phi \bF+\vth \rhoF)=\\
&=& \square_\g(-e_\th \bF+e_\th\Phi \bF)+\square(\vth)\rhoF +\vth \square(\rhoF)-e_3(\vth) e_4(\rhoF) -e_4(\vth) e_3(\rhoF)=\\
&=&-\left(\frac 3 2 \ka +2\om\right) (e_\th e_3 \bF-e_\th \Phi e_3\bF)-\left(\frac 1 2 \kab -2\omb\right) (e_\th e_4 \bF-e_\th \Phi e_4\bF)+\\
 &&-\left(-\frac 1 2 \ka\kab-3\ka\omb+\om\kab-2\rho +2\rhoF^2 +4e_\th\Phi^2-2e_4\omb\right) (e_\th\bF-e_\th \Phi \bF)+\\
&&-\rhoF\Big(\left(\frac {11}{4} \ka\kab+2\om\kab-\rho-2\rhoF^2-4(e_\th\Phi)^2-3\omb\ka-8\om\omb+2e_4\omb\right)\vth+\left(\frac 3 4 \ka^2 +\om\ka\right) \,\vthb+\\
&& +(-\kab+4\omb) (e_\th\xi-e_\th \Phi\xi)+( -3\ka-4\om)( e_\th\eta  - e_\th(\Phi) \eta )+2(e_\th \b-e_\th \Phi\b)+(2\kab-4\omb)\a\Big)
\eeaa
and using once again \eqref{e3f} and $$(e_\th(\b)- (e_\th \Phi)\b)=e_3 (\a)+\frac 1 2 \kab \a-4\omb\a + \frac  3 2 \vth \rho +\vth \rhoF^2-\rhoF(e_\th \bF-e_\th\Phi \bF)$$ we obtain
\beaa
\square_\g (\ff)&=&\left(\frac 3 2 \ka +2\om\right) e_3 \ff+\left(\frac 1 2 \kab -2\omb\right) e_4\ff-\left(\frac 1 2 \ka\kab-4\ka\omb+2\om\kab-2\rho +4e_\th\Phi^2-2e_4\omb\right) (e_\th\bF-e_\th \Phi \bF)+\\
&&-\rhoF\Big(\left(-\frac {1}{2} \ka\kab-2\om\kab+2\rho-4(e_\th\Phi)^2+4\omb\ka+2e_4\omb\right)\vth+2e_3 (\a)+ (2\kab-8\omb)\a \Big)=\\
&=&\left(\frac 3 2 \ka +2\om\right) e_3 \ff+\left(\frac 1 2 \kab -2\omb\right) e_4\ff+\left(\frac 1 2 \ka\kab+2\om\kab-2\rho-4\ka\omb-2e_4\omb +4e_\th\Phi^2\right) \ff+\\
&&+2 \rhoF \Big (-e_3 (\a)- (\kab-4\omb)\a \Big)
\eeaa
as desired.
\end{proof}

\begin{remark} Observe that equations \eqref{squarea} and \eqref{squareff}, respectively for the Weyl curvature $\a$ and for the electromagnetic tensor $\ff$,
\beaa
\square_\g\a &=& -4\omb e_4(\a) +(2\ka+4\om)e_3 (\a)+\Big(\frac 1 2 \ka\kab+2\om \kab- 4\rho+4\rhoF^2-4e_4\omb -10\ka\omb-8\om\omb+ 4e_\th(\Phi)^2 \Big)\a+\\
&&+\rhoF\Big(2 e_4(\ff)+( 2\ka+4\om) \ff \Big)  +O(\ep^2), \\
\square_\g \ff&=&\left(\frac 1 2 \kab -2\omb\right) e_4\ff+\left(\frac 3 2 \ka +2\om\right) e_3 \ff+\left(\frac 1 2 \ka\kab+2\om\kab-2\rho-4\ka\omb-2e_4\omb +4e_\th\Phi^2\right) \ff+\\
&&+ \rhoF \Big (-2e_3 (\a)-(2\kab-8\omb)\a \Big) +O(\ep^2)
\eeaa 
are coupled. As in \cite{Ch-Kl}, signature arguments apply. The component $\a$ has signature $2$ and the quantity $\ff$ has signature $1$, therefore in the wave equation for $\a$, $\ff$ has to appear with an $e_4$ derivative. On the other hand, in the wave equation for $\ff$ the component $\a$ has to appear with an $e_3$ derivative. Moreover, this coupling comes with a multiplication for $\rhoF$, and recall that from \eqref{rhoFcharge}, $\rhoF$ can be interpreted as a weighted quasi-local charge of the spacetime. 
\end{remark}

\section{System of equations for the coupled gravitational and electromagnetic perturbations}
In Schwarzschild spacetime, the Chandrasekhar's transformation applied to the extreme curvature component $\a$ gives a quantity at the level of the second derivative along the ingoing null direction of $\a$ that verifies a Regge-Wheeler equation (see for example the quantity $P$ in \cite{DHR}, or the quantity $\qf$ in \cite{stabilitySchwarzschild}). In Reissner-Nordstr{\"o}m spacetime, we will get a Regge-Wheeler type equation, i.e. a wave equation with a good potential and no lower order terms, but with additional terms giving the coupling with the electromagnetic tensor. The new result is that there exists a transformation similar to Chandrasekhar's one, at the level of one derivative along the ingoing null direction, that can be applied to the new quantity $\ff$ to obtain a Regge-Wheeler type equation for the electromagnetic term $\qf^{\F}$, with additional terms giving the coupling with the curvature.

Inspired by the system of three equations for the extreme curvature component and its two derivatives in slowly rotating Kerr as obtained in \cite{Siyuan}, we write a system of five equations for suitably chosen rescaled quantities depending on the curvature and on the electromagnetic components, from the two quantities $\a$ and $\ff$.
\begin{itemize}
\item The first three equations are equations for the rescaled $\a$, its first and its second derivative in the ingoing null direction, respectively. The third quantity corresponds to the $\qf$ obtained by Chandrasekhar transformation in \cite{stabilitySchwarzschild} which verifies the Regge-Wheeler type equation with a new right hand side depending on the electromagnetic components. 
\item The last two equations are equations for the rescaled $\ff$ and its first derivative in the ingoing null directions $\qf^{\F}$. This last quantity turns out to verify a Regge-Wheeler type equation too, with a right hand side depending on the curvature.
\end{itemize}

The first three equations correspond to the equations obtained by Ma in \cite{Siyuan} in the case of Kerr spacetime. On the right hand side, his equations have lower order terms in the curvature multiplied by the angular momentum. In the case of small angular moment, the author is able to absorb the error terms coming from the lower error terms. In the case of coupled gravitational and electromagnetic perturbations, the right hand side of the first three equations is not given by lower order terms, but from a non trivial dependence on the electromagnetic parts, which are independent to the curvature part. 

\subsection{Definition of rescaled quantities and operators}\label{acknSteff}
As suggested in \cite{Steffen}, we introduce the following operators as a rescaled version of the derivative in the ingoing and in the outgoing null directions\footnote{To be consistent with the previous definitions, we define the operator as $\underline{P}$, since the bar quantities refer to $e_3$. The operator $Q$ differ from the operator $P=r \ka^{-1} e_4f +\frac 12 rf$ which would be used in the treatment of the corresponding system for the spin $-2$ quantity $\aa$. See Remark \ref{spin-2}.}:
\bea \label{operators}
\underline{P}(f)&=&r \kab^{-1} e_3f +\frac 12 rf, \\
Q(f)&=&r\kab   e_4f+\frac 1 2r \ka\kab f
\eea 
The operator $\underline{P}$ is fundamentally used to define the various quantities in \eqref{quantities}, while $Q$ is introduced to simplify the right hand side of the Teukolsky equation for $\a$. Observe that the operators $\underline{P}$ and $Q$, even if consist in derivatives along the $e_3$ and $e_4$ directions, don't change the signature of the quantity they are applied to.

We compute $\square_\g (\underline{P}(f))$ and $e_3 (Qf)$, which will be useful in the derivation of the main equations.
\begin{lemma}\label{squareP} We have, modulo $O(\ep^2)$,
\beaa
\square_\g(\underline{P} f)&=& \frac{1}{r}( -\ka\kab+2\rho) \underline{P}(\underline{P}(f))+(\frac 1 2 \ka\kab-4\rho-2\rhoF^2)\underline{P}(f)+\left(\frac 1 2 \rho+\rhoF^2 \right)rf+ \frac 3 2r \square_\g(f)+\kab^{-1}r e_3(\square_\g(f))
\eeaa
\end{lemma}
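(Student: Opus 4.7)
The proof is a direct but careful computation. I would proceed as follows.

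First, I write $\underline{P}(f) = r\kab^{-1} e_3 f + \frac{r}{2} f$ and apply $\square_\g$ term by term via the Leibniz rule
$$
\square_\g(uv) = u\,\square_\g v + v\,\square_\g u + 2\nabla^\alpha u\,\nabla_\alpha v,
$$
where in a polarized null frame the cross term reads $2\nabla^\alpha u\,\nabla_\alpha v = -e_3 u\cdot e_4 v - e_4 u\cdot e_3 v + 2\,e_\th u\cdot e_\th v$. Since $r$ and $\kab$ are background scalars with $e_\th r,\,e_\th\kab = O(\ep)$, the angular cross terms drop out modulo $O(\ep^2)$. To evaluate the scalar pieces I compute $\square_\g r$ and $\square_\g(r\kab^{-1})$ from formula \eqref{squaref}, using the perturbation identities $e_3 r = \tfrac{r}{2}\bar\kab + O(\ep)$, $e_4 r = \tfrac{r}{2}\bar\ka + O(\ep)$ together with the evolution equations for $\kab$ from Proposition \ref{Bianchi}, namely $e_3\kab + \tfrac12\kab^2 + 2\omb\kab = O(\ep)$ and $e_4\kab + \tfrac12\ka\kab - 2\om\kab = 2\rho + O(\ep)$.

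Second, the Leibniz step applied to $r\kab^{-1} e_3 f$ produces a term $r\kab^{-1}\square_\g(e_3 f)$. To recover the source terms $\square_\g f$ and $e_3(\square_\g f)$ in the statement, I would use the commutator identity
$$
\square_\g(e_3 f) \;=\; e_3(\square_\g f) \;+\; [\square_\g, e_3]\,f,
$$
and compute $[\square_\g, e_3]$ directly from \eqref{squaref} via the basic commutators $[e_\th, e_3] = \tfrac12\kab\,e_\th + O(\ep)$ and $[e_3, e_4] = 2(\omb-\om)e_4 + \ldots$ already employed in the derivation of \eqref{e3f}. This step produces the coefficient $\kab^{-1} r$ in front of $e_3(\square_\g f)$ and contributes the $\tfrac{3}{2} r$ coefficient in front of $\square_\g f$ (after combining with the $\square_\g(\tfrac{r}{2} f)$ contribution).

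Third, I would collect all remaining terms and repeatedly use the Gauss identity $K = -\rho - \tfrac14\ka\kab + \rhoF^2 + O(\ep)$ from Proposition \ref{Bianchi} to convert combinations of $\ka\kab$ and $\rho$ into the specific coefficients on the right-hand side. Recognizing $r\kab^{-1} e_3 f + \tfrac{r}{2} f = \underline{P}(f)$ and its iterate $\underline{P}(\underline{P}(f))$ in the output requires rewriting $e_3 e_3 f$ in the combination $\kab\,r^{-1}\underline{P}(\underline{P} f) + (\text{lower order})$, which is the source of the principal coefficient $\tfrac{1}{r}(-\ka\kab + 2\rho)$. The leftover linear-in-$\underline{P} f$ terms must then collapse into $\tfrac12\ka\kab - 4\rho - 2\rhoF^2$, while the zeroth-order pieces in $rf$ must collapse into $\tfrac12\rho + \rhoF^2$.

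The main obstacle is the bookkeeping in the second and third steps. There are many intermediate contributions — coefficients of $e_3 e_3 f$, $e_3 e_4 f$, $e_4 f$, $e_3 f$, and $f$ generated by the Leibniz rule, the commutator $[\square_\g, e_3]$, and the evolution equations for $\kab$ — and the cleanness of the final expression (with only $\ka\kab$, $\rho$, and $\rhoF^2$ surviving, and with all $\om, \omb$ dependence absorbed into the $\square_\g f$ and $e_3(\square_\g f)$ source terms) is a delicate cancellation relying on the Gauss identity and on the fact that the Reissner-Nordström background is captured to leading order by Proposition \ref{Reiss}.
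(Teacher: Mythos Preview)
Your proposal is correct and follows essentially the same route as the paper's proof: Leibniz rule on $\underline{P}f = r\kab^{-1}e_3 f + \tfrac{r}{2}f$, the commutator $[\square_\g, e_3]$ computed from \eqref{squaref}, and the rewriting of $e_3 f$ and $e_3 e_3 f$ back in terms of $\underline{P}(f)$ and $\underline{P}(\underline{P}(f))$. One small correction: the Gauss identity is not actually needed in the third step --- the coefficients $\tfrac12\ka\kab - 4\rho - 2\rhoF^2$ and $\tfrac12\rho + \rhoF^2$ fall out directly from the evolution equations for $\kab$ and the explicit commutator, with no appearance of $K$.
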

\begin{proof} Writing $\underline{P}f= r \kab^{-1} e_3f +\frac 12 r f$, we have 
\beaa
\square_\g(\underline{P} f)&=&\frac 12 \Big(\square_\g( r)f+r \square_\g(f)-e_3(r)e_4(f)-e_4(r)e_3(f)\Big)+\square_\g(\kab^{-1}r) e_3(f)+\\
&&+\kab^{-1}r \square_\g( e_3(f))-e_3(\kab^{-1}r) e_4e_3(f)-e_4(\kab^{-1}r)e_3 e_3(f)
\eeaa

Using that 
\beaa
\square_\g( r)&=&r( -\frac1 2  \ka\kab  -\rho), \\
e_3(\kab^{-1})&=&-\frac{1}{\kab^2}e_3(\kab)=-\frac{1}{\kab^2} (-\frac 12 \kab^2-2\omb\kab)=\frac 12 +2\omb\kab^{-1}, \\
e_4(\kab^{-1})&=&-\frac{1}{\kab^2}e_4(\kab)=-\frac{1}{\kab^2}(-\frac 1 2 \ka\kab+2\om\kab+2\rho)=\frac 1 2 \ka\kab^{-1}-2\om\kab^{-1}-2\rho\kab^{-2}, \\
\square_\g(\kab)&=&\kab \rho+2\kab e_4\omb+4\omb \rho, \\
\square_\g( e_3(f))&=&e_3(\square_\g(f))+[\square_\g, e_3]f=\\
&=&e_3(\square_\g(f))+\kab\square_\g f - 2\om e_3(e_3( f)) +(\kab+2\omb)e_4(e_3( f))+\\
&&+\left(\frac{1}{4}\ka\kab  - 3\om\kab  +\omb\ka -8\om\omb -\rho-2\rhoF^2  +2e_4(\omb)\right)e_3(f)  +\frac{1}{4}\kab^2 e_4(f)
\eeaa
 we have 
\beaa
\square_\g(\underline{P} f)&=& r \Bigg[( -\ka\kab^{-1}+2\rho\kab^{-2})e_3 e_3(f)+( -\frac14 \ka\kab  -\frac 1 2 \rho)f+ \frac 3 2 \square_\g(f)+\\
&&+\Big(- \frac 3 2 \ka-2\omb\ka\kab^{-1}+4\omb\rho\kab^{-2} -2\rhoF^2\kab^{-1} \Big) e_3(f)+\kab^{-1} e_3(\square_\g(f))\Bigg]
\eeaa
Writing 
\beaa
 e_3(f)&=&\frac{1}{r}\kab (Pf)-\frac 1 2 \kab f, \\
e_3e_3f&=&\frac{1}{r^2}\kab^2 P(P(f))-\frac 2 r(\kab^2+\omb \kab) (Pf)+(\frac 1 2 \kab^2+\omb\kab) f
 \eeaa
 then 
 \beaa
 \square_\g(\underline{P} f)&=& r \Bigg(\frac{1}{r^2}( -\ka\kab+2\rho) P(P(f))+\frac{1}{r}(\frac 1 2 \ka\kab-4\rho-2\rhoF^2)P(f)+\left(\frac 1 2 \rho+\rhoF^2 \right)f+ \frac 3 2 \square_\g(f)+\kab^{-1} e_3(\square_\g(f))\Bigg)
 \eeaa
 as desired.
\end{proof}

\begin{lemma}\label{e3underlineP} We have, modulo $O(\ep^2)$,
\beaa
e_3 (Q(f))&=&\frac 1 r  \kab (Q(\underline{P}(f)))-\frac 1 2 \kab Q(f)+(-\ka\kab^2+2\rho\kab)\underline{P}(f)
\eeaa
\end{lemma}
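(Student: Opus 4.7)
The plan is to prove the identity by direct expansion of both sides, collecting terms according to their dependence on $e_4 e_3 f$, $e_3 f$, $e_4 f$, and $f$, and matching coefficients.

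First, applying Leibniz to the LHS yields
\beaa
e_3(Q(f)) = e_3(r\kab)\, e_4 f + r\kab\, e_3 e_4 f + \frac{1}{2}\, e_3(r\ka\kab)\, f + \frac{1}{2}\, r\ka\kab\, e_3 f.
\eeaa
Using Definition \ref{def:ep-perturbations} together with the null structure equations in Proposition \ref{Bianchi} (and the $e_3 \leftrightarrow e_4$ symmetric version of the $e_4 \kab$ equation), one has, modulo $O(\ep^2)$,
\beaa
e_3 r = \frac{r}{2}\kab,\qquad e_3 \kab = -\frac{1}{2}\kab^2 - 2\omb\kab,\qquad e_3 \ka = -\frac{1}{2}\ka\kab + 2\omb\ka + 2\rho,
\eeaa
whence $e_3(r\kab) = -2r\omb\kab$ and $e_3(r\ka\kab) = -\frac{r}{2}\ka\kab^2 + 2r\rho\kab$. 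The cross term $e_3 e_4 f$ is rewritten as $e_4 e_3 f + [e_3, e_4] f$ using the commutator $[e_3, e_4] = -2\om e_3 + 2\omb e_4$, modulo a $(\eta-\etab)\, e_\th$ contribution of order $O(\ep)\, e_\th$ that produces only an $O(\ep^2)$ error when $f$ is itself of order $O(\ep)$.

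Next, on the RHS, I expand $Q(\underline{P}(f)) = r\kab\, e_4(\underline{P}(f)) + \frac{1}{2}r\ka\kab\, \underline{P}(f)$. Using $e_4 r = \frac{r}{2}\ka$ and $e_4\kab = -\frac{1}{2}\ka\kab + 2\om\kab + 2\rho$ from Proposition \ref{Bianchi}, compute $e_4(r\kab^{-1}) = r\ka\kab^{-1} - 2r\om\kab^{-1} - 2r\rho\kab^{-2}$. Applying Leibniz to $e_4(\underline{P}(f))$ and combining with $-\frac{1}{2}\kab Q(f)$ and $(-\ka\kab^2 + 2\rho\kab)\underline{P}(f)$ expresses the RHS as a linear combination of $e_4 e_3 f$, $e_3 f$, $e_4 f$, and $f$. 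I then match coefficients: the $e_4 e_3 f$ coefficient is $r\kab$ on both sides; the $e_4 f$ coefficient vanishes on both sides (on the LHS, $e_3(r\kab) = -2r\omb\kab$ cancels the $r\kab \cdot 2\omb$ from the commutator; on the RHS, the $\frac{r\kab^2}{2}$ from $\frac{\kab}{r}Q(\underline{P}(f))$ cancels the $-\frac{r\kab^2}{2}$ from $-\frac{1}{2}\kab Q(f)$); the $e_3 f$ coefficient reduces to $\frac{r\ka\kab}{2} - 2r\om\kab$ on both sides; and the $f$ coefficient reduces to $-\frac{r\ka\kab^2}{4} + r\rho\kab$ on both sides.

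The main obstacle is the careful bookkeeping of background derivatives, in particular the verification that the coefficient of $f$ matches. This relies crucially on the absence of a $\rhoF^2$ correction in the linearized $e_3 \ka$ equation (a consequence of the simplified form of Proposition \ref{Bianchi}), so that $e_3(r\ka\kab)$ produces exactly $-\frac{r\ka\kab^2}{2} + 2r\rho\kab$ with no stray electromagnetic background terms that would spoil the identity.
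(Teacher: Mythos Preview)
Your proposal is correct and follows essentially the same approach as the paper: a direct expansion using the background identities $e_3 r = \frac{r}{2}\kab$, $e_3\kab = -\frac{1}{2}\kab^2 - 2\omb\kab$, $e_3\ka = -\frac{1}{2}\ka\kab + 2\omb\ka + 2\rho$, and the commutator $[e_3,e_4] = -2\om e_3 + 2\omb e_4 + O(\ep)\,e_\th$. The only organizational difference is that you expand both sides independently and match the coefficients of $e_4 e_3 f$, $e_3 f$, $e_4 f$, $f$, whereas the paper expands only the left-hand side and then successively substitutes $e_3 f = \frac{\kab}{r}\underline{P}f - \frac{\kab}{2} f$ and $e_4 g = \frac{1}{r\kab}Qg - \frac{\ka}{2} g$ (for $g=f$ and $g=\underline{P}f$) to rewrite it directly in terms of $Q\underline{P}f$, $Qf$, and $\underline{P}f$.
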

\begin{proof} We have 
\beaa
e_3 (Qf)&=&e_3(\kab e_4 (rf))=e_3(\kab r e_4(f)+\frac 1 2 \ka \kab rf)=\\
&=&e_3(\kab) r e_4(f)+\kab e_3(r) e_4(f)+\kab r e_3e_4(f)+\frac 1 2 e_3(\ka) \kab rf+\frac 1 2 \ka e_3(\kab) rf+\frac 1 2 \ka \kab e_3(r)f+\frac 1 2 \ka \kab r e_3(f)=\\
&=&\kab r e_4e_3(f)+(\frac 1 2 \ka\kab-2\om \kab )re_3(f)+(-\frac 1 4 \ka \kab^2 +\rho \kab) rf
\eeaa
Writing $e_3(f)=\frac{1}{r}\kab \ \underline{P}(f)-\frac 1 2 \kab f$, we have 
\beaa
e_3 (Qf)&=&\kab r e_4(\frac{1}{r}\kab \ \underline{P}f-\frac 1 2 \kab f)+(\frac 1 2 \ka\kab^2-2\om \kab^2 )\underline{P}(f)+(-\frac 1 2 \ka \kab^2+\om\kab^2 +\rho \kab) rf=\\
&=&\kab r (-\frac {1}{ 2 r} \ka \kab \underline{P}f+\frac{1}{r}(-\frac 1 2 \ka\kab+2\om\kab+2\rho) \underline{P}f+\frac{1}{r}\kab e_4(\underline{P}f)-\frac 1 2 (-\frac 1 2 \ka\kab+2\om\kab+2\rho) f+\\
&&-\frac 1 2 \kab e_4(f))+(\frac 1 2 \ka\kab^2-2\om \kab^2 )\underline{P}f+(-\frac 1 2 \ka \kab^2+\om\kab^2 +\rho \kab) rf
\eeaa
which gives
\beaa
e_3 (Qf)&=&\kab^2 e_4(\underline{P}f)-\frac 1 2 \kab^2 r  e_4(f)+(-\frac 1 2 \ka\kab^2+2\rho\kab)\underline{P}(f)+(-\frac 1 4 \ka \kab^2 ) rf
\eeaa
and writing $ e_4(f)=\frac 1 r \kab^{-1}Qf-\frac 1 2 \ka f$, and $e_4(\underline{P}f)=\frac 1 r \kab^{-1} Q\underline{P}f-\frac 1 2 \ka \underline{P}f$ we have the desired expression
\end{proof}

As suggested in \cite{Steffen}, we define the following new scaling of the extreme components of the curvature $\a$ and of the electromagnetic quantity $\ff$ as the following:
\bea\label{quantities}
\begin{split}
\Phi_0 &= r^2 \kab^2 \a, \\
\Phi_1&=\underline{P}(\Phi_0), \\
\Phi_2&=\underline{P}(\Phi_1)=\underline{P}(\underline{P}(\Phi_0)), \\
\Phi_3&= r^2 \kab \ff, \\
\Phi_4&= \underline{P}(\Phi_3)
\end{split}
\eea
The quantities $\Phi_0, \Phi_1, \Phi_2$ contain information about the gravitational perturbation of the metric, i.e. about the Weyl curvature of the perturbed spacetime. The first quantity $\Phi_0$ is a rescaled version of $\a$, and being multiplied by $\kab^2$ it is of signature $0$. Then, $\Phi_1$ and $\Phi_2$ are respectively the first and the second derivative, through the operator $\underline{P}$, of $\Phi_0$, giving other two signature $0$ quantities. Observe that the last quantity $\Phi_2$ defined as
\beaa
\Phi_2&=&P(P(\Phi_0))=r^2\kab^{-2}e_3e_3\Phi_0+2r^2 \kab^{-1} (1+\omb\kab^{-1}) e_3\Phi_0 +\frac 12 r^2  \Phi_0=\\
&=&r^4( e_3e_3\a+(2\kab-6\omb )e_3\a+(\frac 1 2 \kab^2-8\kab\omb-4 e_3\omb +8\omb^2  )\a) =\qf
\eeaa
coincides with the $\qf$ obtained by Chandrasekhar transformation in \cite{stabilitySchwarzschild}.

The quantities $\Phi_3, \Phi_4$ contain information about the electromagnetic perturbation of the metric, i.e. about the Ricci curvature (or electromagnetic tensor) of the perturbed spacetime. The quantity $\Phi_3$ is a rescaled version of $\ff$, and being multiplied by $\kab$ it is of signature $0$. Then $\Phi_4$ is the first derivative, through the same operator $\underline{P}$, of $\Phi_3$, giving another signature $0$ quantity. The last quantity $\Phi_4$, in analogy to $\Phi_2$, is called $\qf^{\F}$.

\subsection{Wave equations for the curvature quantities $\Phi_0, \Phi_1, \Phi_2$}

We will derive the wave equations for the quantities $\Phi_0, \Phi_1, \Phi_2$.

\begin{proposition}\label{wavePhi0} Modulo $O(\ep^2)$,
\beaa
\square_\g \Phi_0&=& \frac 1 r \left(2\ka\kab-4\rho\right)  \Phi_1+\Big(- \frac 1 2  \ka\kab  -4\rho+4\rhoF^2+ 4 e_\th(\Phi)^2 \Big) \Phi_0+ \rhoF\left(\frac 2 rQ( \Phi_3)-4\rho \Phi_3\right)
\eeaa
\end{proposition}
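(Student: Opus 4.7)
The plan is to derive the wave equation for $\Phi_0 = r^2\kab^2\a$ by expanding $\square_\g(r^2\kab^2\,\a)$ via the product rule and substituting the Teukolsky equation from Proposition \ref{squarea}. For $\Z$-invariant polarized scalars $f,g$ one has
\[
\square_\g(fg) = (\square_\g f)g + f\,\square_\g g - e_3(f)e_4(g) - e_4(f)e_3(g) + 2e_\th(f)e_\th(g),
\]
which follows from the general identity $\square_\g(fg) = (\square_\g f)g + f\square_\g g + 2\langle \nabla f,\nabla g\rangle$ together with $\g(e_3,e_4)=-2$. With $f = r^2\kab^2$ and $g = \a$, the $e_\th$ cross term is $O(\ep^2)$ by Definition \ref{def:ep-perturbations}, since $e_\th(r) = 0$, $e_\th(\kab)=O(\ep)$, and $\a = O(\ep)$, so it is discarded.

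The first key step is to compute derivatives of $r^2\kab^2$ using the transport equations of Proposition \ref{Bianchi} together with $e_3(r) = \tfrac{r}{2}\kab + O(\ep)$ and $e_4(r) = \tfrac{r}{2}\ka + O(\ep)$. A short calculation gives
\[
e_3(r^2\kab^2) = -4r^2\omb\kab^2 + O(\ep), \qquad e_4(r^2\kab^2) = 4r^2\om\kab^2 + 4r^2\rho\kab + O(\ep).
\]
The value $-4r^2\omb\kab^2$ is chosen precisely so that the cross term $-e_3(r^2\kab^2)\,e_4\a$ exactly cancels the $-4\omb\,e_4\a$ appearing in Proposition \ref{squarea} after multiplication by $r^2\kab^2$; this cancellation is the raison d'\^etre of the rescaling. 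The surviving first-order contribution in $\a$ is a single $e_3\a$ term with coefficient $2r^2\kab(\ka\kab-2\rho)$, obtained by adding the Teukolsky $(2\ka+4\om)\,r^2\kab^2 e_3\a$ and the product-rule $-e_4(r^2\kab^2)\,e_3\a$.

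The second key step is to convert $e_3\a$ into $\Phi_1$ via $\Phi_1 = r\kab^{-1}e_3\Phi_0 + \tfrac{r}{2}\Phi_0$ and $e_3\Phi_0 = r^2\kab^2\,e_3\a - 4r^2\omb\kab^2\,\a + O(\ep^2)$, which yields $r^3\kab\,e_3\a = \Phi_1 + (\cdots)\Phi_0$ and produces the $\tfrac{1}{r}(2\ka\kab-4\rho)\Phi_1$ term. In parallel, the electromagnetic coupling $\rhoF(2 e_4\ff + (2\ka+4\om)\ff)$ on the right-hand side of Proposition \ref{squarea} is rewritten by substituting $\ff = r^{-2}\kab^{-1}\Phi_3$ and expanding $e_4(r^{-2}\kab^{-1})$ via Proposition \ref{Bianchi}; the outcome matches $\rhoF\bigl(\tfrac{2}{r} Q(\Phi_3) - 4\rho\,\Phi_3\bigr)$ directly from the definition $Q(f) = r\kab\,e_4 f + \tfrac{1}{2}r\ka\kab f$.

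The remaining zeroth-order contributions come from $\square_\g(r^2\kab^2)\,\a$ (expanded analogously to the computation in Lemma \ref{squareP}), the Teukolsky potential multiplied by $r^2\kab^2$, and the $\Phi_0$ residues from the $e_3\a \to \Phi_1$ conversion and the $\ff\to\Phi_3$ conversion. These must collect into the stated coefficient $-\tfrac{1}{2}\ka\kab - 4\rho + 4\rhoF^2 + 4 e_\th(\Phi)^2$. The main obstacle is purely algebraic bookkeeping: terms of type $\omb\ka$, $\om\kab$, $e_4\omb$, $e_3\omb$, and $\rho$ occur in several places and must cancel in pairs. These cancellations are forced by the signature-zero nature of the rescaling $r^2\kab^2$ and of the operators $\underline{P}, Q$ introduced in \S\ref{acknSteff}, so the clean structure of the final equation (no $e_4\Phi_0$ term, only one power of $\Phi_1$, and no lower-order terms of mixed type) is guaranteed a priori, even if verifying each coefficient requires care.
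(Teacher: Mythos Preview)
Your proposal is correct and follows essentially the same route as the paper's proof: product rule for $\square_\g(r^2\kab^2\,\a)$, computation of $e_3(r^2\kab^2)$ and $e_4(r^2\kab^2)$, substitution of the Teukolsky equation of Proposition~\ref{squarea}, cancellation of the $e_4\a$ term, conversion of $e_3\a$ to $\Phi_1$ and of the $\ff$-coupling to $Q(\Phi_3)$. The paper carries out the intermediate expression for $\square_\g(r^2\kab^2)$ explicitly (obtaining $(2\omb\ka-2\om\kab+4e_4\omb+8\om\omb)r^2\kab^2+(16\omb-2\kab)\rho r^2\kab$) and tracks the zeroth-order cancellations line by line, whereas you summarize this step as ``algebraic bookkeeping''; but the strategy and all key identities are identical.
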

\begin{proof} We have 
\beaa
\square_\g \Phi_0&=&\square_\g (r^2 \kab^2 \a)=\square_\g(r^2\kab^2) \a+r^2\kab^2 \square_\g\a-e_3(r^2\kab^2) e_4(\a)-e_4(r^2\kab^2) e_3(\a)
\eeaa
and using that 
\beaa
e_3(r^2 \kab^2)&=&e_3(r^2)\kab^2+r^2 e_3(\kab^2)=r^2 \kab^3+ r^2(- \kab^3-4\omb\kab^2)=-4\omb r^2\kab^2, \\
e_4(r^2 \kab^2)&=&e_4(r^2)\kab^2+r^2 e_4(\kab^2)=(r^2 \ka )\kab^2+ r^2(- \ka\kab^2+4\om\kab^2+4\rho\kab)=4\om r^2\kab^2+4\rho r^2\kab, \\
e_4(e_3(r^2 \kab^2))&=&e_4(-4\omb r^2\kab^2)=-4 e_4(\omb) r^2\kab^2-4\omb e_4(r^2\kab^2)=-4 e_4(\omb) r^2\kab^2-16\omb \om r^2\kab^2-16\omb \rho r^2\kab
\eeaa
we have 
\beaa
\square_\g (r^2 \kab^2)&=& -e_4(e_3( r^2\kab^2))+e_\th(e_\th( r^2\kab^2)) -\frac{1}{2}\kab e_4(r^2\kab^2) +\left(-\frac{1}{2}\ka+2\om\right) e_3(r^2\kab^2)+e_\th(\Phi)e_\th( r^2\kab^2) \\
&=& -(-4 e_4(\omb) r^2\kab^2-16\omb \om r^2\kab^2-16\omb \rho r^2\kab) -\frac{1}{2}\kab (4\om r^2\kab^2+4\rho r^2\kab) +\left(-\frac{1}{2}\ka+2\om\right) (-4\omb r^2\kab^2)\\
&=&\left( 2\omb\ka-2\om\kab+4 e_4(\omb) +8\omb \om\right) r^2\kab^2+(16\omb-2\kab )\rho r^2\kab
\eeaa

Therefore,
\beaa
\square_\g \Phi_0&=&\Bigg(\left( 2\omb\ka-2\om\kab+4 e_4(\omb) +8\omb \om\right) r^2\kab^2+(16\omb-2\kab )\rho r^2\kab \Bigg) \a\\
&&+r^2\kab^2 \Bigg(- 4 \omb e_4(\a)     +\left(2\ka+4\om\right) e_3(\a)+(- 4 e_4(\omb) +\frac{1}{2}\ka\kab - 10 \ka\omb +2\kab\om -8\om\omb - 4\rho + 4e_\th(\Phi)^2 )\a\Bigg)\\
&&-\left(-4\omb r^2\kab^2\right) e_4(\a)-\left(4\om r^2\kab^2+4\rho r^2\kab\right) e_3(\a)+2r^2\kab^2\rhoF\left(  e_4(\ff)+( \ka+2\om) \ff \right)
\eeaa
giving 
\beaa
\square_\g \Phi_0&=&r^2\Bigg(\left(2\ka\kab^2-4\rho\kab\right) e_3(\a)+\Big( \frac 1 2  \ka\kab^3  -6\rho\kab^2-8\omb\ka\kab^2+16\rho\kab\omb+4\kab^2\rhoF^2+ 4\kab^2 e_\th(\Phi)^2 \Big) \a+\\
&&+2\kab^2\rhoF\left(  e_4(\ff)+( \ka+2\om) \ff \right) \Bigg)
\eeaa
Using the following relations
\beaa
r^2\kab^2\a&=&\Phi_0, \qquad e_3\a=\frac {1}{r^2}\kab^{-2}e_3(\Phi_0)+4\omb \a, \qquad e_3\Phi_0=\frac 1 r \kab \Phi_1- \frac 1 2 \kab \Phi_0, \\
r^2\kab^2  e_4(\ff)+r^2( \ka\kab^2+2\om\kab^2) \ff&=&\frac 1 r \kab e_4(r^3\kab \ff)-2\rho\kab r^2\ff=\frac 1 r Q(\Phi_3)-2\rho \Phi_3
\eeaa
we obtain the desired identity.
\end{proof}

\begin{remark} Comparing Proposition \ref{squarea} to Proposition \ref{wavePhi0}, we can notice that the rescaled quantity $\Phi_0$ verifies a wave equation independent of the quantities $\om, \omb$, one of which can be made small in the ingoing or outgoing geodesic null frame. Therefore, the wave equation for $\Phi_0$ is more natural and frame-independent compared to the one for $\a$.
\end{remark}

\begin{proposition}\label{wavePhi1} We have modulo $O(\ep^2)$,
\beaa
\square_\g (\Phi_1)&=& \frac{1}{r}( \ka\kab-2\rho) \Phi_2+(- \ka\kab+6\rhoF^2+4e_\th\Phi^2)\Phi_1+(\frac 3 2 \rho+\rhoF^2 )r\Phi_0\\
&&+\rhoF \Bigg(\frac {2} {r} Q(\Phi_4)-Q(\Phi_3)-2\ka\kab \Phi_4+  r\left(6\rho+4\rhoF^2\right)\Phi_3\Bigg)
\eeaa
\end{proposition}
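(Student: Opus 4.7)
The plan is to derive $\square_\g \Phi_1$ directly from the wave equation for $\Phi_0$ by commuting $\square_\g$ with the operator $\underline{P}$. The two main tools are Lemma \ref{squareP} (which computes $\square_\g(\underline{P} f)$ in terms of $\underline{P}(\underline{P}(f))$, $\underline{P}(f)$, $f$, $\square_\g f$ and $e_3(\square_\g f)$) and Lemma \ref{e3underlineP} (which handles the $e_3$-derivative of $Q$). Since $\Phi_1 = \underline{P}(\Phi_0)$, applying Lemma \ref{squareP} with $f = \Phi_0$ yields
\beaa
\square_\g(\Phi_1) &=& \frac{1}{r}(-\ka\kab + 2\rho)\,\Phi_2 + \Big(\tfrac12 \ka\kab - 4\rho - 2\rhoF^2\Big)\Phi_1 + \Big(\tfrac12 \rho + \rhoF^2\Big) r\Phi_0 \\
&& +\,\tfrac32 r\,\square_\g(\Phi_0) + \kab^{-1} r\, e_3\bigl(\square_\g(\Phi_0)\bigr),
\eeaa
where we used $\underline{P}(\underline{P}(\Phi_0)) = \Phi_2$. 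Thus the proof reduces to plugging in the formula for $\square_\g \Phi_0$ from Proposition \ref{wavePhi0}, and computing $e_3(\square_\g \Phi_0)$.

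The substitution $\square_\g \Phi_0 = \frac{1}{r}(2\ka\kab - 4\rho)\Phi_1 + \bigl(-\tfrac12\ka\kab - 4\rho + 4\rhoF^2 + 4e_\th(\Phi)^2\bigr)\Phi_0 + \rhoF\bigl(\tfrac{2}{r}Q(\Phi_3) - 4\rho\Phi_3\bigr)$ into the $\tfrac32 r\,\square_\g(\Phi_0)$ term produces directly the expected combinations in $\Phi_1, \Phi_0, Q(\Phi_3), \Phi_3$, contributing to the right-hand side of the claim.

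To compute $\kab^{-1} r\, e_3(\square_\g \Phi_0)$, I would differentiate the RHS of Proposition \ref{wavePhi0} with respect to $e_3$, using the following building blocks, all valid modulo $O(\ep^2)$: (i) the fundamental identities $e_3(\Phi_0) = \frac{1}{r}\kab\Phi_1 - \tfrac12 \kab\Phi_0$ and $e_3(\Phi_3) = \frac{1}{r}\kab\Phi_4 - \tfrac12 \kab\Phi_3$, which follow immediately from the definition of $\underline{P}$; (ii) Lemma \ref{e3underlineP} applied to $\Phi_3$, giving $e_3(Q(\Phi_3)) = \frac{1}{r}\kab\, Q(\Phi_4) - \tfrac12 \kab\, Q(\Phi_3) + (-\ka\kab^2 + 2\rho\kab)\Phi_4$; and (iii) the $e_3$-derivatives of the scalar coefficients $r, \ka, \kab, \rho, \rhoF, e_\th(\Phi)$ in the background, obtained from the reduced null structure, Bianchi and Maxwell equations of Proposition \ref{Bianchi} (e.g.\ $e_3(r) = \tfrac{r}{2}\kab$, $e_3(\kab) = -\tfrac12\kab^2 - 2\omb\kab$, $e_3(\rho) = -\tfrac32 \kab\rho - \kab\rhoF^2$, $e_3(\rhoF) = -\kab\rhoF$, all modulo $O(\ep)$).

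After carrying out these substitutions, the term $\kab^{-1} r\, e_3(\square_\g\Phi_0)$ contributes a piece $\frac{1}{r}(\ka\kab - 2\rho)\Phi_2$ (originating from differentiating $\Phi_1$), further contributions to $\Phi_1, \Phi_0$, and on the electromagnetic side the combination $\rhoF\bigl(\tfrac{2}{r} Q(\Phi_4) - 2\ka\kab\,\Phi_4 + \ldots\bigr)$ (originating from $e_3(Q(\Phi_3))$ and from differentiating the coefficient $\rhoF$ and the $\rhoF\,\rho\,\Phi_3$ term). Combining all contributions and collecting by $\Phi_i$ gives the stated identity. The main obstacle is purely the algebraic bookkeeping: keeping precise track of which coefficient gets differentiated, of the $\tfrac12\kab$-shifts that arise whenever $e_3$ acts on some $\Phi_i$, and of the interplay between the $\rho$ terms and the $\rhoF^2$ terms so that the final combination $\tfrac32 \rho + \rhoF^2$ and the coefficient $6\rho + 4\rhoF^2$ in front of $r\Phi_3$ emerge correctly; no further structural input is needed beyond Lemmas \ref{squareP}--\ref{e3underlineP} and the background relations of Proposition \ref{Bianchi}.
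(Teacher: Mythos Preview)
Your proposal is correct and follows essentially the same approach as the paper: the paper first computes $e_3(\square_\g \Phi_0)$ from Proposition \ref{wavePhi0} (using exactly the ingredients you list, including Lemma \ref{e3underlineP} for $e_3(Q(\Phi_3))$ and the relations $e_3\Phi_i = \tfrac{1}{r}\kab\Phi_{i+1} - \tfrac12\kab\Phi_i$), then plugs both $\square_\g\Phi_0$ and $e_3(\square_\g\Phi_0)$ into Lemma \ref{squareP} with $f=\Phi_0$ and collects terms. The only difference is the order of presentation, which is immaterial.
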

\begin{proof} We first compute $ e_3 (\square_\g \Phi_0)$, using Proposition \ref{wavePhi0}. 
\beaa
e_3 (\square_\g \Phi_0)&=&e_3\Bigg(\frac 1 r \left(2\ka\kab-4\rho\right)  \Phi_1+\Big(- \frac 1 2  \ka\kab  -4\rho+4\rhoF^2+ 4 e_\th(\Phi)^2 \Big) \Phi_0+ \rhoF\left(\frac 2 rQ( \Phi_3)-4\rho \Phi_3\right)\Bigg)=\\
&=&- \frac 1 2 r^{-1} \kab \left(2\ka\kab-4\rho\right)  \Phi_1+\frac 1 r \Big(2(-\frac 1 2 \ka\kab+2\omb\ka+2\rho)\kab+2\ka (-\frac 1 2 \kab^2-2\omb\kab)-4(-\frac 3 2 \kab \rho-\kab\rhoF^2)\Big)  \Phi_1+\\
&&+\frac 1 r \left(2\ka\kab-4\rho\right)  e_3\Phi_1+e_3\Big( -\frac 1 2 \ka\kab  -4\rho+4\rhoF^2+ 4e_\th(\Phi)^2\Big)\Phi_0+\Big( -\frac 1 2 \ka\kab  -4\rho+4\rhoF^2+ 4e_\th(\Phi)^2\Big)e_3\Phi_0+\\
&&+\rhoF \Bigg(-\kab \left(\frac 2 rQ( \Phi_3)-4\rho \Phi_3\right)- \frac 1 r \kab Q( \Phi_3)+\frac 2 r e_3(Q( \Phi_3))-4(-\frac 3 2 \kab \rho-\kab\rhoF^2) \Phi_3-4\rho e_3\Phi_3\Bigg)
\eeaa
and since 
\beaa
&&e_3\Big( -\frac 1 2 \ka\kab  -4\rho+4\rhoF^2+ 4e_\th(\Phi)^2\Big)= \frac 1 2 \ka\kab^2+5\rho\kab- 4\kab e_\th(\Phi)^2-4\kab\rhoF^2
\eeaa
applying Lemma \ref{e3underlineP} and writing $ e_3\Phi_0=\frac 1 r \kab \Phi_1 -\frac 12 \kab \Phi_0$, $e_3 \Phi_1=\frac 1 r \kab \Phi_2 -\frac 12 \kab \Phi_1$ and $e_3 \Phi_3=\frac 1 r \kab \Phi_4 -\frac 12 \kab \Phi_3$ we have 
\beaa
e_3 (\square_\g \Phi_0)&=&\frac 1 r \Big(-\frac 92\ka\kab^2+10\rho\kab+8\kab\rhoF^2+4\kab e_\th \Phi^2\Big)  \Phi_1+\frac {1}{ r^2} \left(2\ka\kab^2-4\rho\kab\right) \Phi_2+\\
&&+\left(\frac 34 \ka\kab^2+7\rho\kab- 6\kab e_\th(\Phi)^2-6\kab\rhoF^2\right)\Phi_0+\rhoF \Bigg(\frac {2}{ r^2}  \kab Q(\Phi_4)- \frac 4 r \kab Q( \Phi_3)-\frac 2 r \ka\kab^2\Phi_4+(12\rho\kab+4\kab\rhoF^2) \Phi_3\Bigg)
\eeaa
We have for $\Phi_1=\underline{P}(\Phi_0)$, applying Lemma \ref{squareP} to $f=\Phi_0$, and using Lemma \ref{wavePhi0} ,
\beaa
\square_\g (\Phi_1)&=& \frac{1}{r}( -\ka\kab+2\rho) \underline{P}(\underline{P}(\Phi_0))+(\frac 1 2 \ka\kab-4\rho-2\rhoF^2)\underline{P}(\Phi_0)+(\frac 1 2 \rho +\rhoF^2)r\Phi_0+ \frac 3 2r \square_\g(\Phi_0)+r\kab^{-1} e_3(\square_\g(\Phi_0))=\\
&=& \frac{1}{r}( -\ka\kab+2\rho) \Phi_2+(\frac 1 2 \ka\kab-4\rho-2\rhoF^2)\Phi_1+(\frac 1 2 \rho+\rhoF^2 )r\Phi_0+\\
&&+ \frac 3 2r \Bigg(\frac 1 r \left(2\ka\kab-4\rho\right)  \Phi_1+\Big(- \frac 1 2  \ka\kab  -4\rho+4\rhoF^2+ 4 e_\th(\Phi)^2 \Big) \Phi_0+ \rhoF\left(\frac 2 rQ( \Phi_3)-4\rho\Phi_3\right) \Bigg)+\\
&&+r\kab^{-1} \Bigg[ \frac 1 r \Big(-\frac 92\ka\kab^2+10\rho\kab+8\kab\rhoF^2+4\kab e_\th \Phi^2\Big)  \Phi_1+\frac {1}{ r^2} \left(2\ka\kab^2-4\rho\kab\right) \Phi_2+\\
&&+\left(\frac 34 \ka\kab^2+7\rho\kab- 6\kab e_\th(\Phi)^2-6\kab\rhoF^2\right)\Phi_0+\rhoF \Big(\frac {2}{ r^2}  \kab Q\Phi_4- \frac 4 r \kab Q( \Phi_3)-\frac 2 r \ka\kab^2\Phi_4+(12\rho\kab+4\kab\rhoF^2) \Phi_3\Big)\Bigg]=\\
&=& \frac{1}{r}( \ka\kab-2\rho) \Phi_2+(- \ka\kab+6\rhoF^2+4e_\th\Phi^2)\Phi_1+(\frac 3 2 \rho+\rhoF^2 )r\Phi_0\\
&&+\rhoF \Bigg(\frac {2} {r}Q(\Phi_4)-Q(\Phi_3)-2\ka\kab \Phi_4+  r\left(6\rho+4\rhoF^2\right)\Phi_3\Bigg)
\eeaa
as desired.
\end{proof}
 We derive the Regge-Wheeler type equation for the curvature term $\Phi_2=\qf$ with right hand side coupled to the electromagnetic components $\Phi_3$ and $\Phi_4$, multiplied by $\rhoF$.
 
 \begin{proposition} \label{regge1} We have modulo $O(\ep^2)$,
\beaa
\square_\g(\Phi_2)&=&\Big(- \ka\kab+6\rhoF^2+4 e_\th \Phi^2\Big) \Phi_2\\
&&+ \rhoF \Bigg(\frac {2} {r}  (Q(\underline{P}(\Phi_4)))-2Q(\Phi_4)+(-4\ka\kab+4\rho)\underline{P}(\Phi_4)+r (3\ka\kab+4\rhoF^2)\Phi_4+r^2(-6\rho-12\rhoF^2)\Phi_3\Bigg)
\eeaa
\end{proposition}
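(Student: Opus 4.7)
The plan is to parallel the derivation of Proposition \ref{wavePhi1} one level higher, applying the abstract identity of Lemma \ref{squareP} with $f=\Phi_1$. Since $\Phi_2=\underline{P}(\Phi_1)$, that lemma gives
\beaa
\square_\g(\Phi_2) &=& \frac{1}{r}(-\ka\kab+2\rho)\,\underline{P}(\Phi_2) + \left(\tfrac{1}{2}\ka\kab-4\rho-2\rhoF^2\right)\Phi_2 \\
&& + \left(\tfrac{1}{2}\rho+\rhoF^2\right) r\Phi_1 + \tfrac{3}{2} r \square_\g(\Phi_1) + r\kab^{-1} e_3(\square_\g(\Phi_1)).
\eeaa
Into this I would substitute the expression for $\square_\g(\Phi_1)$ from Proposition \ref{wavePhi1}, and also compute $e_3(\square_\g(\Phi_1))$ by applying $e_3$ term by term to that RHS.

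For the $e_3$ computation, I would use the null structure and Maxwell equations from Proposition \ref{Bianchi} to differentiate the scalar coefficients $\ka,\kab,\rho,\rhoF,e_\th\Phi$; the relations $e_3(\Phi_0)=\tfrac{1}{r}\kab\,\Phi_1-\tfrac{1}{2}\kab\,\Phi_0$, $e_3(\Phi_1)=\tfrac{1}{r}\kab\,\Phi_2-\tfrac{1}{2}\kab\,\Phi_1$, $e_3(\Phi_3)=\tfrac{1}{r}\kab\,\Phi_4-\tfrac{1}{2}\kab\,\Phi_3$, and the corresponding one for $\Phi_4$; and Lemma \ref{e3underlineP} to handle $e_3(Q(\Phi_3))$ and $e_3(Q(\Phi_4))$. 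Collecting, the curvature part of $\square_\g(\Phi_2)$ will naturally organize into a coefficient times $\underline{P}(\Phi_2)$, one times $\Phi_2$, one times $r\Phi_1$, and one times $r^2\Phi_0$, while the electromagnetic part produces the expected expressions in $Q(\underline{P}(\Phi_4)), Q(\Phi_4), \underline{P}(\Phi_4), \Phi_4, \Phi_3$ (all multiplied by $\rhoF$, plus the $O(\ep^2)$ remainder).

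The heart of the proof, and the main obstacle, is the Chandrasekhar-type cancellation: the coefficient of $\underline{P}(\Phi_2)$, as well as those of $r\Phi_1$ and $r^2\Phi_0$, must all vanish identically in the Reissner-Nordstr\"om background, leaving only the Regge-Wheeler potential $-\ka\kab+6\rhoF^2+4e_\th\Phi^2$ acting on $\Phi_2$. This is the analogue of the miracle used in \cite{stabilitySchwarzschild}, and it relies crucially on the specific choice of rescaling $\Phi_0=r^2\kab^2\a$ and of the operator $\underline{P}=r\kab^{-1}e_3+\tfrac{1}{2}r$; I would verify these cancellations by using the background identities $e_3(\kab)+\tfrac{1}{2}\kab^2+2\omb\kab=O(\ep^2)$, $e_3\rho=-\tfrac{3}{2}\kab\rho+\kab\rhoF^2+O(\ep)$, $e_3\rhoF=-\kab\rhoF+O(\ep)$, and similar Ricci coefficient transport equations from Proposition \ref{Bianchi}, which already guided the corresponding cancellations at the level of $\Phi_1$ in Proposition \ref{wavePhi1}. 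Once these algebraic identities are verified, the remaining electromagnetic coupling terms can be regrouped, using that $r\rhoF$ involves factors of the quasi-local charge $e$, into the final displayed form.
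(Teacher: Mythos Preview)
Your plan is correct and mirrors the paper's proof exactly: apply Lemma \ref{squareP} with $f=\Phi_1$, insert Proposition \ref{wavePhi1} for $\square_\g\Phi_1$, compute $e_3(\square_\g\Phi_1)$ term by term using Lemma \ref{e3underlineP} and the relations $e_3\Phi_i=\tfrac{1}{r}\kab\,\Phi_{i+1}-\tfrac{1}{2}\kab\,\Phi_i$, and then observe the cancellation of the $\underline{P}(\Phi_2)$, $\Phi_1$, $\Phi_0$ coefficients. One small typo: the background identity is $e_3\rho=-\tfrac{3}{2}\kab\rho-\kab\rhoF^2+O(\ep)$, not $+\kab\rhoF^2$.
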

\begin{proof} We first compute $e_3(\square_\g \Phi_1)$, using Proposition \ref{wavePhi1}. We get
\beaa
e_3(\square_\g \Phi_1)&=&e_3 \Bigg( \frac{1}{r}( \ka\kab-2\rho) \Phi_2+(- \ka\kab+6\rhoF^2+4e_\th\Phi^2)\Phi_1+(\frac 3 2 \rho+\rhoF^2 )r\Phi_0\\
&&+\rhoF \Big(\frac {2} {r}Q(\Phi_4)-Q(\Phi_3)-2\ka\kab \Phi_4+  r\left(6\rho+4\rhoF^2\right)\Phi_3\Big) \Bigg)
\eeaa
which gives
\beaa
e_3(\square_\g \Phi_1)&=& -\frac 1 2 r^{-1}\kab \Big( \ka\kab-2\rho\Big) \Phi_2+ \frac{1}{r}\Big( (-\frac 1 2 \ka\kab+2\omb\ka+2\rho)\kab+\ka (-\frac 1 2 \kab^2-2\omb\kab)-2(-\frac 3 2 \kab \rho-\kab \rhoF^2)\Big) \Phi_2+\\
&&+ \frac{1}{r}\Big( \ka\kab-2\rho\Big) e_3\Phi_2+\Big(- (-\frac 1 2 \ka\kab+2\omb\ka+2\rho)\kab- \ka (-\frac 1 2 \kab^2-2\omb\kab)-12\kab \rhoF^2-4 \kab e_\th\Phi^2\Big)\Phi_1+\\
&&+\Big(- \ka\kab+6\rhoF^2+4 e_\th\Phi^2\Big)e_3\Phi_1+r(\frac 3 4 \kab \rho+\frac 1 2 \kab \rhoF^2) \Phi_0+r(\frac 3 2  (-\frac 3 2 \kab\rho-\kab\rhoF^2) -2\kab\rhoF^2)\Phi_0+\\
&&+r(\frac 3 2 \rho+\rhoF^2) e_3(\Phi_0)+\rhoF \Bigg[-\frac {2} {r}\kab Q\Phi_4+\kab  Q( \Phi_3)+ 2\ka\kab^2\Phi_4+r(-6\kab\rho-4\kab\rhoF^2)\Phi_3+\\
&&+ e_3(\frac {2} {r}) Q\Phi_4+\frac {2} {r} e_3(Q(\Phi_4))-  e_3(Q( \Phi_3))-2 e_3\left(\ka\kab\right)\Phi_4-2\ka\kab \ e_3\Phi_4+  \frac 1 2 \kab r\left(6\rho+4\rhoF^2\right)\Phi_3+\\
&&+  re_3\left(6\rho+4\rhoF^2\right)\Phi_3+  r\left(6\rho+4\rhoF^2\right)e_3\Phi_3\Bigg]=\\
&=& \frac{1}{r}\Big( -\frac 3 2 \ka\kab^2+6\rho\kab+2\kab \rhoF^2\Big) \Phi_2+ \frac{1}{r}\Big( \ka\kab-2\rho\Big) e_3\Phi_2+\\
&&+\Big(\ka\kab^2- 2\rho\kab-12\kab\rhoF^2-4 \kab e_\th\Phi^2\Big)\Phi_1+\Big(- \ka\kab+6\rhoF^2+4 e_\th\Phi^2\Big)e_3\Phi_1+r(-\frac 3 2 \kab \rho-3\kab\rhoF^2 )\Phi_0+\\
&&+r(\frac 3 2 \rho+\rhoF^2) e_3(\Phi_0)+\rhoF \Bigg[-\frac {3} {r}\kab Q(\Phi_4)+\kab Q( \Phi_3)+ (4\ka\kab^2-4\rho\kab)\Phi_4+r(-12\kab\rho-16\kab\rhoF^2)\Phi_3+\\
&&+\frac {2} {r} e_3Q(\Phi_4)-  e_3Q( \Phi_3)-2\ka\kab \ e_3\Phi_4+  r\left(6\rho+4\rhoF^2\right)e_3\Phi_3\Bigg]
\eeaa
Using Lemma \ref{e3underlineP} and writing $ e_3\Phi_0=\frac 1 r \kab \Phi_1 -\frac 12 \kab \Phi_0$, $e_3 \Phi_1=\frac 1 r \kab \Phi_2 -\frac 12 \kab \Phi_1$,  $e_3 \Phi_2=\frac 1 r \kab \underline{P}(\Phi_2) -\frac 12 \kab \Phi_2$, and $e_3 \Phi_3=\frac 1 r \kab \Phi_4 -\frac 12 \kab \Phi_3$, $e_3 \Phi_4=\frac 1 r \kab \underline{P}(\Phi_4) -\frac 12 \kab \Phi_4$ we have 
\beaa
e_3(\square_\g \Phi_1)&=&\frac{1}{r^2}\Big( \ka\kab^2-2\rho\kab\Big) \underline{P}(\Phi_2)+\frac{1}{r}\Big( -3 \ka\kab^2+7\rho\kab+8\kab\rhoF^2+4\kab e_\th \Phi^2\Big) \Phi_2+\Big(\frac 3 2 \ka\kab^2- \frac 1 2 \rho\kab-14\kab\rhoF^2-6 \kab e_\th\Phi^2\Big)\Phi_1+\\
&&+r(-\frac 9 4 \kab \rho-\frac 7 2 \kab \rhoF^2) \Phi_0+\rhoF \Bigg[\frac {2} {r^2} Q\underline{P}\Phi_4-\frac {5} {r}\kab Q\Phi_4+\frac {1} {r} (-4\ka\kab^2+4\rho\kab)\underline{P}\Phi_4+\frac 3 2 \kab  Q( \Phi_3)+\\
&&+ (6\ka\kab^2+4\kab\rhoF^2)\Phi_4+r(-15\kab\rho-18\kab\rhoF^2)\Phi_3\Bigg]
\eeaa
 Applying Lemma \ref{squareP} to $\Phi_2=P(\Phi_1)$, 
\beaa
\square_\g(\Phi_2)&=&\frac{1}{r}( -\ka\kab+2\rho) \underline{P}(\underline{P}(\Phi_1))+(\frac 1 2 \ka\kab-4\rho-2\rhoF^2)\underline{P}(\Phi_1)+(\frac 1 2 \rho +\rhoF^2)r\Phi_1+ \frac 3 2r \square_\g(\Phi_1)+r\kab^{-1} e_3(\square_\g(\Phi_1))=\\
&=&\frac{1}{r}( -\ka\kab+2\rho) \underline{P}(\Phi_2)+(\frac 1 2 \ka\kab-4\rho-2\rhoF^2)\Phi_2+(\frac 1 2 \rho+\rhoF^2) r \Phi_1+\\
&&+ \frac 3 2r \Bigg[\frac{1}{r}( \ka\kab-2\rho) \Phi_2+(- \ka\kab+6\rhoF^2+4e_\th\Phi^2)\Phi_1+(\frac 3 2 \rho+\rhoF^2 )r\Phi_0\\
&&+\rhoF \Big(\frac {2} {r}Q(\Phi_4)-Q(\Phi_3)-2\ka\kab \Phi_4+  r\left(6\rho+4\rhoF^2\right)\Phi_3\Big)\Bigg]+\\
&&+r\kab^{-1} \Bigg[\frac{1}{r^2}\Big( \ka\kab^2-2\rho\kab\Big) P(\Phi_2)+\frac{1}{r}\Big( -3 \ka\kab^2+7\rho\kab+8\kab\rhoF^2+4\kab e_\th \Phi^2\Big) \Phi_2+\Big(\frac 3 2 \ka\kab^2- \frac 1 2 \rho\kab-14\kab\rhoF^2-6 \kab e_\th\Phi^2\Big)\Phi_1+\\
&&+r(-\frac 9 4 \kab \rho-\frac 7 2 \kab \rhoF^2) \Phi_0+\rhoF \Big[\frac {2} {r^2} \kab QP\Phi_4-\frac {5} {r}\kab Q\Phi_4+\frac {1} {r} (-4\ka\kab^2+4\rho\kab)\underline{P}\Phi_4+\frac 3 2 \kab  Q( \Phi_3)+\\
&&+ (6\ka\kab^2+4\kab\rhoF^2)\Phi_4+r(-15\kab\rho-18\kab\rhoF^2)\Phi_3\Big]\Bigg]
\eeaa
which gives
\beaa
\square_\g(\Phi_2)&=&\Big(- \ka\kab+6\rhoF^2+4 e_\th \Phi^2\Big) \Phi_2\\
&&+ \rhoF \Bigg(\frac {2} {r}  (Q(\underline{P}(\Phi_4)))-2Q(\Phi_4)+(-4\ka\kab+4\rho)\underline{P}(\Phi_4)+r (3\ka\kab+4\rhoF^2)\Phi_4+r^2(-6\rho-12\rhoF^2)\Phi_3\Bigg)
\eeaa
as desired.
\end{proof}

\begin{remark} Notice that the wave equation given by Proposition \ref{regge1}, for $\Phi_2=\qf$ and $\Phi_4=\qf^{\F}$ has the form
\beaa
\square_\g \qf= V_1 \qf+ e \mathcal{M}(\qf^{\F}, \partial \qf^{\F}, \partial \partial \qf^{\F})+e (l.o.t.(\qf^\F))+e^2 (l.o.t.(\qf))
\eeaa
of the first equation of \eqref{schemesystem}. Indeed, $\Phi_3$ is a lower order term with respect to $\qf^\F$.
\end{remark}

\subsection{Wave equations for the electromagnetic quantities $\Phi_3, \Phi_4$}

We compute the wave equations for the electromagnetic terms $\Phi_3, \Phi_4$.

\begin{proposition}\label{wavePhi3} We have modulo $O(\ep^2)$,
\beaa
\square_\g \Phi_3&=&\frac 1 r\Big( \ka\kab-2\rho \Big)\Phi_4+\Big(-\ka\kab-3\rho +4 e_\th\Phi^2 \Big)\Phi_3+  \rhoF \Big (-\frac 2 r  \Phi_1- \Phi_0 \Big)
\eeaa
\end{proposition}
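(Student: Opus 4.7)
The strategy is a direct analogue of the derivation of the wave equation for $\Phi_0$ in Proposition \ref{wavePhi0}, but starting from the Teukolsky equation for $\ff$ (Proposition \ref{squareff}) rather than the one for $\a$. First I would expand
\begin{eqnarray*}
\square_\g \Phi_3 &=& \square_\g(r^2\kab\,\ff)\\
&=& \square_\g(r^2\kab)\,\ff + r^2\kab\,\square_\g\ff - e_3(r^2\kab)\,e_4(\ff) - e_4(r^2\kab)\,e_3(\ff).
\end{eqnarray*}
The first three ingredients are purely background computations. Using $e_3(r)=\tfrac r 2\kab+O(\ep)$, $e_4(r)=\tfrac r 2\ka+O(\ep)$ and the null structure equations for $e_3\kab$, $e_4\kab$ from Proposition \ref{Bianchi}, one obtains explicit polynomial expressions in $\ka,\kab,\om,\omb,\rho,\rhoF,e_\th\Phi$ for $e_3(r^2\kab)$, $e_4(r^2\kab)$, and then (via \eqref{squaref}) for $\square_\g(r^2\kab)$.

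Next I would substitute the Teukolsky equation
\begin{eqnarray*}
\square_\g\ff &=&\Big(\tfrac 3 2\ka+2\om\Big)e_3\ff+\Big(\tfrac 1 2\kab-2\omb\Big)e_4\ff+\Big(\tfrac 1 2\ka\kab+2\om\kab-2\rho-4\ka\omb-2e_4\omb+4e_\th\Phi^2\Big)\ff\\
&&+\rhoF\Big(-2e_3\a-(2\kab-8\omb)\a\Big)
\end{eqnarray*}
into the expansion above. The $e_4\ff$ contributions combine with the $e_4(r^2\kab)\,e_3\ff$ term (note the $e_3$ vs $e_4$ mismatch is handled via the product formula, as in the proof of Proposition \ref{wavePhi0}), and the terms proportional to $\om,\omb,e_4\omb$ should cancel out, exactly as happened for $\Phi_0$: this frame-independence is the structural reason the rescaling $r^2\kab$ was chosen.

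The remaining step is bookkeeping to re-express everything in terms of the $\Phi_i$. Using $\Phi_3=r^2\kab\ff$ and the definition $\Phi_4=\underline{P}(\Phi_3)=r\kab^{-1}e_3\Phi_3+\tfrac 1 2 r\Phi_3$, one rewrites
\begin{eqnarray*}
e_3\ff &=& \frac{1}{r^2\kab}\Big(\frac 1 r\kab\Phi_4-\frac 1 2\kab\Phi_3\Big)+(\text{lower order})\,\ff,
\end{eqnarray*}
and similarly for the $\a$-coupling on the right: $r^2\kab\rhoF\cdot(-2e_3\a-(2\kab-8\omb)\a)$ is converted through $\Phi_0=r^2\kab^2\a$ and $e_3\Phi_0=\tfrac 1 r\kab\Phi_1-\tfrac 1 2\kab\Phi_0$ into $\rhoF(-\tfrac 2 r\Phi_1-\Phi_0)$ up to $O(\ep^2)$.

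The main obstacle I anticipate is not any single step but the sheer volume of cancellations among the $\om,\omb,e_4\omb,\ka\omb,\om\kab$ coefficients: one must carefully collect contributions from $\square_\g(r^2\kab)$, from the $\om,\omb$-terms in $\square_\g\ff$, and from the $e_3(r^2\kab)\,e_4\ff$ and $e_4(r^2\kab)\,e_3\ff$ cross terms, and verify that all frame-dependent coefficients drop out leaving the clean coefficients $-\ka\kab-3\rho+4e_\th\Phi^2$ for $\Phi_3$ and $\tfrac 1 r(\ka\kab-2\rho)$ for $\Phi_4$. This is the same pattern of cancellations that produced the absence of $\om,\omb$ in Proposition \ref{wavePhi0}, so no new idea is required beyond careful algebraic accounting.
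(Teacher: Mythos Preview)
Your proposal is correct and follows essentially the same approach as the paper: expand $\square_\g(r^2\kab\,\ff)$ by the product rule, insert the Teukolsky equation for $\ff$ from Proposition \ref{squareff}, verify the cancellation of all $\om,\omb,e_4\omb$ terms, and rewrite $e_3\ff$ and the $\a$-coupling in terms of $\Phi_0,\Phi_1,\Phi_3,\Phi_4$ via $e_3\Phi_i=\tfrac1r\kab\,\underline{P}\Phi_i-\tfrac12\kab\Phi_i$. The only cosmetic difference is that the paper factors $r^2\kab\ff$ as $r^2\cdot(\kab\ff)$ in the product expansion rather than treating $r^2\kab$ as a single prefactor; the resulting intermediate expressions and final cancellations are identical.
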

\begin{proof} We have for $\Phi_3=r^2 \kab \ff$,
\beaa
\square_\g \Phi_3&=& \square_\g (r^2) \kab \ff+r^2 (\square(\kab)\ff+\kab \square( \ff)-e_3(\kab)e_4( \ff)-e_4(\kab) e_3(\ff))-e_3(r^2) (e_4(\kab) \ff+\kab e_4(\ff))-e_4(r^2) (e_3(\kab) \ff+\kab e_3(\ff))
\eeaa
As in Proposition \ref{wavePhi0}, using Proposition \ref{squareff}, we have 
\beaa
\square_\g \Phi_3&=&\Big(-5\rho-2\omb\ka+4\rho\omb\kab^{-1} +4 e_\th\Phi^2 \Big)\Phi_3+r^2 \Big( \ka\kab-2\rho \Big)e_3(\ff)+\\
&&+ 2r^2  \rhoF \Big (-\kab e_3 (\a)+ (-\kab^{2}+4\omb\kab)\a \Big) 
\eeaa
Writing $e_3(\ff)=\frac {1}{r^2}\kab^{-2}e_3(\Phi_3)+4\omb \ff$ 
and $e_3\a=\frac {1}{r^2}\kab^{-2}e_3(\Phi_0)+4\omb \a$, and $e_3\Phi_0=\frac 1 r \kab \Phi_1 -\frac 12 \kab \Phi_0$ we have
\beaa
\square_\g \Phi_3&=&\frac 1 r \Big( \ka\kab-2\rho \Big) P \Phi_3+\Big(-\ka\kab-3\rho +4 e_\th\Phi^2 \Big)\Phi_3+  \rhoF \Big (-\frac 2 r  \Phi_1- \Phi_0 \Big) 
\eeaa
as desired.
\end{proof}
We derive the Regge-Wheleer type equation for the quantity $\Phi_4=\qf^\F$, with on the right hand side the curvature multiplied by $\rhoF$.

\begin{proposition}\label{regge2} We have modulo $O(\ep^2)$,
\beaa
\square_\g (\Phi_4)&=&\left(- \ka\kab-3\rho+4e_\th\Phi^2\right)\Phi_4+\rhoF  \Big(-\frac {2}{ r} \Phi_2+ \rhoF \left(4r\Phi_3 \right)\Big)
\eeaa
\end{proposition}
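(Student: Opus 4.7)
The plan is to mirror the structure of Proposition \ref{regge1}: since $\Phi_4 = \underline{P}(\Phi_3)$ by definition, I apply Lemma \ref{squareP} with $f = \Phi_3$, which expresses $\square_\g(\Phi_4)$ as a linear combination of $\underline{P}(\Phi_4)$, $\Phi_4$, $\Phi_3$, $\square_\g(\Phi_3)$, and $e_3(\square_\g(\Phi_3))$. I then substitute Proposition \ref{wavePhi3} for $\square_\g(\Phi_3)$ and compute $e_3(\square_\g(\Phi_3))$ by differentiating the right-hand side of Proposition \ref{wavePhi3} along $e_3$. The background-level $e_3$ derivatives of $\ka\kab - 2\rho$, $\ka\kab + 3\rho - 4e_\th\Phi^2$, $\rhoF$, and $\frac{1}{r}$ needed for this step follow directly from the linearized null structure, Bianchi, and Maxwell equations collected in Proposition \ref{Bianchi}, exactly as in the analogous step of Proposition \ref{wavePhi1}.

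After this differentiation I rewrite every remaining $e_3$ derivative acting on $\Phi_0, \Phi_1, \Phi_3, \Phi_4$ using the standard identities
\[ e_3\Phi_0 = \frac{1}{r}\kab\,\Phi_1 - \frac{1}{2}\kab\,\Phi_0, \qquad e_3\Phi_1 = \frac{1}{r}\kab\,\Phi_2 - \frac{1}{2}\kab\,\Phi_1, \]
\[ e_3\Phi_3 = \frac{1}{r}\kab\,\Phi_4 - \frac{1}{2}\kab\,\Phi_3, \qquad e_3\Phi_4 = \frac{1}{r}\kab\,\underline{P}(\Phi_4) - \frac{1}{2}\kab\,\Phi_4, \]
and then collect the coefficients of $\underline{P}(\Phi_4)$, $\Phi_4$, $\Phi_3$, $\rhoF\Phi_2$, $\rhoF\Phi_1$, and $\rhoF\Phi_0$ modulo $O(\ep^2)$.

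The main obstacle is organizing this bookkeeping so that two essential cancellations become visible. First, the $\underline{P}(\Phi_4)$ contribution produced by $r\kab^{-1}e_3(\square_\g\Phi_3)$ must cancel the leading $\frac{1}{r}(-\ka\kab + 2\rho)\underline{P}(\Phi_4)$ term supplied by Lemma \ref{squareP}; this is precisely the same derivative-killing mechanism that promotes the equation for $\Phi_1$ to the pure-potential equation for $\Phi_2$ in Proposition \ref{regge1}. Second, and more delicately, the lower-order curvature contributions $\rhoF\Phi_0$ and $\rhoF\Phi_1$ generated on one hand by $\frac{3}{2}r\,\square_\g\Phi_3$ and on the other by $r\kab^{-1}e_3(\square_\g\Phi_3)$ must cancel identically. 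This cancellation is forced on structural grounds by the signature argument in the remark following Proposition \ref{squareff}: the signature-zero quantity $\Phi_4$ can couple at leading order to a curvature object only if that object is itself of signature zero, which selects $\Phi_2 = \qf$ and rules out the lower-level $\Phi_0$ and $\Phi_1$.

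Once both cancellations are implemented, collecting the surviving potential of $\Phi_4$ from all sources (Lemma \ref{squareP}, the non-derivative $\Phi_4$ pieces of $\frac{3}{2}r\,\square_\g\Phi_3$, the $e_3\Phi_4$ expansion in the last term, and the $e_3$ of the $\Phi_4$-coefficient in Proposition \ref{wavePhi3}) should produce the claimed $-\ka\kab - 3\rho + 4e_\th\Phi^2$. The $\rhoF\Phi_2$ coupling on the right-hand side will arise from the expansion of $-\frac{2}{r}\rhoF\,e_3\Phi_1$ inside $r\kab^{-1}e_3(\square_\g\Phi_3)$, while the $\rhoF^2\Phi_3$ coupling assembles from the $(\frac{1}{2}\rho + \rhoF^2)r\Phi_3$ contribution of Lemma \ref{squareP} together with the $e_3\rhoF$ terms, reproducing the coefficient $4r\rhoF^2$.
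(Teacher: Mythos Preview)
Your approach is correct and essentially identical to the paper's proof: apply Lemma \ref{squareP} with $f=\Phi_3$, substitute Proposition \ref{wavePhi3} for $\square_\g\Phi_3$, differentiate that right-hand side along $e_3$ using the background identities, rewrite all $e_3\Phi_i$ via the $\underline{P}$-relations, and collect.

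One point deserves correction. Your signature heuristic for the second cancellation is not valid: as noted in Section \ref{acknSteff}, \emph{all} of $\Phi_0,\Phi_1,\Phi_2,\Phi_3,\Phi_4$ have signature zero (the rescalings by $\kab^2$, $\kab$ and the operator $\underline{P}$ were designed precisely for this), so signature alone does not distinguish $\Phi_2$ from $\Phi_0,\Phi_1$ and cannot explain why the latter drop out. The cancellation of the $\rhoF\Phi_0$ and $\rhoF\Phi_1$ terms is a genuine computational fact: the contributions $\rhoF\bigl(-3\Phi_1-\tfrac{3}{2}r\Phi_0\bigr)$ from $\tfrac{3}{2}r\,\square_\g\Phi_3$ and $\rhoF\bigl(3\Phi_1+\tfrac{3}{2}r\Phi_0\bigr)$ from $r\kab^{-1}e_3(\square_\g\Phi_3)$ (the latter coming from $e_3\rhoF=-\kab\rhoF$ together with the $e_3\Phi_0$, $e_3\Phi_1$ expansions) cancel exactly. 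Also, the $4r\rhoF^2\Phi_3$ term does not arise from $e_3\rhoF$ as you suggest; it comes from the $\rhoF^2$ contribution hidden in $e_3\rho=-\tfrac{3}{2}\kab\rho-\kab\rhoF^2+O(\ep)$ when you differentiate the potential $-\ka\kab-3\rho+4e_\th\Phi^2$, combined with the $(\tfrac{1}{2}\rho+\rhoF^2)r\Phi_3$ term from Lemma \ref{squareP} and the $\Phi_3$-potentials from the other two pieces.
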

\begin{proof} We first compute $e_3 (\square_\g \Phi_3)$, using Proposition \ref{wavePhi3},
\beaa
e_3 (\square_\g \Phi_3)&=&e_3\Bigg(\frac 1 r\Big( \ka\kab-2\rho \Big)\Phi_4+\Big(-\ka\kab-3\rho +4 e_\th\Phi^2 \Big)\Phi_3+  \rhoF \Big (-\frac 2 r  \Phi_1- \Phi_0 \Big)\Bigg)=\\
&=&- \frac 1 2 r^{-1} \kab \left(\ka\kab-2\rho\right)  \Phi_4+\frac 1 r \left((-\frac 1 2 \ka\kab+2\omb\ka+2\rho)\kab+\ka (-\frac 1 2 \kab^2-2\omb\kab)-2(-\frac 3 2 \kab \rho-\kab\rhoF^2)\right)  \Phi_4+\frac 1 r \left(\ka\kab-2\rho\right)  e_3\Phi_4+\\
&&+e_3\Big( - \ka\kab  -3\rho+ 4e_\th(\Phi)^2\Big)\Phi_3+\Big( - \ka\kab  -3\rho+ 4e_\th(\Phi)^2\Big)e_3\Phi_3+\rhoF \Bigg(\frac 2 r \kab \Phi_1+\kab \Phi_0 -e_3(\frac 2 r)  \Phi_1-\frac 2 r  e_3\Phi_1- e_3\Phi_0\Bigg)
\eeaa
Since 
\beaa
&&e_3\Big( - \ka\kab  -3\rho+ 4e_\th(\Phi)^2\Big)= - e_3\ka\kab- \ka e_3\kab  -3e_3\rho+ 8e_\th(\Phi) e_3 e_\th \Phi=\\
&=& - (-\frac 1 2 \ka\kab+2\omb\ka+2\rho)\kab- \ka (-\frac 1 2\kab^2-2\omb\kab)  -3(-\frac 3 2 \kab\rho-\kab\rhoF^2)- 4\kab e_\th(\Phi)^2=\\
&=&\ka\kab^2+\frac 5 2 \rho\kab+3\kab\rhoF^2- 4\kab e_\th(\Phi)^2
\eeaa
and writing $ e_3\Phi_0=\frac 1 r \kab \Phi_1 -\frac 12 \kab \Phi_0$, $e_3 \Phi_1=\frac 1 r \kab \Phi_2 -\frac 12 \kab \Phi_1$, $e_3\Phi_3=\frac 1 r \kab \Phi_4 -\frac 12 \kab \Phi_3$, $e_3\Phi_4=\frac 1 r \kab \underline{P}\Phi_4 -\frac 12 \kab \Phi_4$  we have 
\beaa
e_3 (\square_\g \Phi_3)&=&\frac 1 r \left(-3 \ka\kab^2+4\rho\kab+2\kab\rhoF^2+4\kab e_\th\Phi^2\right)  \Phi_4+\frac {1}{ r^2} \left(\ka\kab^2-2\rho\kab \right)  \underline{P}\Phi_4+\left(\frac 3 2 \ka\kab^2+4\rho\kab+3\kab\rhoF^2- 6\kab e_\th(\Phi)^2\right)\Phi_3+\\
&&+\rhoF \Bigg(-\frac {2}{ r^2}  \kab \Phi_2+\frac 3 r \kab \Phi_1+\frac 3 2 \kab \Phi_0 \Bigg)
\eeaa
Applying Lemma \ref{squareP}, we have 
\beaa
\square_\g (\Phi_4)&=&\frac{1}{r}( -\ka\kab+2\rho) \underline{P}(\underline{P}(\Phi_3))+(\frac 1 2 \ka\kab-4\rho-2\rhoF^2)\underline{P}(\Phi_3)+\left(\frac 1 2 \rho+\rhoF^2 \right)r\Phi_3+ \frac 3 2r \square_\g(\Phi_3)+\kab^{-1}r e_3(\square_\g(\Phi_3))=\\
&=&\frac{1}{r}( -\ka\kab+2\rho) \underline{P}(\Phi_4)+(\frac 1 2 \ka\kab-4\rho-2\rhoF^2)\Phi_4+\left(\frac 1 2 \rho+\rhoF^2 \right)r\Phi_3+\\
&&+ \frac 3 2r \Big(\frac 1 r\Big( \ka\kab-2\rho \Big)\Phi_4+\Big(-\ka\kab-3\rho +4 e_\th\Phi^2 \Big)\Phi_3+  \rhoF \Big (-\frac 2 r  \Phi_1- \Phi_0 \Big) \Big)+\\
&&+\kab^{-1}r \Big(\frac 1 r \left(-3 \ka\kab^2+4\rho\kab+2\kab\rhoF^2+4\kab e_\th\Phi^2\right)  \Phi_4+\frac {1}{ r^2} \left(\ka\kab^2-2\rho\kab \right)  \underline{P}\Phi_4 +\\
&&+\left(\frac 3 2 \ka\kab^2+4\rho\kab+3\kab\rhoF^2- 6\kab e_\th(\Phi)^2\right)\Phi_3+\rhoF \Bigg(-\frac {2}{ r^2}  \kab \Phi_2+\frac 3 r \kab \Phi_1+\frac 3 2 \kab \Phi_0 \Bigg) \Big)
\eeaa
which gives 
\beaa
\square_\g (\Phi_4)&=&(- \ka\kab-3\rho+4e_\th\Phi^2)\Phi_4+\rhoF  \Big(-\frac {2}{ r} \Phi_2+ 4r\rhoF \Phi_3 \Big)
\eeaa
as desired.
\end{proof}

\begin{remark} Notice that the wave equation given by Proposition \ref{regge2}, for $\Phi_2=\qf$ and $\Phi_4=\qf^{\F}$ has the form
\beaa
\square_\g \qf^{\F}=V_2 \qf^{\F}+e \mathcal{M}(\qf) +e^2( l.o.t.(\qf^\F))
\eeaa
of the second equation of \eqref{schemesystem}.
\end{remark}

Using the wave equation for $\Phi_4$, we can simplify the wave equation for $\Phi_2$ in Proposition \ref{regge1}, since the derivative $\underline{P} P$ is related to $\square_2:=\square_\g -(2)^2 e_\th\Phi^2$ in the following way:
\beaa
\frac{1}{r^2}Q(\underline{P}(\Phi_4))&=&-\square_2 \Phi_4+\frac{1}{r}(  \ka\kab-2\rho) \underline{P}(\Phi_4)+\lapp_2\Phi_4 +\rho \Phi_4, 
\eeaa
where $\lapp_2$ is the Laplacian on the spheres $S$ of the foliation of the spacetime.
Using Proposition \ref{regge2}, we can write
\beaa
Q(\underline{P}(\Phi_4))&=&r(  \ka\kab-2\rho) \underline{P}(\Phi_4)+r^2\lapp_2\Phi_4+r^2(\ka\kab+4\rho) \Phi_4+ 2 r \rhoF \Phi_2-4r^3\rhoF^2\Phi_3 
\eeaa
giving 
\beaa
\square_\g \Phi_2&=&\Big(- \ka\kab+10\rhoF^2+4 e_\th \Phi^2\Big) \Phi_2\\
&&+ \rhoF \Bigg(2r\lapp_2\Phi_4-2Q(\Phi_4)-2\ka\kab \underline{P}(\Phi_4)+r (5\ka\kab+8\rho+4\rhoF^2)\Phi_4+r^2(-6\rho-20\rhoF^2)\Phi_3\Bigg)+\\
&&+\rhoF^2\left(-4r\Phi_1-2r^2\Phi_0 \right)
\eeaa

\subsection{The system of coupled wave equations}

Writing the five equations together, using \eqref{rhoFcharge} to write $\rhoF=\frac{e}{r^2} +O(\ep)$ in the coupling terms, we found the following system of equations modulo $O(\ep^2)$:

\beaa\label{system}
\left(\square_2+ \frac 1 2  \ka\kab  +4\rho-4\rhoF^2 \right)\Phi_0&=& \frac 1 r \left(2\ka\kab-4\rho\right)  \Phi_1+ e\left(\frac {2}{ r^3}Q( \Phi_3)-\frac{4}{r^2}\rho\Phi_3\right),\\
\Big(\square_2+ \ka\kab-6\rhoF^2 \Big)\Phi_1&=&\frac{1}{r}( \ka\kab-2\rho) \Phi_2+r\left(\frac 3 2 \rho +\rhoF^2\right)\Phi_0+\\
&&+e \Bigg(\frac {2} {r^3} Q(\Phi_4)-  \frac {1}{r^2}Q( \Phi_3)- \frac {2}{r^2}\ka\kab\Phi_4+\frac 1 r (6\rho+4\rhoF^2) \Phi_3\Bigg), \\
\Big(\square_2+ \ka\kab-10\rhoF^2\Big)\Phi_2&=& e\Bigg(\frac{2}{r}\lapp_2\Phi_4-\frac{2}{r^2}Q(\Phi_4)-\frac{2}{r^2}\ka\kab \underline{P}(\Phi_4) + \frac 1 r \left(5\ka\kab+8\rho+4\rhoF^2\right)\Phi_4-6\rho \Phi_3\\
&&+e\left(-\frac{4}{r^3}\Phi_1-\frac{2}{r^2} \Phi_0 \right)+e^2 \left( -\frac{20}{r^4} \Phi_3\right)\Bigg), \\
\Big(\square_2+\ka\kab+3\rho \Big)\Phi_3&=&\frac 1 r\Big( \ka\kab-2\rho \Big)\Phi_4+  e \Big (-\frac {2}{ r^3}  \Phi_1- \frac{1}{r^2}\Phi_0 \Big), \\
\Big(\square_2+\ka\kab+3\rho \Big)\Phi_4&=&e \Big(-\frac {2}{ r^3} \Phi_2+\frac{4e}{r^3}\Phi_3 \Big)
\eeaa
where $\square_2=\square_\g -(2)^2 e_\th\Phi^2$ is the wave operator applied to $2$-reduced scalars.

\begin{remark}\label{spin-2} A complete analogous system holds for the spin $-2$ quantities $\aa$. Defining the operator $Pf=r \ka^{-1} e_4f +\frac 12 rf $ and the $O(\ep^2)$ invariant quantity $\underline{\ff}=\dds_2\bbF+\vthb\rhoF$, then the quantities $\widetilde{\Phi_0}=r^2 \ka^2 \aa$, $\widetilde{\Phi_1}=P(\widetilde{\Phi_0})$, $\widetilde{\Phi_2}=P(\widetilde{\Phi_1})$, $\widetilde{\Phi_3}=r^2 \ka \underline{\ff}$, $\widetilde{\Phi_4}=P(\widetilde{\Phi_3})$ verify the same system above, with $\underline{Q}f=r\ka  e_3f+\frac 1 2r \ka\kab f$.
\end{remark}

Selecting the third and fifth equation we have the system of Regge-Wheeler type equations for $\qf=\Phi_2$ and $\qf^{\F}=\Phi_4$ modulo $O(\ep^2)$, as announced in \eqref{schemesystem}. We summarize it in the following theorem.

\begin{theorem} Let $(\M, \g, \Z)$ be an axially symmetric polarized spacetime solution of the Einstein-Maxwell equation \eqref{Einsteineq}, which is a $O(\ep)$-perturbation of Reissner-Nordstr{\"o}m spacetime. Then there exist $O(\ep^2)$-invariant quantities $\qf$ and $\qf^{\F}$ related to the Weyl curvature and to the Ricci curvature respectively that verify the following coupled system of wave equations, modulo $O(\ep^)$ terms,
\bea\label{finalsystem}
\begin{cases}
\Big(\square_2+ \ka\kab-10\rhoF^2\Big)\qf= e\Bigg(\frac{2}{r}\lapp_2\qf^{\F}-\frac{2}{r^2}Q\qf^{\F}-\frac{2}{r^2}\ka\kab \underline{P}\qf^{\F} + \frac 1 r \left(5\ka\kab+8\rho+4\rhoF^2\right)\qf^{\F}\Bigg)+e(l.o.t.)_1, \\
\Big(\square_2+\ka\kab+3\rho\Big)\qf^{\F}=e \Bigg(-\frac {2}{ r^3} \qf\Bigg) +e^2 (l.o.t.)_2
\end{cases}
\eea
where $\underline{P}$ and $Q$ are rescaled null derivatives, as defined in \eqref{operators}, and $(l.o.t.)_1$ and $(l.o.t.)_2$ are lower order terms with respect to $\qf$ and $\qf^{\F}$, explicitely,
\beaa
(l.o.t.)_1&=&-6\rho \Phi_3+e\left(-\frac{4}{r^3}\Phi_1-\frac{2}{r^2} \Phi_0 \right)+e^2 \left( -\frac{20}{r^4} \Phi_3\right), \\
(l.o.t.)_2&=&\frac{4}{r^3} \Phi_3 
\eeaa
\end{theorem}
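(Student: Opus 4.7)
The overall strategy is to start from the two Teukolsky-type equations already established in Propositions \ref{squarea} and \ref{squareff} for the $O(\ep^2)$-invariant scalars $\a$ and $\ff$, and then apply the Chandrasekhar-type transformation $\underline{P}$ twice to the rescaled curvature $\Phi_0=r^2\kab^2\a$ and once to the rescaled electromagnetic quantity $\Phi_3=r^2\kab\ff$. The target is a pair of clean Regge--Wheeler wave equations for $\qf=\Phi_2=\underline{P}(\underline{P}(\Phi_0))$ and $\qf^{\F}=\Phi_4=\underline{P}(\Phi_3)$, where the coupling between them is forced by the Einstein--Maxwell equations and carries a factor of $\rhoF=e/r^2+O(\ep)$.

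First, I will promote the Teukolsky equations for $\a$ and $\ff$ to wave equations for the signature-zero rescalings $\Phi_0,\Phi_3$, absorbing the ``bad'' $e_4(\a)$ and $e_3(\a)$ terms into the $\underline{P}$-derivatives. This is exactly what Propositions \ref{wavePhi0} and \ref{wavePhi3} already do, using the identities $e_3(r^2\kab^2)$, $e_4(r^2\kab^2)$ and the analogous ones for $r^2\kab$, together with the observation that the combination $r^2\kab^2 e_4(\ff)+r^2(\ka\kab^2+2\om\kab^2)\ff$ assembles into $\tfrac{1}{r}Q(\Phi_3)-2\rho\Phi_3$. Next, I iterate: applying Lemma \ref{squareP} to $\Phi_0$ and $\Phi_3$ produces $\square_\g(\Phi_1)$ and $\square_\g(\Phi_4)$ in terms of $\Phi_2,\Phi_4$ and lower terms, and a second iteration applying Lemma \ref{squareP} to $\Phi_1$ (already carried out in Proposition \ref{regge1}) produces $\square_\g(\Phi_2)$, which is the heart of the curvature equation.

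At this stage the equation for $\square_\g\Phi_2$ contains the awkward combination $\frac{1}{r}Q(\underline{P}(\Phi_4))$. The main simplification step is to rewrite $Q(\underline{P}(\Phi_4))$ using the already-derived wave equation for $\Phi_4$. Concretely, from Proposition \ref{regge2} one solves for $Q(\underline{P}(\Phi_4))$ via the identity $\frac{1}{r^2}Q(\underline{P}(\Phi_4))=-\square_2\Phi_4+\frac{1}{r}(\ka\kab-2\rho)\underline{P}(\Phi_4)+\lapp_2\Phi_4+\rho\Phi_4$, substituting the Regge--Wheeler form of $\square_2\Phi_4$ to obtain $Q(\underline{P}(\Phi_4))$ purely in terms of $\underline{P}(\Phi_4),\Phi_4,\Phi_3,\Phi_2$ and lower-order pieces. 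Feeding this back into the expression for $\square_\g\Phi_2$ collapses the principal coupling into the clean form $\frac{2e}{r}\lapp_2\qf^{\F}-\frac{2e}{r^2}Q\qf^{\F}-\frac{2e}{r^2}\ka\kab\,\underline{P}\qf^{\F}+\tfrac{e}{r}(5\ka\kab+8\rho+4\rhoF^2)\qf^{\F}$, plus genuinely lower-order terms in $\Phi_3,\Phi_1,\Phi_0$ that make up $(l.o.t.)_1$. Substituting $\rhoF=e/r^2+O(\ep)$ in every coupling coefficient then distributes the charge factors as advertised in \eqref{finalsystem}.

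The hardest part is bookkeeping rather than any single conceptual step: the commutators $[\square_\g,e_3]$, $[\square_\g,e_\th]$ and the $e_3$-derivatives of $\ka,\kab,\rho,\rhoF,\om,\omb,e_\th\Phi$ all contribute to the zeroth-order potential, and one must verify that all $e_4$-derivatives of the Ricci coefficients cancel cleanly so that the resulting operator is $\square_2+\ka\kab-10\rhoF^2$ for $\qf$ and $\square_2+\ka\kab+3\rho$ for $\qf^{\F}$, with no surviving first-order terms. The cancellations are ensured by signature considerations (the $\rhoF$-coupling on the right-hand side of the $\qf$-equation must appear with an $e_4$-type derivative and the $\qf$-coupling on the right-hand side of the $\qf^{\F}$-equation with an $e_3$-type derivative), and by the specific choice of weights $r^2\kab^2$ and $r^2\kab$ which make $\Phi_0$ and $\Phi_3$ signature-zero. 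Once these cancellations are verified, the theorem follows by simply isolating the third and fifth equations of the system displayed at the end of Section 7.3 and relabelling $\Phi_2=\qf$, $\Phi_4=\qf^{\F}$.
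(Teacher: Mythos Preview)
Your proposal is correct and follows essentially the same approach as the paper: promote the Teukolsky equations for $\a$ and $\ff$ to wave equations for the signature-zero rescalings $\Phi_0,\Phi_3$ (Propositions \ref{wavePhi0}, \ref{wavePhi3}), iterate with Lemma \ref{squareP} to obtain the equations for $\Phi_1,\Phi_2,\Phi_4$ (Propositions \ref{wavePhi1}, \ref{regge1}, \ref{regge2}), then simplify the $Q(\underline{P}(\Phi_4))$ term in the $\Phi_2$-equation via the identity $\frac{1}{r^2}Q(\underline{P}(\Phi_4))=-\square_2\Phi_4+\frac{1}{r}(\ka\kab-2\rho)\underline{P}(\Phi_4)+\lapp_2\Phi_4+\rho\Phi_4$ combined with Proposition \ref{regge2}, and finally extract the third and fifth equations with $\rhoF=e/r^2+O(\ep)$. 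The bookkeeping you flag as the hardest part is exactly what the paper carries out explicitly in the proofs of those propositions.
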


\begin{remark} The structure of the coupling in \eqref{finalsystem} doesn't depend on the polarization of the metric, as observed in \cite{Steffen}. See Appendix. 
\end{remark}

\subsection{Case of perturbation of Schwarzschild spacetime}\label{Sch}

Coupled gravitational and electromagnetic perturbations of Reissner-Nordstr{\"o}m spacetime are clearly a generalization of gravitational perturbations of Schwarzschild spacetime as solution to the vacuum Einstein equation \eqref{vacuum}, as treated in \cite{stabilitySchwarzschild}. In this case, the electromagnetic quantities and the quasi-local charge in \eqref{system} vanish identically, and the system reduces to the first three equations:
\beaa
\left(\square_2+ \frac 1 2  \ka\kab  +4\rho \right)\Phi_0&=& \frac 1 r \left(2\ka\kab-4\rho\right)  \Phi_1+ O(\ep^2), \\
\Big(\square_2+ \ka\kab \Big)\Phi_1&=&\frac{1}{r}( \ka\kab-2\rho) \Phi_2+\frac 3 2 \rho r\Phi_0+O(\ep^2), \\
\Big(\square_2+ \ka\kab\Big)\Phi_2&=& O(\ep^2)
\eeaa
which are the linear parts of the equations obtained in Appendix A.3.2 of \cite{stabilitySchwarzschild}. In particular, the last equation, for $\qf=\Phi_2$ is
\beaa
\square_2 \qf + \ka\kab \qf =\err[\square_\g \qf]
\eeaa 
which is the main equation used in \cite{stabilitySchwarzschild}, to derive decay estimates for $\qf$, and subsequently for $\a$ and all other curvature and connection coefficients quantities. 

In the case of coupled gravitational and electromagnetic perturbation of Schwarzschild spacetime, namely perturbation of Schwarzschild as solution to the Einstein-Maxwell equation \eqref{Einsteineq}, the system \eqref{finalsystem} simplifies. Perturbing the background Schwarzschild, being a vacuum spacetime, we have $\rhoF=O(\ep)$. Therefore, the main equation for the curvature $\qf$ is unchanged, but the right hand side is given by quadratic terms only, i.e.
\beaa
\square_2 \qf + \ka\kab \qf =O(\ep^2)
\eeaa 
Again since $\rhoF=O(\ep)$, using Proposition \ref{prop:transformations1}, in the case of perturbation of Schwarzschild the extreme components of the electromagnetic tensor $\bF, \bbF$ are $O(\ep^2)$-invariant quantities, and verify the Teukolsky equation
\beaa
\square_\g \bF&=&-2\omb e_4\bF +\left(\ka+2\om\right) e_3(\bF)+(\frac 1 4 \ka \kab-3\ka\omb+\om\kab+e_\th\Phi^2-2e_4\omb) \bF +O(\ep^2)
\eeaa
as derived in Proposition \ref{squareff}. As in \cite{Federico}, we can define a Chandrasekhar-type transformation at the level of one derivative along the ingoing null direction to obtain a Regge-Wheeler equation. Defining $\frak{l}=r^2(e_3 \aF_\th + \frac 1 2  \kab \aF_\th)$, in the case of $\rhoF=O(\ep)$ in a frame for which $\om=O(\ep)$, then $\frak{l}$ verifies the equation 
\beaa
\square_{1} \frak{l}&=-\frac 1 4 \ka \kab \frak{l} +O(\ep^2)
\eeaa
The system is therefore given by equations which are decoupled at the linear level
\beaa
\begin{cases}
\square_2 \qf + \ka\kab \qf =O(\ep^2), \\
\square_{1} \frak{l}+\frac 1 4 \ka \kab \frak{l} =O(\ep^2)
\end{cases}
\eeaa

\begin{remark} It is only in the case of gravitational and electromagnetic perturbations of Reissner-Nordstr{\"o}m spacetime that we find a non-trivial coupling for the linear terms of the equations for $\qf$ and $\qf^\F$ as described in system \eqref{finalsystem}. If the coupled gravitational and electromagnetic perturbations of Kerr-Newman spacetime would have a structure similar to the one here presented is an open question to be addressed.
\end{remark}

\appendix
\section{System of equations without polarization}
In this appendix, we will not assume polarization of the metric or axial symmetry. This appendix is based on computations done through computer algebra by Steffen Aksteiner.

Consider a null pair $e_3, e_4$  on $(\M, \g)$   and, at every point $p\in \M$
 the horizontal  space $S=\{ e_3, e_4\}^\perp$. Let $\ga $ the metric induced on   $S$.  By definition,  for all
     $X, Y\in T_S  \M$, i.e. vectors in $\M$ tangent to $S$, $\ga(X, Y)= \g(X, Y)$. 
 For any $Y\in T(\M)$ we define its horizontal projection by
 \bea
  Y^\perp = Y+\frac 12 \g( Y, e_3) e_4+\frac 12 \g(Y, e_3) e_4
  \eea
  \begin{definition}
  \label{definition:Shorizonthal}
A  $k$-covariant tensor-field $U$ is said to be  $S$-horizontal,  $U\in \T^k_S(\M)$,
if  for any $X_1,\ldots X_k$ we have,
\beaa
U(Y_1,\ldots Y_k)=U(Y^\perp _1,\ldots  Y^\perp_k)
\eeaa
\end{definition}
 \begin{definition}
Given  $X\in \T(\M)$ and $Y \in \T_S(\M)$ we define,
\beaa
\Db_X Y&:=& ( \D_X Y)^\perp
\eeaa

\end{definition} 
\begin{remark}
In the particular case when   $S$ is integrable and   both $X, Y\in \T_S\M$  then
$\Db_X Y$ is the standard  induced covariant differentiation on $S$. \end{remark}

 \begin{definition}
 \label{definition:S-covariantderivative}
 Given a general, covariant,  $S$- horizontal tensor-field  $U$
 we define its horizontal covariant derivative according to
 the formula,
 \bea
 \Db_X U(Y_1,\ldots Y_k)=X (U(Y_1,\ldots Y_k))&-&U(\Db_X Y_1,\ldots Y_k)-\ldots- U(Y_1,\ldots \Db_XY_k).
 \eea
 where $X\in \T\M$ and $Y_1,\ldots Y_k\in \T_S\M$.
 \end{definition}
 
 \begin{definition} Given $\Psi$ a 2 S-horizontal tensor, we define the wave operator $\squared_\g$ applied to $\Psi$ by 
 \beaa
 \squared_\g\Psi_{AB}:= \g^{\mu\nu} \Db_\mu\Db_ \nu \Psi_{AB}
 \eeaa
 \end{definition}

Recall the definition of spacetime Ricci coefficients and spacetime null curvature components in \eqref{def1}, \eqref{def2}, \eqref{def3}. In particular recall 
\beaa
\aS_{AB}&=& W_{A4B4},
\eeaa
The tensorial version\footnote{Indeed, $\,^{(1+3)} \hspace{-2.2pt}\ff_{\th\th}=-\,^{(1+3)} \hspace{-2.2pt}\ff_{\vphi\vphi}=\ff$, $\,^{(1+3)} \hspace{-2.2pt}\ff_{\th\vphi}=0$ in the case of polarized metric.}  of the invariant quantity $\ff$ introduced in Section \ref{defofff} is given by 
\beaa
\,^{(1+3)} \hspace{-2.2pt}\ff_{AB}=2\DDs_2 \bF_{AB}+2\rhoF \chih_{AB}
\eeaa
where $\DDs_2 \bF_{AB}=- \nabb_{(A} \bF_{B)}+\g_{AB}\ \div \bF $, as introduced in \cite{stabilitySchwarzschild}.

We define the tensorial versions of the operators $\underline{P}$ and $Q$ introduced in Section \ref{acknSteff} as 
\bea\label{tensorialPs}
\underline{P}(\Psi_{AB})=r\kab^{-1} e_3 \Psi_{AB}+\frac 1 2 r \Psi_{AB}, \\
Q(\Psi_{AB})=r\kab e_4 \Psi_{AB}+\frac 1 2 r \ka\kab \Psi_{AB}
\eea
where $\ka=\trchb$ and $\kab=\trchbS$.

We finally define the main quantities that verify the system of wave equations.
\begin{definition}
The tensorial quantities $\qf_{AB}$ and $\qf^{\F}_{AB}$ are defined by
\beaa
\,^{(1+3)} \hspace{-2.2pt}\qf_{AB}&=& \underline{P}\left(\underline{P}\left(r^2 \kab^2\aS_{AB}\right)\right), \qquad \,^{(1+3)} \hspace{-2.2pt}\qf^{\F}_{AB}=\underline{P}\left( r^2 \kab\,^{(1+3)} \hspace{-2.2pt}\ff_{AB}\right)
\eeaa
where $\kab =\trchbS$.
\end{definition}
We summarize in the following theorem the system of tensorial wave equation that is verified by the two quantities $\qf_{AB}$ and $\qf^\F_{AB}$.

\begin{theorem} Let $(\M, \g)$ be a spacetime solution of the Einstein-Maxwell equation \eqref{Einsteineq}, which is a $O(\ep)$-perturbation of Reissner-Nordstr{\"o}m spacetime. Then the tensorial quantities $\qf_{AB}$ and $\qf^\F_{AB}$ verify the following coupled system of wave equations, for $A,B=1,2$, modulo $O(\ep^2)$,
\bea\label{finalsystemtensorial}
\begin{cases}
\Big(\squared_\g+ \ka\kab-10\rhoF^2\Big)\qf_{AB}= e\Bigg(\frac{2}{r}\lapp_2\qf^{\F}_{AB}-\frac{2}{r^2}Q(\qf^{\F}_{AB})-\frac{2}{r^2}\ka\kab\underline{P}(\qf^{\F}_{AB}) + \frac 1 r \left(5\ka\kab+8\rho+4\rhoF^2\right)\qf^{\F}_{AB}\Bigg)+e(l.o.t.)_1, \\
\Big(\squared_\g+\ka\kab+3\rhoS\Big)\qf^{\F}_{AB}=e \Bigg(-\frac {2}{ r^3} \qf_{AB}\Bigg) +e^2 (l.o.t.)_2
\end{cases}
\eea
where $P$ and $\underline{P}$ are tensorial null derivatives, as defined in \eqref{tensorialPs}, $(\lapp_2\Psi)_{AB}=\ga^{CD}\nabb_{C}\nabb_{D}\Psi_{AB}$ and $(l.o.t.)_1$ and $(l.o.t.)_2$ are lower order terms with respect to $\qf_{AB}$ and $\qf^{\F}_{AB}$.
\end{theorem}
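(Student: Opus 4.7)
The plan is to reproduce, in the general tensorial setting, the chain of computations carried out in Sections 5 and 6 of the paper, checking that polarization and axial symmetry were only used to pass from $S$-horizontal symmetric traceless $2$-tensors to their $(\theta\theta)$-components, and not in the underlying algebraic structure of the null structure equations, the Bianchi identities and the Maxwell equations. Concretely, I would first re-derive tensorial Teukolsky-type equations for $\aS_{AB}$ and for $\,^{(1+3)}\hspace{-2.2pt}\ff_{AB}=2\DDs_2\bF_{AB}+2\rhoF\,\chih_{AB}$, starting from the spacetime Bianchi equation for $\nabb_3\aS_{AB}$ and the spacetime Maxwell system already written down in Section 2. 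Applying $\Db_4+\ldots$ to $\nabb_3\aS$ and commuting with $\Db_3$ using the tensorial versions of the null structure equations, one obtains the analogue of Proposition 5.1 with $\rhoF\ff$ replaced by the contraction $\rhoF \,^{(1+3)}\hspace{-2.2pt}\ff_{AB}$; similarly, applying $(\Db_4+\tfrac32\trch)$ to the Maxwell equation for $\bF$ and $(\DDs_2+\ldots)$ to the Maxwell equation for $\rhoF$ and adding yields the tensorial analogue of Proposition 5.2, with source terms in $\rhoF\Db_3\aS_{AB}$ and $\rhoF\,\aS_{AB}$.

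Next I would prove the tensorial analogues of Lemmas 6.1 and 6.2, i.e. the commutation identities for $\squared_\g\underline{P}$ and $\Db_3 Q$ acting on $S$-horizontal symmetric traceless tensors. The derivations in the paper are algebraic: they only use the relations $e_3(r)=\tfrac{r}{2}\overline{\kab}+O(\ep)$, $e_3(\kab^{-1})$ etc., and the scalar wave formula. Their tensorial analogues follow the same template once one writes $\squared_\g \Psi_{AB}=\g^{\mu\nu}\Db_\mu\Db_\nu\Psi_{AB}$ and uses the horizontal commutators $[\Db_3,\Db_A]$ and $[\Db_4,\Db_A]$ supplied by the spacetime null structure equations; in an $O(\ep)$-perturbation of Reissner--Nordstr\"om the commutator corrections produce only terms which are $O(\ep^2)$ or absorbed into the background coefficients. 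With these two lemmas in hand, Propositions 6.3, 6.5, 6.6 and 6.7 port verbatim: one applies $\underline{P}$ successively to $\Phi_0=r^2\kab^2\aS_{AB}$ and $\Phi_3=r^2\kab\,^{(1+3)}\hspace{-2.2pt}\ff_{AB}$, using $Q$ to repackage the $e_4$-derivatives appearing on the right-hand sides.

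The final step is to select the two Regge--Wheeler-type equations, for $\qf_{AB}=\Phi_2$ and $\qf^{\F}_{AB}=\Phi_4$, and to substitute the tensorial identity
\[
\frac{1}{r^2}Q(\underline{P}(\qf^{\F}_{AB}))=-\squared_\g\qf^{\F}_{AB}+\frac{1}{r}(\ka\kab-2\rho)\underline{P}(\qf^{\F}_{AB})+(\lapp_2\qf^{\F})_{AB}+\text{(potential)}\cdot\qf^{\F}_{AB}
\]
which is the tensorial counterpart of the identity used at the end of Section~6.2 to eliminate $Q\underline{P}$ in the equation for $\qf$. Combined with the wave equation already derived for $\qf^{\F}_{AB}$, this yields the two displayed equations of \eqref{finalsystemtensorial}, with the lower-order terms $(l.o.t.)_1$ and $(l.o.t.)_2$ being exactly the tensorial analogues of those in the polarized theorem.

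The hard part will be bookkeeping the horizontal tensorial commutators, especially the terms where a Ricci coefficient such as $\chi_A{}^C$ or $\etab_A$ intertwines with the free indices of $\Psi_{CB}$, since these are trivial for symmetric traceless tensors in the polarized reduction (everything collapses to a single scalar component along $e_\th$). One has to check that, after summing the contributions, the non-scalar tensorial corrections either vanish by symmetry or are absorbed into $O(\ep^2)$; this is precisely the content that, as the paper notes, was verified by Aksteiner using computer algebra. Up to this careful tensorial bookkeeping, the structure of the coupling — in particular the appearance of $e=\overline{\rhoF}\,r^2$ in the cross terms and the cancellation of all linear terms of $O(\ep)$ other than the displayed ones — is forced by the same signature argument used in the polarized case, which survives unchanged in the tensorial setting.
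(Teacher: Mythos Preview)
Your proposal is correct and aligns with the paper's approach. Note, however, that the paper does not actually carry out a proof of this theorem: the Appendix explicitly states that the tensorial system is ``based on computations done through computer algebra by Steffen Aksteiner,'' and simply records the result without giving the derivation. Your plan --- to rerun the computations of Sections~5 and~6 (Propositions~\ref{squarea}, \ref{squareff}, Lemmas~\ref{squareP}, \ref{e3underlineP}, and Propositions~\ref{wavePhi0}--\ref{regge2}) in the $S$-horizontal tensorial formalism, checking that the polarized reduction was used only to pass from symmetric traceless $2$-tensors to their $\th\th$-component --- is exactly the approach the paper implicitly endorses, and you correctly identify the one nontrivial point (the bookkeeping of horizontal commutators acting on the free indices) as the content verified by computer algebra. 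There is therefore nothing to compare beyond noting that you propose to do by hand what the paper defers to machine computation.
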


The system has the same structure as in the case of polarized metrics. Since no symmetries are assumed in this case, the analysis of the system \eqref{finalsystemtensorial} can be used to derive linear stability of Reissner-Nordstr{\"o}m spacetime, as a generalization to \cite{DHR}.

\begin{flushleft}
\small{DEPARTMENT OF MATHEMATICS, COLUMBIA UNIVERSITY} \\
\textit{E-mail address}: \href{mailto:egiorgi@math.columbia.edu}{egiorgi@math.columbia.edu}
\end{flushleft}


\begin{thebibliography}{7}

\bibitem{Steffen}
Aksteiner S.,
\textit{Personal communication}

\bibitem{Zipser}
Bieri L., Zipser N. 
\textit{Extensions of the stability theorem of the Minkowski space in General Relativity}. 
Studies in Advanced Mathematics AMS/IP, Volume 45 (2000).

\bibitem{Chandra2}
Chandrasekhar S., 
\textit{On the equations governing the perturbations of the Reissner-Nordstr{\"o}m black hole}. 
Proc. R. Soc. Lond. A. 365, 453-465 (1979)


\bibitem{Chandra}
Chandrasekhar S., 
\textit{The mathematical theory of black holes}. 
Oxford University Press (1983)

\bibitem{Ch-Kl} 
Christodoulou D., Klainerman S.,
\textit{The global nonlinear stability of the Minkowski space}. 
Princeton Math. Series 41, Princeton University Press (1993)


\bibitem{TeukolskyDHR}
Dafermos M., Holzegel G., Rodnianski I. 
\textit{Boundedness and decay for the Teukolsky equation on Kerr spacetimes I: the case $|a|\ll M$}.
arXiv preprint arXiv:1711.07944 (2017)

\bibitem{DHR}
Dafermos M., Holzegel G., Rodnianski I.
\textit{The linear stability of the Schwarzschild solution to gravitational perturbations}.
arXiv preprint arXiv:1601.06467 (2016)


\bibitem{MuTao}
Hung P.K., Keller J., Wang M.T.
\textit{Linear stability of Schwarzschild spacetime: the Cauchy problem of metric coefficients}
arXiv preprint arXiv:1702.02843 (2017)


\bibitem{stabilitySchwarzschild}
Klainerman S., Szeftel J.
\textit{Global Non-Linear Stability of Schwarzschild Spacetime under Polarized Perturbations}.
arXiv preprint arXiv:1711.07597 (2017)

\bibitem{Siyuan}
Ma S.
\textit{Uniform energy bound and Morawetz estimate for extreme component of spin fields in the exterior of a slowly rotating black hole II: Linearized gravity}.
arXiv preprint arXiv:1708.07385 (2017)

\bibitem{Moncrief}
Moncrief V.
\textit{Stability of Reissner-Nordstro{\"o}m black holes}.
Physical Review D, Volume 10, Number 4 (1974)


\bibitem{Federico}
Pasqualotto F.
\textit{The spin $\pm1$ Teukolsky equations and the Maxwell system on Schwarzschild}.
arXiv preprint arXiv:1612.07244 (2016)



\bibitem{Teukolsky}
Teukolsky S. 
\textit{Rotating black holes: Separable wave equations for gravitational and electromagnetic perturbations}. 
Physical Review Letters 29 (1972), no.16, 1114.



\end{thebibliography}
\end{document}